\newcommand{\rnet}{\{$\mathscr{S,C,R}\}\,$}
\newcommand{\RS}{\ensuremath{\mathbb{R}^{\mathscr{S}}}}
\newcommand{\PS}{\ensuremath{\mathbb{R}_+^{\mathscr{S}}}}
\newcommand{\PR}{\ensuremath{\mathbb{R}_+^{\mathscr{R}}}}
\newcommand{\RC}{\ensuremath{\mathbb{R}^{\mathscr{C}}}}
\newcommand{\RR}{\ensuremath{\mathbb{R}^{\mathscr{R}}}}
\newcommand{\PbarS}{\ensuremath{\overline{\mathbb{R}}_+^{\mathscr{S}}}}
\newcommand{\kinsys}{\{$\mathscr{S,C,R,K}\}$}
\newcommand{\mas}{\{$\mathscr{S,C,R},k\}$}
\newcommand{\scrS}{\ensuremath{\mathscr{S}}}
\newcommand{\scrC}{\ensuremath{\mathscr{C}}}
\newcommand{\scrR}{\ensuremath{\mathscr{R}}}
\newcommand{\scrK}{\ensuremath{\mathscr{K}}}
\newcommand{\scrM}{\ensuremath{\mathscr{M}}}
\newcommand{\scrI}{\ensuremath{\mathscr{I}}}
\newcommand{\rxn}{\ensuremath{y \to y'}}
\newcommand{\supp}{\ensuremath{\mathrm{supp} \,}}
\newcommand{\sgn}{\ensuremath{\mathrm{sgn} \,}}
\newcommand{\Span}{\ensuremath{\mathrm{span} \,}}
\theoremstyle{plain}
\newtheorem{theorem}{Theorem}[section]
\newtheorem{lemma}[theorem]{Lemma}
\newtheorem{proposition}[theorem]{Proposition}
\newtheorem{corollary}[theorem]{Corollary}
\theoremstyle{definition}
\newtheorem{definition}[theorem]{Definition}
\theoremstyle{remark}
\newtheorem{rem}[theorem]{Remark}
\begin{document}

\author{Guy Shinar\thanks{Department of Molecular Cell Biology, Weizmann Institute of Science, Rehovot 76100, Israel. E-mail: shinarg@gmail.com. Present address: 12 Metzada St., Ramat Gan 52235, Israel. GS was supported by an advanced ERC grant. }
\and Martin Feinberg\thanks{The William G. Lowrie Department of Chemical \& Biomolecular Engineering and Department of Mathematics, Ohio State University, 140 W. 19th Avenue, Columbus, OH, USA 43210.  E-mail: feinberg.14@osu.edu. MF was supported by NSF grant EF-1038394 and NIH grant 1R01GM086881-01.} \thanks{Corresponding author. Phone: 614-688-4883.}}

\title{\emph{Concordant Chemical Reaction Networks}}

\date{\emph{\today}}
\maketitle
\begin{abstract}

	We describe a large class of chemical reaction networks, those endowed with a subtle structural property called \emph{concordance}. We show that the class of concordant networks coincides \emph{precisely} with the class of networks which, when taken with any weakly monotonic kinetics, invariably give rise to kinetic systems that are \emph{injective} --- a quality that, among other things, precludes the possibility of switch-like transitions between distinct positive steady states. We also provide persistence characteristics of concordant networks, instability implications of discordance, and consequences of stronger variants of concordance.  Some of our results are in the spirit of recent ones by Banaji and Craciun, but here we do not require that every species suffer a degradation reaction. This is especially important in studying biochemical networks, for which it is rare to have all species degrade.

\end{abstract}
\section{Introduction}
\label{sec:Intro}
 Although chemical reaction networks can be highly intricate and their induced differential equations can be extraordinarily complex, it is typically the case that dynamical behavior resulting from them is tame.

	Surprisingly often, for example, there is \emph{precisely one} stationary state, even when kinetic parameters range over all feasible values.  This is in contrast to the behavior of complex systems of differential equations in general, for which multiple stationary states are common and, in fact, are to be  expected. Viewed against a background of differential equations generally, then, those that derive from chemical reaction networks appear to admit bistable behavior far less often than might normally be supposed. This dullness of dynamics seems extant across very large classes of reaction networks and persists robustly, even against large changes in parameter values within kinetic models for the rates of individual chemical reactions.

	The implications for the design of biological circuitry are especially compelling. Evolution has certainly fashioned special reaction network modules with dynamical behavior sufficiently rich as to underlie, for example, the bistable switching devices or the oscillatory timekeeping mechanisms necessary for the machinery of life. But these special actors probably require a fairly stable and reliable, yet dynamic, chemical background against which they can play their roles. In this sense, biology-in-the-large would seem to require that only exceptional reaction networks have the capacity to exhibit very rich dynamics, while most others execute their functions in steadier, less surprising ways.

	To understand, in a general way, the design principles by which biology operates, it therefore becomes important to understand the perhaps subtle distinctions between reaction network architectures that, on one hand, might underlie specialized dynamically-rich devices and, on the other hand, architectures that, by their very nature, enforce duller, more restricted behavior despite what might be great intricacy in the interplay of many species, even independently of values that kinetic parameters might take.

	This article is about a large class of chemical reaction networks, called \emph{concordant networks}, for which a certain dullness of behavior is enforced for quite general kinetics. (Concordance is a property of the network itself, divorced from any assignment of a kinetics.)  Whether a particular network is concordant can be readily checked, most easily with help of software of the kind made available in \cite{Ji_toolboxWinV21}, and, as we will show in a separate article, it can sometimes be affirmed by inspection of what has come to be known as the network's  \emph{Species-Reaction Graph} \cite{craciun_multiple_2006-1,banaji_graph-theoretic_2010,craciun_understanding_2006-1}. (See Remark \ref{rem:SRGraphConditions}.) 

	Taken with what we call a \emph{weakly monotonic} kinetics --- essentially a kinetics in which an increase in the rate of a particular reaction requires that there be an increase in the concentration of at least one of its reactant species --- a concordant network has the following properties (among others): 

\medskip

(i) Not only can the resulting kinetic system not admit two distinct stoichiometrically compatible positive\footnote{A positive composition is one in which all species concentrations are strictly positive.} equilibria, it cannot even admit two distinct stoichiometrically compatible equilibria, at least one of which is positive. This is to say that \emph{if} the resulting kinetic system admits a positive equilibrium, it cannot admit \emph{any} second equilibrium, \emph{positive or otherwise}, that is stoichiometrically compatible with the first.

\medskip 
(ii) If the network is weakly reversible and conservative, and if the kinetics is continuous, then among all compositions that are stoichiometrically compatible with some positive composition --- including ``boundary compositions" (i.e., those for which at least one species concentration is zero) --- there \emph{will} indeed be \emph{precisely one} equilibrium, and it is positive.  In fact, the boundary of the set of all such compositions is repulsive against approach from a positive composition in the following sense: At every boundary composition, the production rate of each species with concentration zero is non-negative, and there is at least one such species whose production rate is strictly positive. Moreover, in at least some instances --- when the kinetics is smooth and the network's ``fully open extension" is also concordant (Section \ref{sec:FullyOpenSection}) --- we can assert that the unique equilibrium has a degree of stability: every real eigenvalue associated with it is negative. 

\bigskip

	In describing these features of concordant networks, we have only told part of the story that will unfold later on, but it is perhaps already clear that, for concordant networks, the underlying network structure enforces a very circumscribed kind of behavior. No matter how intricate a concordant network might be --- and concordant networks can be highly intricate --- it cannot, for example, provide the basis for a bistable switch involving two equilibria, at least one positive, so long as the kinetics is weakly monotonic.

	This is not to say that concordance in a network precludes \emph{all} kinds of rich behavior over \emph{all possible} choices of weakly monotonic behavior. It does not. (See Remarks \ref{rem:NotAllRichBehaviorExcluded} and \ref{rem:Bifurcations}.) What is surprising, however, is that concordance --- or the lack of it --- tells us so much. 

	The remainder of this article is organized as follows: In Section \ref{sec:Background} we discuss some earlier, related results and, in particular, connections to the striking work of Banaji and Craciun \cite{ banaji_graph-theoretic_2009, banaji_graph-theoretic_2010}. In Section \ref{sec:CRNTPreliminaries} we provide the chemical reaction network theory ideas necessary for the rest of the article. 

	In Section \ref{sec:ConcordantNetworks} we provide the definitions of concordance, weakly monotonic kinetics, and injectivity of the species formation rate function. In the same section we show in a very simple way that  \emph{for a concordant reaction network, taken with weakly monotonic kinetics, the species formation rate function is injective. A direct consequence of this is the impossibility of two distinct stoichiometrically compatible equilibria, at least one of which is positive.} In fact, we show that, for weakly monotonic kinetics, injectivity is, in a certain sense, 
\emph{synonymous} with concordance: \emph{A reaction network has injectivity in all weakly monotonic kinetic systems derived from it if and only if the network is concordant.} With the results of Section \ref{sec:ConcordantNetworks} behind us, we return briefly in Section \ref{sec:Connections}  to a more fully articulated discussion of connections to other work.

	In Section \ref{sec:Persistence} we discuss persistence properties of kinetic systems derived from concordant networks, in particular boundary behavior of the species formation rate function. Most of the results obtained there rely only on very basic, natural features of kinetic rate functions and do not invoke further requirements such as weak monotonicity.  

	Section \ref{sec:FullyOpenSection} is about connections between concordance of a network and concordance of a network's ``fully open extension" --- roughly, the network obtained by requiring that every species $s$ of the original network suffers a degradation reaction $s \to 0$. It is in this section that we begin to make statements about eigenvalues associated with equilibria. In particular, we show how network concordance enforces a certain degree of stability of positive equilibria for \emph{all} (differentiably) monotonic kinetics. By way of contrast, in Section \ref{sec:Discordance} we discuss  consequences of network \emph{discordance} beyond those stated in Section \ref{sec:ConcordantNetworks}, in particular the guarantee of \emph{unstable} positive equilibria for at least certain instances of differentiably monotonic kinetics.  Among other things we show that for every weakly reversible discordant network there is a differentiably monotonic kinetics such that the resulting kinetic system admits a positive unstable equilibrium.

	In Section \ref{sec:generalizations} we discuss some variations of the definition of concordance and their consequences. There we consider very broad kinetics, in which the rate of a particular reaction might even be influenced by the concentration of a species that, for the reaction, is neither a reactant nor a product. As an important special case, we introduce the notion of \emph{strong concordance} and show that, when the kinetics is similar to the kind admitted by Banaji and Craciun \cite{banaji_graph-theoretic_2010},  then strong concordance suffices for ensuring injectivity of the species-formation rate function (and the absence of multiple stoichiometrically compatible equilibria, at least one of which is positive). This is also discussed in a preliminary way in Section \ref{sec:Background}, which we turn to next.

\section{Some background and a look ahead}
\label{sec:Background}

	Chemical reaction network theory has already been able to delineate large classes of networks for which dull behavior is ensured, at least with respect to certain dynamical features (e.g., the absence of multiple positive stoichiometrically compatible steady states), mostly when the individual reaction rates are presumed to be governed by mass action kinetics. In this section we discuss two quite separate strands of reaction network theory, and then, by way of a preview, we briefly discuss the standing of concordance theory in relation to these.

\subsection{Deficiency-oriented theory}
\label{sec:deficiencyresults}
	The earliest results were centered about the classification of reaction networks by means of a non-negative integer index called the \emph{deficiency}. Thus, there are networks of deficiency zero, of deficiency one, and so on. The deficiency is not a measure of a network's size. In fact, a very large complex network can have a deficiency of zero. Nevertheless, the \emph{Deficiency Zero Theorem} indicates the sense in which all deficiency zero networks taken with mass action kinetics give rise to quite dull dynamics, including the absence of multiple stoichiometrically compatible positive equilibria, the absence of periodic positive composition trajectories, and the absence of an unstable positive equilibrium. The \emph{Deficiency One Theorem} describes a large class of mass action networks for which multiple stoichiometrically compatible positive equilibria are precluded, regardless of rate constant values.  Still other parts of deficiency-oriented theory translate questions about a mass action network's capacity for multiple stoichiometrically-compatible positive equilibria --- an apparently nonlinear problem --- into question about systems of \emph{linear inequalities}. For a survey of some deficiency-oriented reaction network theory see \cite{feinberg_chemical_1987,feinberg_chemical_1988,
feinberg_lectureschemical_1979,
feinberg_existence_1995-1,feinberg_multiple_1995,ellison_catalytic_2000}. The deficiency-oriented theory continues to evolve \cite{Ji_Thesis,Ji_toolboxWinV21}.

\subsection{Theory aimed at fully-open reactors}
\label{sec:fullyopen}
	A rather different strain of reaction network theory began with the Ph.D. work of Paul Schlosser \cite{schlosser_theory_1994,Schlosser_PhDThesis} and the advent of what was called the \emph{Species-Complex-Linkage Graph (SCL Graph)}. When a reaction network's SCL Graph satisfies certain conditions, involving its cycles, then the hypothesis of a certain technical proposition is also satisfied, and this proposition, in turn, ensures that multiple positive steady states are impossible. 

	It is important to understand that this work was tailored very specifically to determining when a reaction network, taken with mass action kinetics, has the capacity to admit multiple positive steady states in the context of what chemical engineers call the (isothermal) \emph{continuous flow stirred tank reactor (CFSTR)}, in which all species are present in an effluent stream. In reaction network theory terms, one studies such reactors by adjoining to the network of ``true" chemical reactions additional ``degradation reactions"  of the form $s\to 0$, one for each species $s$, to account for the efflux of species $s$ in the reactor's effluent \cite{feinberg_chemical_1987}. (There are sometimes additional reactions of the form $0 \to s$ to account for infusion of species $s$ in the feed stream.) Thus, work originating in the Schlosser thesis was aimed specifically at understanding the capacity for multiple positive equilibria for \emph{fully-open reactors}, characterized by reaction networks \emph{in which there is a degradation reaction corresponding to each and every species}.

	This work was later broadened in \cite{craciun_multiple_2005-1,craciun_multiple_2006-1,craciun_understanding_2006-1} by Craciun and Feinberg to give results which are, in many respects, more powerful, using different theoretical underpinnings, but again for mass action kinetics and again for reaction networks \emph{in which there is a degradation reaction corresponding to each species}. If the kinetics is mass action, and, for the network at hand, it is the case that the Jacobian of the species-formation-rate function is nonsingular at every positive composition and for every assignment of rate constants, then it was shown that the species-formation-rate function is injective (relative to positive compositions) for every assignment of rate constants. In this case, multiple positive equilibria become impossible. 

	Two techniques were deployed to determine when a network satisfies ``the Jacobian condition." One was analytical \cite{craciun_multiple_2005-1}, relying on a symbolic determinant expansion of the Jacobian with the aim of establishing that, for the network under study, every term in the expansion is non-negative with at least one positive 
(whereupon the Jacobian must be nonsingular). The other \cite{craciun_multiple_2006-1,craciun_understanding_2006-1}, building on the first, was graphical:  When the network's \emph{Species-Reaction Graph (SR Graph)} satisfies  certain mild conditions, closely resembling the earlier ones for the SCL Graph, then the Jacobian condition is satisfied, and the  species-formation-rate function is injective.\
\medskip

\begin{rem} 
\label{rem:SRGraphConditions}
For the benefit of readers already familiar with results based on the Species-Reaction Graph, we reiterate those conditions here:  \emph{Every even cycle of the network's Species-Reaction Graph is an s-cycle and no two even cycles have a species-to-reaction intersection.} In a separate article we intend to show that a ``normal" reaction network (Section \ref{sec:FullyOpenSection}) is concordant --- in fact, strongly concordant --- if these same conditions are satisfied by the network's Species-Reaction Graph.
\end{rem}
\medskip
	In remarkable and highly surprising subsequent work, Banaji and Craciun \cite{banaji_graph-theoretic_2009,banaji_graph-theoretic_2010} were able to extend results in \cite{craciun_multiple_2006-1,craciun_understanding_2006-1} to a \emph{far} broader class of kinetics, \emph{but again for networks in which there is a degradation reaction for every species}. The kinetics studied by Banaji and Craciun --- which they call \emph{nonautocatalytic (NAC) kinetics}   --- is required to have the property that, for a particular reaction,  an increase [decrease] in the concentration of one of its reactant [product] species --- keeping all other concentrations fixed --- cannot result in a decrease of the reaction rate.  These conditions are expressed in an essential way in terms of derivatives, so, for them, the kinetics is required also to have a degree of smoothness.

	The Banaji-Craciun work built upon earlier work by Banaji and co-workers \cite{banaji_PMatrix_2007}, in which it was shown that if the so-called stoichiometric matrix of a reaction network has the ``SSD property"  --- i.e, the property that \emph{all} of its square sub-matrices are either singular or else ``sign-nonsingular" --- and \emph{if there is a degradation reaction for every species},  then, when the kinetics is NAC, the species formation rate function is injective (and multiple equilibria are excluded).

	Banaji and Craciun \cite{banaji_graph-theoretic_2010} then established, surprisingly, that the Species-Reaction Graph properties shown to ensure injectivity for certain networks in the mass-action case also suffice to ensure that the stoichiometric matrix has the ``SSD property," which implies in turn that, \emph{for fully open reactors}, one has injectivity of the species-formation rate function even for the very broad class of NAC kinetics.
	
	The requirement that there be a degradation reaction for \emph{every} species figures heavily in the Banaji-Craciun result and in the earlier work by Banaji and co-workers: It plays an essential role in connecting the ``SSD property" of the stoichiometric matrix, via the Gale-Nikaido theorem \cite{gale_jacobian_1965},  to injectivity of the species-formation rate function. In contrast to the deficiency-oriented results described in the preceding subsection, this strong reliance on the fully-open setting is a feature common to all of the research described in this subsection, beginning with the Schlosser work, reflecting its origins in the study of continuous-flow stirred-tank reactors. 

\subsection{A look ahead: stoichiometry, concordance, and strong concordance} 

	In part, fully-open reactors become easier to study because, in that setting, stoichiometric constraints become far less of a concern. Reaction networks typically studied in biology are \emph{not} fully open --- they rarely contain a ``degradation reaction" for \emph{every} species --- and, for such networks, stoichiometric balances come into play more forcefully.  Thus, for example, when asking about the possibility of multiple equilibria, it is usually sensible to ask only about the possibility of two distinct equilibria that are \emph{stoichiometrically compatible}; see Section \ref{sec:CRNTPreliminaries}. For this reason, the fully-open results have limited range. (Nevertheless, they have been used to infer the absence of multiple stoichiometrically compatible equilibria in more general settings \cite{craciun_multiple_2006-2,craciun_multiple_2010}, but the most incisive of the resulting theorems \cite{craciun_multiple_2010} have heretofore been restricted to mass-action kinetics.) 	

	Unlike the more narrowly focused fully-open theory described in \S \ref{sec:fullyopen}, the deficiency-oriented theory of \S \ref{sec:deficiencyresults} was fashioned to deal with stoichiometric-compatibility issues from the outset. For example, when the kinetics is mass-action, the Deficiency Zero and Deficiency One Theorems assert the impossibility of two distinct \emph{stoichiometrically compatible} positive equilibria, regardless of rate constant values, for \emph{any} reaction network that satisfies their hypotheses. No degradation reactions are required to be present.

	As we shall see in Section \ref{sec:ConcordantNetworks}, the definition of network \emph{concordance} has a built-in pre-disposition to stoichiometric compatibility; \emph{it is there from the very beginning}. Only in Section \ref{sec:FullyOpenSection}  will  we be concerned with the concordance of a reaction network's ``fully open extension" --- that is, the network formed by adjoining to the original network a full supply of degradation reactions --- but this is because we can sometimes infer from concordance of the fully open extension properties of the original network. (This is similar in spirit to  \cite{craciun_multiple_2010}, but here there is no requirement that the kinetics be mass-action.) 

	In a sense, then, the theory of concordant networks shares with the deficiency-oriented theory (\S \ref{sec:deficiencyresults}) its indifference to the ``fully open requirement," but it also shares with the fully open theory (\S \ref{sec:fullyopen}) an affinity for special inferences that can be made when there is a full supply of degradation reactions.

	We close this section with an explanation of our choice to defer, completely, the notion of \emph{strong concordance} to the article's end:  To the extent that they can be compared, the very broad \emph{NAC} kinetics admitted by Banaji and Craciun is significantly broader than the (also broad) \emph{weakly monotonic} kinetics we study in most of this article. (Both subsume mass action kinetics.) We show among other things  in Section \ref{sec:ConcordantNetworks} that, when the kinetics is  weakly monotonic, concordance of a reaction network suffices for the impossibility of multiple \emph{stoichiometrically compatible} equilibria.  However, in Section \ref{sec:generalizations} we show that the same can  also be asserted for what we call \emph{two-way weakly monotonic kinetics}, which is very similar NAC kinetics, provided that the network is \emph{strongly} concordant.  (In both cases, there is no requirement that the reactor be fully open.)

	We chose to center this article around the weaker and broader notion of concordance because it already gives rise to a variety of interesting and important consequences, some having nothing at all to do with either choice of kinetics. By deferring to the article's end the discussion of strong concordance and its implications for kinetics of the kind studied by Banaji and Craciun, we tried to keep the already powerful consequences of simple concordance clearly at center stage.

\
\section{Reaction network theory preliminaries}
\label{sec:CRNTPreliminaries}

Here we provide the reaction network theory concepts required for stating and proving the results in this article. Much of this section is also contained in a more extended form, with more discussion and examples, in \cite{feinberg_lectureschemical_1979}. We begin with notational conventions.

\subsection{Notation}
In the study of networks, it is sometimes useful to consider vectors whose components are real numbers associated with the nodes or with the edges of a directed graph. Thus, one typically has a set $I$ (of edges, for example) and the components  are numbers of the form $\{x_i : i \in I \}$. To vectorize such an assignment by working in $\mathbb{R}^N$, where $N$ is the number of elements in $I$, requires that one impart an artificial and sometimes awkward cardinal order to $I$'s elements. Instead, it is useful to simply associate $\{x_i : i \in I \}$ in the obvious way with a vector $x$ in the vector space of real-valued functions having domain $I$. 

When $I$ is a finite set, we denote the vector space of real-valued functions with domain $I$ by $\mathbb{R}^I$. The subset of $\mathbb{R}^I$ consisting of those functions that take only positive (nonnegative) values is denoted $\mathbb{R}_+^I\: (\overline{\mathbb{R}}_+^I)$. For $x \in \mathbb{R}^I$ and $i \in I$,  the symbol $x_i$ denotes the value assigned to $i$ by $x$. 

For each $x \in \mathbb{R}^I$, the symbol $\exp(x)$ denotes the element of $\mathbb{R}_+^I$ defined by \[(\exp(x))_i := \exp x_i, \hspace{0.2 cm} \forall i \in I. \] For each $x \in \mathbb{R}^I$ and for each $z \in \overline{\mathbb{R}}_+^I$, the symbol $x^z$ denotes the real number defined by \[x^z := \prod_{i \in I} (x_{i})^{z_i}, \] with the understanding that $0^0 = 1.$ For each $x, x' \in \mathbb{R}^I$, the symbol $x \circ x'$ denotes the element of $\mathbb{R}^I$ defined by \[ (x \circ x')_i := x_i x'_i, \hspace{0.2 cm} \forall i \in I.\] For each $z  \in \mathbb{R}_+^I$, the symbol $\frac{1}{z}$ denotes the element of $\mathbb{R}_+^I$  defined by \[\left(\frac{1}{z}\right)_i := \frac{1}{z_i}, \hspace{0.2 cm} \forall i \in I. \]
\noindent	
For each $i \in I$, we denote by $\omega_i$ the element of $\mathbb{R}^I$ such that $(\omega_i)_j = 1$ whenever $j=i$ and $(\omega_i)_j = 0$ whenever $j \neq i$.

The \emph{standard basis} for $\mathbb{R}^I$ is the set $\left\{\omega_i \in \mathbb{R}^I : i \in I\right\}$. Thus, for each $x \in \mathbb{R}^I$, we have the representation $x = \sum_{i \in I} x_i \omega_i$. The \emph{standard scalar product} in $\mathbb{R}^I$ is defined as follows: If $x$ and $x'$ are  elements  of $\mathbb{R}^I$, then \[ x \cdot x' = \sum_{i \in I} x_i x'_i .\] Note that the standard basis of $\mathbb{R}^I$ is orthonormal with respect to the standard scalar product. It will be understood that $\mathbb{R}^I$ carries the standard scalar product and the norm derived from the standard scalar product. It will also be understood that $\mathbb{R}^I$ carries the corresponding norm topology. 

Whenever $U$ is a linear subspace of $\mathbb{R}^I$, we denote by $U^\perp$ the orthogonal complement of $U$ in $\mathbb{R}^I$ with respect to the standard scalar product.

By the \emph{support} of $x \in \mathbb{R}^I$, denoted $\supp x$, we mean the set of indices $i \in I$ for which $x_i$ is different from zero. When $\xi$ is a real number, the symbol $\sgn(\xi)$ denotes the sign of $\xi$. When $x$ is an element of $\mathbb{R}^I$, $\sgn(x)$ denotes the function with domain $I$ defined by \[(\sgn(x))_i := \sgn x_i, \hspace{0.2 cm} \forall i \in I. \]

\subsection{Some definitions}

	We will use the reaction network displayed in \eqref{EQ:ExampleNetwork} to motivate some of our definitions. Following Horn and Jackson \cite{horn_general_1972}, we call the objects at the heads and tails of the reaction arrows --- $2A, B, C, C+D$ and $E$ in \eqref{EQ:ExampleNetwork} --- the \emph{complexes} of the network. In this way, we can view the network as a directed graph, with complexes playing the role of the vertices and reaction arrows playing the role of the edges.
\begin{align}
\label{EQ:ExampleNetwork}
\nonumber &\hspace{-2mm}2A \hspace{2 mm} \rightleftarrows \hspace{1.5 mm} B \\
\nonumber & \hspace{.5 mm} \nwarrow \hspace {3 mm}  \swarrow \\
& \hspace{5 mm} C \\[0.75 em]
\nonumber & C + D \rightleftarrows E
\end{align}

\begin{rem}
We shall frequently be working in \RS, where \scrS \; is the set of species in a network. In this special case, it is advantageous to replace symbols for the standard basis of \RS  \  with the names of the species themselves. Thus, if the species in the network are given by $\scrS = \{A,B,C,D,E\}$ then a vector such as $\omega_C +\omega_D \in \RS$ can instead be written as $C+D$, and $2\omega_A$ can be written as $2A$. In fact, \RS\, can then be identified with the vector space of formal linear combinations of the species.  In this way, the complexes of a reaction network with species set \scrS \;can be identified with vectors in \RS.

\end{rem}

\begin{definition}
\label{DEF:ChemicalReactionNetwork}
A \emph{chemical reaction network} consists of three finite sets:
\begin{enumerate}
	\item a set $\mathscr{S}$ of distinct \emph{species} of the network;
	\item a set $\mathscr{C} \subset \bar{\mathbb{R}}_+^{\mathscr{S}}$ of distinct \emph{complexes} of the network;
	\item a set $\mathscr{R} \subset \mathscr{C} \times \mathscr{C}$ of distinct \emph{reactions}, with the following properties:
		\begin{enumerate}
			\item $(y,y) \notin \mathscr{R}$ for any $y\in \mathscr{C}$;
			\item for each $y \in \mathscr{C}$ there exists $y' \in \mathscr{C}$ such that $(y,y') \in \mathscr{R}$ or such that $(y',y) \in \mathscr{R}$.
		\end{enumerate}
	\end{enumerate}
\end{definition}
\smallskip

If $(y,y')$ is a member of the reaction set $\mathscr{R}$, we say that $y$ \emph{reacts to} $y'$, and, following the usual notation in chemistry, we write $y \rightarrow y'$ to indicate the reaction whereby complex $y$ reacts to complex $y'$. We call the complex situated at the tail of a reaction arrow the \emph{reactant complex} of the corresponding reaction, and the complex situated at the head of a reaction arrow the reaction's a \emph{product complex}.
\smallskip

	The set of species of the network depicted in \eqref{EQ:ExampleNetwork} is $\scrS = \{A, B, C, D, E\}$. The set of complexes of the network is $\scrC = \{2A, B, C, C + D, E\}$. The set of reactions of the network is $\scrR = \{2A \to B, B \to C, C \to 2A, C + D \to E, E \to C + D\}$.
\smallskip

	The diagram in \eqref{EQ:ExampleNetwork} is an example of a \emph{standard reaction diagram}: each complex in the network is displayed precisely once, and each reaction in the network is indicated by an arrow in the obvious way. In very few places in this article we will refer to the \emph{linkage classes} of a reaction network, which for our purposes can be identified with the connected components of the network's standard reaction diagram. Thus, in network \eqref{EQ:ExampleNetwork} there are two linkage classes, containing, respectively, the complexes $\{2A, B, C\}$ and $\{C + D, E\}$. For a formal definition of a linkage class see \cite{feinberg_lectureschemical_1979}.

	More important is the idea of \emph{weak reversibility}. The following definition provides some preparation.    

\begin{definition}
\label{DEF:UltimatelyReactsTo}
A complex $y \in \mathscr{C}$ \emph{ultimately reacts} to a complex $y' \in \mathscr{C}$ if any of the following conditions is satisfied:
\begin{enumerate}
\item{$y \rightarrow y' \in \mathscr{R}$};
\item{There is a sequence of complexes $y(1), y(2), \ldots , y(k)$ such that \[ y \rightarrow y(1) \rightarrow y(2) \rightarrow \ldots \rightarrow y(k) \rightarrow y'.\]}
\end{enumerate}
\end{definition}
\smallskip

In our example, the complex $2A$ ultimately reacts to the complex $C$, but the complex $C$ does not ultimately react to the complex $C + D$. 

\begin{definition}
\label{EQ:WeaklyReversible}
A reaction network \rnet is called \emph{weakly reversible} if for each $y, y' \in \mathscr{C}$, $y' \mbox{ ultimately reacts to } y$ whenever $y \mbox{ ultimately reacts to } y'$. A network is called \emph{reversible} if $y' \rightarrow y \in \mathscr{R}$ whenever $y \rightarrow y' \in \mathscr{R}$.
\end{definition}

Network \eqref{EQ:ExampleNetwork} is an example of a weakly reversible reaction network that is not reversible. Note that any  reversible  network is also weakly reversible. Note also that whenever a weakly reversible reaction network is displayed as a standard reaction diagram, every arrow in the diagram belongs to  a directed cycle of arrows.
\smallskip

\begin{definition}
\label{DEF:Reaction vectors}
The \emph{reaction vectors} for a reaction network \rnet\  are the members of the set \[ \left\{y' - y \in \RS : y \rightarrow y' \in \mathscr{R} \right\}. \]
\end{definition}
\smallskip

In our example the reaction vector corresponding to the reaction $2A \to B$ is $B - 2A$, the reaction vector corresponding to the reaction $C + D \to E$ is $E - C - D$, and so on.

\begin{definition}
\label{DEF:StoichiometricSubspace}
The \emph{stoichiometric subspace} $S$ of a reaction network \rnet \  is the linear subspace of $\RS$ defined by 

\begin{equation}
S := \Span \left\{ y' - y \in \RS : y \rightarrow y' \in \mathscr{R} \right\}.
\label{EQ:StoichiometricSubspace}
\end{equation}
\end{definition}
\smallskip

We note that, for a reaction network \rnet, the stoichiometric subspace $S$ will often be a proper subspace of $\RS$. In other words, it will often be the case that the dimension of $S$ will be smaller than the number of species in the network. For example, in network \eqref{EQ:ExampleNetwork} we have $\dim S = 3$ while $\dim \RS = \#(\scrS) = 5$. In fact, the stoichiometric subspace will be a proper subspace of \RS \, whenever the network is \emph{conservative}:

\begin{definition}
\label{DEF:Conservative}
A reaction network \rnet \ is \emph{conservative} whenever the orthogonal complement $S^\perp$ of the stoichiometric subspace $S$ contains a strictly positive member of $\RS$: \[S^\perp \cap \PS \neq \emptyset.\]
\end{definition}

Network \eqref{EQ:ExampleNetwork}, for example, is conservative: it is easy to verify that the strictly positive vector $(A + 2B + 2C + D + 3E) \in \PS$ is orthogonal to each of the reaction vectors of \eqref{EQ:ExampleNetwork}.
\bigskip

If \rnet\  is a reaction network, then a mixture state will generally be represented by a \emph{composition} $c \in \PbarS$, where, for each $s \in \mathscr{S}$, we understand $c_s$ to be the molar concentration of $s$. By a \emph{positive composition} we mean a strictly positive composition --- that is, a composition in $\PS$.

\begin{definition}
\label{DEF:Kinetics}
A \emph{kinetics} $\mathscr{K}$ for a reaction network \rnet\ is an assignment to each reaction $y \to y' \in \scrR$ of a \emph{rate function} $\mathscr{K}_{y \to y'}:\PbarS \to \overline{\mathbb{R}}_+$ such that \[ \mathscr{K}_{y \to y'}(c) > 0 \mbox{ if and only if } \supp y \subset \supp c.\]
\end{definition}
\smallskip

\begin{definition}
\label{DEF:KineticSystem}
A \emph{kinetic system} \kinsys\  is a reaction network \rnet\  taken with a kinetics $\mathscr{K}$ for the network.
\end{definition}
\smallskip

\begin{rem} In chemical reaction network theory the rate functions are generally presumed at the outset to be continuous, but we have not insisted on that here. This is because some of the main results in this article (in particular those having to do with injectivity) do not require it. For certain other formally stated results, we will sometimes say specifically that we are assuming continuity or differentiability. On the other hand, when we discuss in an informal way the differential equations associated with a kinetic system, it will be understood implicitly that there is smoothness sufficient to ensure the that the commonly presumed properties of differential equations are present. 
\end{rem}
\medskip
In this article we shall often refer to \emph{mass action kinetics} and to \emph{mass action kinetic systems}. Both are formally defined below:

\begin{definition}
\label{DEF:MassActionKinetics}
A kinetics $\scrK$ for a reaction network \rnet is \emph{mass action} if, for each reaction $y \to y' \in \scrR$, there is a positive number $k_{y \to y'}$ such that the rate function $\scrK_{\rxn}$ takes the form
\begin{equation}
\label{EQ:MassActionKinetics}
\scrK_{y \to y'}(c) = k_{y \to y'} c^y.
\end{equation}

\noindent The positive number $k_{y \to y'}$ is the \emph{rate constant} for reaction $y \to y'$. 
\end{definition}
\smallskip

\begin{definition}
\label{DEF:MassActionKineticSystem}
A \emph{mass action kinetic system} is a reaction network taken together with a mass action kinetics for the network.
\end{definition}
\smallskip

\begin{definition}
\label{DEF:SpeciesFormationRateFunction}
The \emph{species formation rate function} for a kinetic system \kinsys\  with stoichiometric subspace $S$ is the map $f:\PbarS \to S$ defined by

\begin{equation}
\label{EQ:SpeciesFormationRateFunction}
f(c) = \sum_{\rxn \in \scrR} \scrK_{y \to y'} (c) (y' - y). 
\end{equation}
\end{definition}
\smallskip
For a kinetic system \kinsys\  whose underlying network is our example network \eqref{EQ:ExampleNetwork}, the species formation rate function has the following species-wise form:

\begin{align}
\label{EQ:CoordinateForm}
&f_A(c) = -2\scrK_{2A \to B}(c) +2\scrK_{B \to 2A}(c) + 2\scrK_{C \to 2A}(c), \\
\nonumber &f_B(c) = -\scrK_{B \to C}(c) + \scrK_{2A \to B}(c) - \scrK_{B \to 2A}(c) , \\
\nonumber &f_C(c) = -\scrK_{C \to 2A}(c) +\scrK_{B \to C}(c) -\scrK_{C + D \to E}(c) + \scrK_{E \to C + D}(c), \\
\nonumber &f_D(c) = -\scrK_{C + D \to E}(c) + \scrK_{E \to C + D}(c), \\
\nonumber &f_E(c) = -\scrK_{E \to C + D}(c) + \scrK_{C + D \to E}(c).
\end{align}

\begin{definition}
\label{DEF:DifferentialEquation}
The \emph{differential equation} for a kinetic system with species formation rate function $f(\cdot)$ is given by

\begin{equation}
\label{EQ:DifferentialEquation}
\dot{c} = f(c).
\end{equation} 
\end{definition}
\smallskip

Let \kinsys\  be a kinetic system. From equations \eqref{EQ:StoichiometricSubspace}, \eqref{EQ:SpeciesFormationRateFunction}, and \eqref{EQ:DifferentialEquation} we observe that the vector $\dot{c}$ will invariably lie in the stoichiometric subspace $S$ of the network \rnet. Thus, the difference of any two compositions $c \in \PbarS$ and $c' \in \PbarS$ that lie  along   the same solution of (\ref{EQ:DifferentialEquation}) will always reside in $S$. This motivates the following definition: 

\begin{definition}
\label{DEF:StoichiometricallyCompatible}
Let \rnet\  be a reaction network with stoichiometric subspace $S$. Two compositions $c$ and $c'$ in $\PbarS$ are called \emph{stoichiometrically compatible} if $c' - c \in S$.
\end{definition}
\smallskip

We note that stoichiometric compatibility is an equivalence relation. As such, it partitions $\PbarS$ into equivalence classes that we call  \emph{stoichiometric compatibility classes}.   Thus, the stoichiometric compatibility class containing an arbitrary composition $c$, denoted $(c + S) \cap \PbarS$, is given by

\begin{equation}
\label{EQ:StoichiometricCompatibilityClass}
(c + S) \cap \PbarS = \left\{ c' \in \PbarS: c' - c \in S \right\}.
\end{equation}
We observe, as the notation suggests, that $(c + S) \cap \PbarS$ is the intersection of $\PbarS$ with the parallel of $S$ containing $c$. 

	A stoichiometric compatibility class will typically contain a wealth of (strictly) positive compositions. We say that a stoichiometric compatibility class is \emph{nontrivial} if it contains a member of \PS. To see that a stoichiometric compatibility class can be trivial, consider the simple reaction network \mbox{$A+B \rightleftarrows C$}, and let $\bar{c}$ be the composition defined by $\bar{c}_A = 1,  \bar{c}_B = 0, \bar{c}_C = 0$. Then the stoichiometric compatibility class containing $\bar{c}$ has $\bar{c}$ as its only member.

\begin{definition}
\label{DEF:Equilibrium}
\noindent An \emph{equilibrium} of a kinetic system \kinsys\  is a composition $c \in \PbarS$ for which $f(c) = 0$. A \emph{positive equilibrium} of a kinetic system \kinsys\  is an equilibrium that lies in $\PS$.
\end{definition}
\smallskip

	In light of Definition \ref{DEF:Kinetics}, a kinetic system can admit a positive equilibrium only if its reaction vectors are positively dependent:

\begin{definition}
\label{DEF:PositivelyDependent} The reaction vectors for a reaction network \rnet\  are \emph{positively dependent} if for each reaction $y \to y' \in \scrR$ there exists a positive number $\kappa_{y \to y'}$ such that
\begin{equation}
\sum_{\rxn \in \scrR}\kappa_{\rxn}(y' - y) = 0.
\end{equation}
\end{definition}

\begin{rem}
For any weakly reversible network, the reaction vectors are positively dependent \cite{feinberg_lectureschemical_1979}.
\end{rem}
\medskip

Occasionally we will want to consider changes in the value of the species formation rate function corresponding to small departures from a positive composition $c^*$ toward nearby compositions that are stoichiometrically compatible with $c^*$. Thus, for a kinetic system \kinsys\ with stoichiometric subspace $S$, with smooth reaction rate functions, and with species formation rate function  $f:\PbarS \to S$, we will want to work with  the derivative  $df(c^*): S \to S$, given by

\begin{equation}
\label{EQ:Derivative}
df(c^*)\sigma = \left. \frac{df(c^* + \theta \sigma)}{d\theta} \right|_{\theta = 0}, \quad  \forall \sigma \in S.
\end{equation}

\medskip \noindent
In this case, we say that $c^* \in \PS$ is a \emph{degenerate equilibrium} if $c^*$ is an equilibrium and if, moreover, $df(c^*)$ is singular.

\section{Concordant networks, weakly monotonic\\ kinetics, and injective kinetic systems}

\label{sec:ConcordantNetworks}

In preparation for the definition of concordance we consider a reaction network \rnet\  with stoichiometric subspace $S \subset \RS$, and we let $L: \RR \to S$ be the linear map defined by
\begin{equation}
\label{eq:LDefinition}
L\alpha = \sum_{\rxn \in \scrR}\alpha_{\rxn}(y' - y).
\end{equation}

\begin{definition} The reaction network \rnet\  is \emph{concordant} if there do not exist an $\alpha \in \ker L$ and a nonzero $\sigma \in S$ having the following properties:\\

\noindent (i) For each $\rxn \in \scrR$\; such that $\alpha_{\rxn} \neq 0$, \supp $y$ contains a species $s$ for which $\mathrm{sgn}\, \sigma_s = \mathrm{sgn}\, \alpha_{\rxn}$.\\

\noindent (ii) For each $\rxn \in \scrR$\; such that $\alpha_{\rxn} = 0$, either $\sigma_s = 0$\, for all $s \in \supp y$ or else \supp $y$ contains species $s\, \textrm{and } s'$ for which  $\mathrm{sgn}\, \sigma_s = -\, \mathrm{sgn}\; \sigma_{s'}$, both not zero.
\medskip\noindent\\
A network that is not concordant is \emph{discordant}.
\label{def:Concordant}
\end{definition}

\begin{rem} Concordance is a network property that is discernible by sign-checking in $\ker L$ and $S$.  A Windows-based computer program that implements such a concordance test is freely available \cite{Ji_toolboxWinV21}; it is fast for networks of moderate size.
\label{rem:ConcordanceComputerProgram}
\end{rem}
\smallskip
\noindent 
\emph{Examples.} While network \eqref{EQ:ExampleNetwork} is a fairly simple ``toy'' that is concordant, the network depicted in \eqref{eq:CalciumNet} serves as a more intricate and more biologically-motivated concordant example. In Reidl et al. \cite{reidl_model_2006},  network \eqref{eq:CalciumNet} served to model calcium dynamics in olfactory cilia.  The concordance of \eqref{eq:CalciumNet} was ascertained by means of the sign-checking procedure implemented in \cite{Ji_toolboxWinV21}.
\begin{eqnarray}
\label{eq:CalciumNet}
&A \rightleftarrows B \nonumber\\
&C + 4D \rightleftarrows E \nonumber\\
&B + E \to F \rightleftarrows A + E\\
&B \to D + B \quad D \to 0 \nonumber 
\end{eqnarray}

A biologically motivated example of a discordant network is shown in  \eqref{eq:EnvZ}. This network underlies a model of the bacterial two-component signaling system EnvZ-OmpR \cite{shinar_input_2007,shinar_structural_2010,shinar_dreaded_2011}, which regulates the number and size of membrane pores in response to changes in osmolarity.
\medskip
\begin{align}
\label{eq:EnvZ}
\nonumber&A \rightleftarrows B \rightarrow C \\
 &C + D \rightleftarrows E \rightarrow A + F \\
\nonumber &B + F \rightleftarrows G \rightarrow B + D
\end{align}

\begin{definition} A kinetic system \kinsys \ is \emph{injective} if, for each pair of distinct stoichiometrically compatible compositions $c^* \in \PbarS$ and $c^{**} \in \PbarS$, at least one of which is positive,

\begin{equation}
\sum_{\rxn \in \scrR}\scrK_{\rxn}(c^{**})(y' - y) \neq \sum_{\rxn \in \scrR}\scrK_{\rxn}(c^*)(y' - y).
\end{equation}
\end{definition}
\smallskip

\begin{rem}
Clearly, an injective kinetic system cannot admit two distinct stoichiometrically compatible equilibria, at least one of which is positive.
\end{rem}

\begin{definition}
\label{def:WeaklyMonotonic}
A kinetics \scrK\   for reaction network \rnet\  is \emph{weakly monotonic} if, for each pair of compositions $c^*$ and $c^{**}$, the following implications hold for each reaction $\rxn \in \scrR$ such that  $\supp y \subset \supp c^*$ and $\supp y \subset \supp c^{**}$:\\

\noindent (i) $\scrK_{\rxn}(c^{**})\; >\; \scrK_{\rxn}(c^{*})\quad \Rightarrow \quad$ there is a species $s \in \supp y$ with $c_s^{**} > c_s^*$.\\

\noindent (ii) $\scrK_{\rxn}(c^{**})\;=\; \scrK_{\rxn}(c^{*})\quad \Rightarrow \quad$ $c_s^{**} = c_s^*$ for all $s \in \supp y$ or else there are species $s, s' \in \supp y$ with $c_s^{**} > c_s^*$ and $c_{s'}^{**} < c_{s'}^*$.\\

\noindent We say that the kinetic system \kinsys\ is weakly monotonic when its kinetics \scrK\, is weakly monotonic.
\end{definition}

\begin{rem}
\label{rem:weakmonotonicboundary}
Note that if a kinetic system is weakly monotonic and if $c^{**}$ is a positive composition, then the implications (i) and (ii) will hold for \emph{all} reactions in the network, even when $c^*$ is not strictly positive.
\end{rem}

\begin{rem}
Clearly, every mass action kinetic system is weakly monotonic. Note that the definition of weak monotonicity makes no assumptions about the continuity or the smoothness of the kinetic rate functions. When, for a weakly monotonic kinetics, a reaction-rate function $\scrK_{\rxn}(\cdot)$ is differentiable, there is no requirement that its derivative with respect to $c_{s},\;s \in \supp y,$ be strictly positive at all positive compositions. 
\end{rem}

\begin{proposition}
\label{prop:ConcordanceInjectivity}
A weakly monotonic kinetic system \kinsys\  is injective whenever its underlying reaction network \rnet is concordant. In particular, if the underlying reaction network is concordant, then the kinetic system cannot admit two distinct stoichiometrically compatible equilibria, at least one of which is positive.
\end{proposition}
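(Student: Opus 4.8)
The plan is to prove the contrapositive: assuming the weakly monotonic kinetic system \kinsys\ fails to be injective, I will manufacture an $\alpha\in\ker L$ and a nonzero $\sigma\in S$ satisfying conditions (i) and (ii) of Definition \ref{def:Concordant}, thereby exhibiting discordance of \rnet\ and contradicting the hypothesis. Failure of injectivity supplies distinct, stoichiometrically compatible $c^*,c^{**}\in\PbarS$, at least one of them in \PS, with $\sum_{\rxn}\scrK_{\rxn}(c^{**})(y'-y)=\sum_{\rxn}\scrK_{\rxn}(c^*)(y'-y)$. Interchanging the names of the two compositions if necessary --- which merely replaces the $\alpha$ and $\sigma$ constructed below by $-\alpha$ and $-\sigma$, a change leaving Definition \ref{def:Concordant}(i)--(ii) intact --- I may assume $c^{**}\in\PS$. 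I then set $\alpha_{\rxn}:=\scrK_{\rxn}(c^{**})-\scrK_{\rxn}(c^*)$ for each $\rxn\in\scrR$ and $\sigma:=c^{**}-c^*$.

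Two of the three requirements are then immediate: $L\alpha=\sum_{\rxn}\alpha_{\rxn}(y'-y)$ is exactly $f(c^{**})-f(c^*)$, which vanishes by the displayed equality, so $\alpha\in\ker L$; and $\sigma\in S$ because $c^*$ and $c^{**}$ are stoichiometrically compatible, with $\sigma\neq 0$ because they are distinct. Checking conditions (i) and (ii) of Definition \ref{def:Concordant} is where weak monotonicity enters, reaction by reaction. If $\alpha_{\rxn}>0$, then $\scrK_{\rxn}(c^{**})>\scrK_{\rxn}(c^*)$, and Definition \ref{def:WeaklyMonotonic}(i) --- applicable to every reaction by Remark \ref{rem:weakmonotonicboundary}, since $c^{**}$ is positive --- yields $s\in\supp y$ with $c^{**}_s>c^*_s$, i.e.\ $\sgn\sigma_s=1=\sgn\alpha_{\rxn}$. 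If $\alpha_{\rxn}<0$, then $\scrK_{\rxn}(c^*)>\scrK_{\rxn}(c^{**})$; since $c^{**}\in\PS$ forces $\scrK_{\rxn}(c^{**})>0$ and hence $\supp y\subset\supp c^*$, Definition \ref{def:WeaklyMonotonic}(i) applied to the pair with the two compositions in the opposite order gives $s\in\supp y$ with $c^*_s>c^{**}_s$, i.e.\ $\sgn\sigma_s=-1=\sgn\alpha_{\rxn}$; together these two cases give (i). If $\alpha_{\rxn}=0$, then $\scrK_{\rxn}(c^{**})=\scrK_{\rxn}(c^*)$ and Definition \ref{def:WeaklyMonotonic}(ii) gives either $c^{**}_s=c^*_s$ for all $s\in\supp y$, i.e.\ $\sigma_s=0$ throughout $\supp y$, or else $s,s'\in\supp y$ with $c^{**}_s>c^*_s$ and $c^{**}_{s'}<c^*_{s'}$, i.e.\ $\sgn\sigma_s=-\sgn\sigma_{s'}$ with neither zero --- which is exactly (ii). So \rnet\ is discordant, a contradiction, and the system is injective. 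The final clause of the proposition follows from the remark just after the definition of injectivity, since two distinct stoichiometrically compatible equilibria, at least one positive, would be a pair of distinct stoichiometrically compatible compositions, one positive, on which $f$ takes the common value $0$.

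I expect there to be no genuinely hard step: the definitions of concordance and of weak monotonicity have been written so that their sign conditions match clause for clause, and the proof is essentially a translation. The one place that rewards care is the boundary case in which $c^*$ has vanishing components --- this is handled once and for all by putting the positive composition in the $c^{**}$ slot (so that Remark \ref{rem:weakmonotonicboundary} is available) and by the observation that a reaction with $\alpha_{\rxn}<0$ necessarily has its reactant complex supported at $c^*$, so no boundary subtlety intrudes there. I would also pause to confirm that the ``interchange the names'' reduction is harmless, i.e.\ that simultaneously negating $\alpha$ and $\sigma$ preserves both clauses of Definition \ref{def:Concordant}, but that is a one-line sign check.
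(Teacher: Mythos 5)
Your proof is correct and follows essentially the same route as the paper's: both construct $\alpha_{\rxn}:=\scrK_{\rxn}(c^{**})-\scrK_{\rxn}(c^*)\in\ker L$ and $\sigma:=c^{**}-c^*\in S\setminus\{0\}$ from a putative failure of injectivity and then check, reaction by reaction via weak monotonicity and Remark \ref{rem:weakmonotonicboundary}, that the pair violates Definition \ref{def:Concordant}. Your explicit treatment of the $\alpha_{\rxn}<0$ case (noting $\scrK_{\rxn}(c^*)>\scrK_{\rxn}(c^{**})>0$ forces $\supp y\subset\supp c^*$) and of the harmless relabeling that puts the positive composition in the $c^{**}$ slot merely spells out details the paper leaves to the reader.
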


\begin{proof} Suppose on the contrary that, for the weakly monotonic kinetic systems \kinsys, there are distinct stoichiometrically compatible compositions $c^*$ and $c^{**}$, with $c^{**} \in \PS$, such that

\begin{equation}
\sum_{\rxn \in \scrR}\scrK_{\rxn}(c^{**})(y' - y) = \sum_{\rxn \in \scrR}\scrK_{\rxn}(c^*)(y' - y).
\end{equation}
Clearly, then, $\alpha \in \RR$ defined by
\begin{equation}
\alpha_{\rxn} := \scrK_{\rxn}(c^{**}) -  \scrK_{\rxn}(c^{*}), \; \forall \rxn \in \scrR
\end{equation}
is a member of $\ker L$. Because  $c^{*}$ and $c^{**}$ are distinct and stoichiometrically compatible, the vector $\sigma :=  c^{**} - c^{*}$ is a nonzero member of $S$.  Note that if, for a particular reaction \rxn, we have $\alpha_{\rxn}  > 0$, then weak monotonicity and Remark \ref{rem:weakmonotonicboundary} require that, for some species $s \in \supp y$,  $\sigma_s := c_s^{**} - c_s^* >  0$.  A similar argument  can be mounted when we have $\alpha_{\rxn}  < 0$; in this case, there is $s \in \supp y$ such that  $\sigma_s := c_s^{**} - c_s^* <  0$. Finally, if we have $\alpha_{\rxn} = 0$, then weak monotonicity requires that the numbers $\{\sigma_s\}_{s\, \in\, \supp y}$ are either all zero or else there are two of opposite nonzero sign.  The existence of $\alpha$ and $\sigma$, so constructed, contradicts the supposition that the network \rnet\  is concordant.
\end{proof}

\noindent
\emph{Example}. Proposition \ref{prop:ConcordanceInjectivity} tell us that no weakly monotonic kinetics for the complex but concordant network \eqref{eq:CalciumNet} can give rise to multiple stoichiometrically compatible equilibria, at least one of which is positive. This is consistent with the findings in \cite{reidl_model_2006}.
\bigskip

	The simple Proposition \ref{prop:ConcordanceInjectivity} --- one of the most important of this article --- tells us that, for any weakly monotonic kinetic system, injectivity (and therefore the absence of distinct stoichiometrically compatible equilibria, at least one of which is positive) can be precluded merely by establishing concordance of the underlying reaction network, perhaps with the assistance of a concordance-checking computer program such as the one made available in \cite{Ji_toolboxWinV21}. As we shall see in a subsequent article, the Species-Reaction Graph can sometimes also serve to establish network concordance. (See Remark \ref{rem:SRGraphConditions}.)

	The converse of Proposition \ref{prop:ConcordanceInjectivity} is not true. That is, there can be a weakly monotonic kinetic system --- in fact, a mass action system --- that is injective even when its underlying reaction network is not concordant. On the other hand, the following proposition resembles a converse; proof by construction is provided in Appendix A.

\begin{proposition}
\label{prop:partialconverse}
Let \rnet be a discordant reaction network. Then there exists for it a weakly monotonic kinetics \scrK \ such that the resulting kinetic system \kinsys\ is not injective.
\end{proposition}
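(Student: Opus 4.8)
The plan is to manufacture, from a discordance witness, a weakly monotonic kinetics together with two distinct stoichiometrically compatible positive compositions at which the species-formation-rate function agrees. Since \rnet\ is discordant, Definition \ref{def:Concordant} provides an $\alpha \in \ker L$ and a nonzero $\sigma \in S$ satisfying its conditions (i) and (ii). Replacing $\sigma$ by $\lambda\sigma$ for a small $\lambda > 0$ — which changes neither its membership in $S$ nor any sign $\sgn\sigma_s$, so (i) and (ii) persist — I may assume $\sigma_s < 1$ for every $s \in \scrS$. I then put $c^{**} := \sum_{s\in\scrS}\omega_s$ (every concentration equal to $1$) and $c^* := c^{**} - \sigma$. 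Both lie in \PS; they are distinct because $\sigma \neq 0$; and $c^{**} - c^* = \sigma \in S$, so they are stoichiometrically compatible. Since $c^{**}_s - c^*_s = \sigma_s$, the sign of $c^{**}_s - c^*_s$ is exactly $\sgn\sigma_s$.

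The core of the construction is to assign to each reaction $\rxn \in \scrR$ a rate function of generalized-mass-action form,
\begin{equation*}
\scrK_{\rxn}(c) = k_{\rxn}\prod_{s\in\supp y} c_s^{\,n_s} \ \text{ if } \supp y \subset \supp c, \qquad \scrK_{\rxn}(c) = 0 \ \text{ otherwise,}
\end{equation*}
with rate constant $k_{\rxn} > 0$ and strictly positive exponents $n_s$ chosen reaction by reaction (I suppress $\rxn$ from the notation $n_s$). Any such $\scrK$ is a kinetics in the sense of Definition \ref{DEF:Kinetics}, and — this is the key point — it is weakly monotonic for \emph{every} choice of positive exponents: for any compositions $p$ and $q$ whose supports both contain $\supp y$, the inequality $\scrK_{\rxn}(q) > \scrK_{\rxn}(p)$ yields $\sum_{s\in\supp y} n_s\ln(q_s/p_s) > 0$ and hence $q_s > p_s$ for some $s\in\supp y$, while $\scrK_{\rxn}(q) = \scrK_{\rxn}(p)$ yields $\sum_{s\in\supp y} n_s\ln(q_s/p_s) = 0$, which is impossible unless the numbers $q_s - p_s$, $s\in\supp y$, all vanish or include two of opposite sign. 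It therefore remains only to arrange
\begin{equation*}
\scrK_{\rxn}(c^{**}) - \scrK_{\rxn}(c^{*}) = \alpha_{\rxn} \qquad \text{for every } \rxn \in \scrR,
\end{equation*}
since then $f(c^{**}) - f(c^{*}) = \sum_{\rxn\in\scrR}\alpha_{\rxn}(y'-y) = L\alpha = 0$, so the kinetic system \kinsys\ is not injective.

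Because $\scrK_{\rxn}(c^{**}) = k_{\rxn}$ and $\scrK_{\rxn}(c^{**}) - \scrK_{\rxn}(c^{*}) = k_{\rxn}\bigl(1 - \prod_{s\in\supp y}(1-\sigma_s)^{n_s}\bigr)$, it suffices to choose the positive exponents so that $1 - \prod_{s\in\supp y}(1-\sigma_s)^{n_s}$ has the sign of $\alpha_{\rxn}$ — and equals zero when $\alpha_{\rxn} = 0$ — after which $k_{\rxn}$ is forced (and is an arbitrary positive number when $\alpha_{\rxn} = 0$). This is exactly where (i) and (ii) do their work. If $\alpha_{\rxn} > 0$, condition (i) gives $s_0 \in \supp y$ with $\sigma_{s_0} > 0$, hence $\ln(1-\sigma_{s_0}) < 0$; loading almost all exponent weight onto $s_0$ (the other exponents a tiny common $\eta > 0$) makes $\sum_{s\in\supp y} n_s\ln(1-\sigma_s) < 0$, i.e.\ the product $<1$. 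The case $\alpha_{\rxn} < 0$ is symmetric, using $s_0 \in \supp y$ with $\sigma_{s_0} < 0$. If $\alpha_{\rxn} = 0$, then either $\sigma_s = 0$ for all $s \in \supp y$ (in particular when $\supp y = \emptyset$), in which case every positive choice of exponents gives product $1$; or condition (ii) gives $s_0, s_1 \in \supp y$ with $\sigma_{s_0} > 0 > \sigma_{s_1}$, and one solves the single linear equation $\sum_{s\in\supp y} n_s\ln(1-\sigma_s) = 0$ in positive unknowns (fix all exponents other than $n_{s_0}, n_{s_1}$ at a small common $\eta$, take $n_{s_0}$ large, and let $n_{s_1}$ take up the balance, which then comes out positive). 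Carrying this out for every reaction completes the construction.

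The step I expect to be the main obstacle, and the one the whole plan is arranged to handle, is that weak monotonicity is a \emph{global} property of the kinetics: it is not enough to engineer the rate values at the single pair $(c^*, c^{**})$; one must produce a family of rate functions that is identically weakly monotonic yet flexible enough to realize prescribed values, and generalized mass action with freely chosen positive exponents is precisely such a family. The secondary subtlety is that discordance only guarantees the \emph{existence} of one reactant species per reaction with the ``right'' sign — not control of any aggregate such as $y\cdot\sigma$ — which forces the construction to concentrate the exponent weight on that distinguished species; the reactions with $\alpha_{\rxn} = 0$ then require the small extra balancing step that condition (ii) makes available.
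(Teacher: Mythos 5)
Your proposal is correct and follows essentially the same route as the paper's own proof: both build a power-law (generalized mass-action) kinetics, which is automatically weakly monotonic, and use the discordance witness $(\alpha,\sigma)$ to choose reaction-wise exponents supported on $\supp y$ together with positive scalar multipliers so that the rate differences between two positive, $\sigma$-separated, stoichiometrically compatible compositions equal $\alpha_{\rxn}$, forcing $f(c^{**})=f(c^{*})$. The only difference is parametrization: the paper encodes $\alpha_{\rxn}=p_{\rxn}\cdot\sigma$ and takes $c^{**}=c^{*}\circ e^{\sigma}$, whereas you fix $c^{**}=\mathbf{1}$, $c^{*}=\mathbf{1}-\sigma$ and control signs by concentrating exponent weight, then calibrate with the rate constants.
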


\begin{rem}[Example] Proposition \ref{prop:partialconverse} asserts the existence of a weakly monotonic kinetics for the discordant network \eqref{eq:EnvZ} that gives rise to a non-injective kinetic system. In fact, it can be shown by direct computation that any mass-action system based on \eqref{eq:EnvZ} possesses at least two stoichiometically compatible equilibria \cite{shinar_input_2007} --- one positive and one non-positive --- thereby violating injectivity.
\medskip

\end{rem}

	Taken together, Propositions \ref{prop:ConcordanceInjectivity} and \ref{prop:partialconverse} can be summarized in the following theorem --- a theorem that tells us that the class of concordant reaction networks is \emph{precisely} the class of networks that are injective for \emph{every} assignment of a weakly monotonic kinetics:

\begin{theorem} 
\label{thm:ConcordanceInjectivityThm}
A reaction network has injectivity in all weakly monotonic kinetic systems derived from it if and only if the network is concordant.
\end{theorem}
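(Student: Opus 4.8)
The plan is to assemble Theorem~\ref{thm:ConcordanceInjectivityThm} directly from the two propositions that precede it, so essentially no new work is required beyond reorganizing what has already been proved. First I would unwind the quantifiers in the statement. ``A reaction network has injectivity in all weakly monotonic kinetic systems derived from it'' means: for every kinetics \scrK\ that is weakly monotonic for the network \rnet, the kinetic system \kinsys\ is injective in the sense of the injectivity definition given above. So the theorem is the biconditional
\[
\big(\forall\, \text{weakly monotonic }\scrK,\ \kinsys\text{ is injective}\big)\iff \rnet\text{ is concordant}.
\]

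For the ``if'' direction (concordance $\Rightarrow$ injectivity for all weakly monotonic kinetics), I would simply invoke Proposition~\ref{prop:ConcordanceInjectivity}: it states precisely that a weakly monotonic kinetic system is injective whenever its underlying network is concordant, and since this holds for \emph{every} choice of weakly monotonic \scrK, the universally quantified statement follows immediately. For the ``only if'' direction I would argue by contraposition: suppose \rnet\ is not concordant, i.e.\ discordant. Then Proposition~\ref{prop:partialconverse} supplies a weakly monotonic kinetics \scrK\ for which \kinsys\ is not injective. Hence it is false that every weakly monotonic kinetic system derived from \rnet\ is injective. Contrapositively, if every weakly monotonic kinetic system derived from \rnet\ is injective, then \rnet\ is concordant. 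Combining the two directions gives the biconditional.

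The only subtlety worth flagging explicitly is that the two directions are not quite symmetric in flavor: the ``if'' direction is a uniform statement (one argument works for all \scrK\ simultaneously, via the sign-vector contradiction in Proposition~\ref{prop:ConcordanceInjectivity}), whereas the ``only if'' direction is an existence statement (for a discordant network one must \emph{exhibit} a single bad \scrK), and that existence is exactly the content of Proposition~\ref{prop:partialconverse}, whose constructive proof is deferred to Appendix~A. So the real mathematical weight of the theorem lives in Proposition~\ref{prop:partialconverse}; at the level of the theorem itself there is no obstacle — it is a two-line logical assembly. If anything, the one thing to be careful about is making sure the notion of injectivity quantified in Proposition~\ref{prop:partialconverse} (``\kinsys\ is not injective'') is literally the negation of the injectivity property quantified in Proposition~\ref{prop:ConcordanceInjectivity} and in the theorem statement, which it is by Definition of injectivity. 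I would write the proof in roughly three sentences accordingly, citing Proposition~\ref{prop:ConcordanceInjectivity} for one implication and Proposition~\ref{prop:partialconverse} (contrapositive) for the other.
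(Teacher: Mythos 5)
Your proposal is correct and matches the paper exactly: the theorem is stated there as the immediate combination of Proposition \ref{prop:ConcordanceInjectivity} (concordance implies injectivity for every weakly monotonic kinetics) and Proposition \ref{prop:partialconverse} (a discordant network admits a weakly monotonic kinetics yielding a non-injective system), with the latter proved in Appendix A. Your contrapositive assembly and your remark that the real content lives in Proposition \ref{prop:partialconverse} are precisely the paper's reading.
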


	We discuss more subtle connections between concordance and injectivity in the next section.

\medskip
\section{A remark about connections to earlier work} 
\label{sec:Connections}

In \cite{craciun_multiple_2005-1,craciun_multiple_2006-1} and \cite{craciun_understanding_2006-1} attention was focused on \emph{mass action} kinetic systems in which the underlying reaction network contains a ``degradation reaction" of the form $s \to 0$ for each species $s$ in the network. (Here such networks are called \emph{fully open}.) In this context it was shown that there are certain fully open networks having the property that all \emph{mass action} kinetic systems deriving from them are injective --- that is,  regardless of rate constant values. These were called \emph{injective (mass action) networks} in \cite{craciun_multiple_2005-1,craciun_multiple_2006-1}, and it was shown that a network is injective in this (mass action) sense if its \emph{Species-Reaction Graph} satisfies the mild conditions stated in Remark \ref{rem:SRGraphConditions}. 

	With this in mind, it is not unreasonable to suppose that a variant of Proposition \ref{prop:partialconverse} might be true: A network can have the property that every \emph{mass action} system deriving from it is  injective, \emph{regardless of rate constants}, only if the network is concordant. 

	In fact, this is  false. We show in \eqref{eq:counterexample} an example of a fully open reaction network that is not concordant but for which any \emph{mass action} system derived from it is injective. Injectivity follows from a theorem in \cite{craciun_PhDThesis}\footnote{A  similar theorem appeared earlier in \cite{ Schlosser_PhDThesis,schlosser_theory_1994}. In the case of \eqref{eq:counterexample}, injectivity follows from the fact that the Species-Reaction Graph has just one cycle, and it is a \emph{c-cycle} --- that is, a cycle whose edge set is the union of c-pairs.} that invokes Species-Reaction Graph conditions different from those in Remark \ref{rem:SRGraphConditions}.  On the other hand, it can be established that the network is \emph{not} concordant by a choice of $\alpha$ and $\sigma$ that satisfy the conditions in Definition \ref{def:Concordant}.
\medskip
\begin{align}
\label{eq:counterexample}
A + B \to C \qquad 2A + B \to D \qquad &A \rightleftarrows 0 \rightleftarrows B \\
& \;\;\neswarrows\quad \senwarrows \nonumber\\
&C \quad \quad \quad \;D \nonumber
\end{align}

\medskip
	The example tells us that ``network injectivity" in the \emph{mass action} case should not be conflated with network concordance. In particular, the example tells us that there are fully open networks for which  \emph{particular} Species-Reaction Graph conditions, \emph{different from those in Remark \ref{rem:SRGraphConditions}}, might ensure ``\emph{mass action} network injectivity" but for which network concordance need not obtain. 

	In a separate article, however, we will show that when a fully open network's Species-Reaction Graph satisfies the specific conditions stated in Remark \ref{rem:SRGraphConditions} --- conditions that ensured injectivity in the mass action setting considered in \cite{craciun_multiple_2005-1,craciun_multiple_2006-1,craciun_understanding_2006-1} and in the far more general NAC kinetics setting considered by Banaji and Craciun \cite{banaji_graph-theoretic_2009,banaji_graph-theoretic_2010} --- \emph{then those same Species-Reaction Graph conditions will ensure that the network itself, divorced from any kinetics at all, is concordant and, in fact, strongly concordant (Section \ref{sec:generalizations})}.  Thus, when the kinetics is weakly monotonic or two-way weakly monotonic (Section \ref{sec:generalizations}), injectivity and the absence of multiple equilibria are ensured.

	In the fully open network setting, then, the Species-Reaction Graph conditions stated in Remark \ref{rem:SRGraphConditions} will imply injectivity for large classes of kinetics through the very simple Propositions \ref{prop:ConcordanceInjectivity} and \ref{prop:StrongConcordanceInjectivity}.  \emph{While earlier work based on these conditions appealed to Jacobians, special determinant attributes, and the Gale-Nikaido theorem to infer injectivity, the elementary Propositions \ref{prop:ConcordanceInjectivity} and \ref{prop:StrongConcordanceInjectivity} make those appeals unnecessary.}   Moreover, we shall see in Section \ref{sec:FullyOpenSection} how properties of the Species-Reaction Graph will often imply concordance of a network (and, therefore, injectivity of any weakly monotonic kinetic system derived from it) even when the network is not fully open.

\section{Persistence in weakly reversible concordant kinetic systems:  Boundary behavior and the existence of positive equilibria}

\label{sec:Persistence}

  It is our aim in this section to establish two related properties inherent to  kinetic systems,  \emph{not necessarily weakly monotonic},  that derive from weakly reversible concordant reaction networks. 

	These properties are connected to ``persistence" questions: Viewed from an ecological perspective, one might be interested to know for a kinetic system whether there will be a positive equilibrium --- that is, an equilibrium in which all species coexist. And one might also want to know whether a population that begins at a state in which all species are present can evolve to a different one --- in particular to an equilibrium --- in which one or more species are extinct (i.e., absent). For kinetic systems in which the underlying network is weakly reversible, questions like these become especially challenging.  

	As we shall see, however, a weakly reversible \emph{concordant} reaction network invariably has pleasant properties that make these questions far easier to resolve:

	First, we examine behavior of the species formation rate function on the boundary of a nontrivial stoichiometric compatibility class. (Recall that by a \emph{nontrivial stoichiometric compatibility class} we mean one that contains a strictly positive composition.) In particular, we show that, on the boundary, \emph{there can be no equilibria at all}.  In fact, at every boundary composition $\bar{c}$, the species formation rate vector $f(\bar{c})$ points into the stoichiometric compatibility class in the following sense: There is at least one species $s^*$ for which $\bar{c}_{s^*} = 0$ while $\dot{c}_{s^*} = f_{s^*}(\bar{c}) > 0$.  (Thus, the inward-pointing boundary vector field thwarts an approach to the boundary along a trajectory that originates at a strictly positive composition.) 

	Second, we show that if the network is \emph{conservative} and the kinetic rate functions are continuous then each nontrivial stoichiometric compatibility class contains \emph{at least one} positive equilibrium. If, in addition, the kinetics is weakly monotonic, then there is \emph{precisely one} equilibrium in each nontrivial stoichiometric compatibility class, and it is positive. (Of course, in the weakly monotonic case, uniqueness already follows from Proposition \ref{prop:ConcordanceInjectivity}.)
\bigskip

\subsection{Preliminaries: reaction transitive compositions} We begin with some old reaction network theory ideas, derived from \cite{feinberg_chemical_1987}\footnote{See in particular Propositions 5.3.1-5.3.2, Remark 6.1.E, and Appendices I and II in \cite{feinberg_chemical_1987}. The idea behind Def. \ref{def:reaction_transitive} was important not only in \cite{feinberg_chemical_1987} but also in more recent papers, where the formulation is phrased differently. A species set is called  \emph{semi-locking}  in \cite{anderson_global_2008} or a \emph{siphon} in \cite{angeli_persistence_2011} if its complement in the full set of species is  reaction-transitive.}, that are independent of concordance. The following definition is intended to convey a simple idea: A species set $\scrS^* \subset \scrS$ is \emph{reaction-transitive} if, whenever all the \emph{reactant} species of a reaction are present in $\scrS^*$, then so too are that reaction's \emph{product} species.

\begin{definition}
\label{def:reaction_transitive}
 For a reaction network \rnet, a species set $\scrS^* \subset \scrS$ is \emph{reaction-transitive} if, for all $\rxn \in \scrR$,
\begin{equation}
\mathrm{supp}\,y \subset \scrS^* \; \Rightarrow \; \mathrm{supp}\,y' \subset  \scrS^*.   
\end{equation} 
We say that a composition $c \in \PbarS$ is reaction-transitive if $\supp c$ is reaction transitive.
\end{definition}

	Clearly every positive composition is reaction-transitive. It follows from properties presumed of a kinetics in Section \ref{sec:CRNTPreliminaries} that if $\bar{c}$ is a composition on the boundary of $\PbarS$ and if species $s$ is ``missing" at composition $\bar{c}$ (i.e., if $\bar{c}_s = 0$), then the
 formation rate of $s$ at composition $\bar{c}$ is not negative  (i.e., $f_s(\bar{c}) \ge 0)$. The following lemma tells us that if $\bar{c}$ is not 
reaction-transitive, then there must be a missing species which, at composition $\bar{c}$, is produced at a \emph{strictly positive} rate. (See \cite{feinberg_lectureschemical_1979} or Appendix I in \cite{feinberg_chemical_1987}.)  

\begin{lemma} 
\label{lemma:not_reaction_transitive}
Consider a kinetic system \kinsys\ with species formation rate function $f(\cdot)$. If $\bar{c}$ is a composition on the boundary of  $\PbarS$, then  $f_s(\bar{c}) \ge 0$ for each species $s$ such that $\bar{c}_s = 0$. Moreover, if $\bar{c}$ is not reaction-transitive, then there is a species $s^*$ such that $\bar{c}_{s^*} = 0$ and \hbox{$f_{s^*}(\bar{c}) > 0$}. Thus, every equilibrium of the kinetic system is reaction transitive. 
\end{lemma}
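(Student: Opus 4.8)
The plan is to unwind all three assertions directly from Definition \ref{DEF:Kinetics} and the species-wise form of the species formation rate function coming from \eqref{EQ:SpeciesFormationRateFunction}, namely $f_s(c) = \sum_{\rxn \in \scrR} \scrK_{\rxn}(c)\,(y'_s - y_s)$. The one observation that powers everything is the following: if $\bar c_s = 0$ and $\rxn \in \scrR$ has $\scrK_{\rxn}(\bar c) > 0$, then by Definition \ref{DEF:Kinetics} we have $\supp y \subset \supp \bar c$, so $s \notin \supp y$, i.e. $y_s = 0$; hence the contribution of that reaction to $f_s(\bar c)$ is $\scrK_{\rxn}(\bar c)\, y'_s$, which is non-negative because $y' \in \PbarS$. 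Summing over all reactions — those with zero rate at $\bar c$ contribute nothing, those with positive rate contribute a non-negative number — immediately gives the first assertion, $f_s(\bar c) \ge 0$ whenever $\bar c_s = 0$. Note that this same computation shows more: when $\bar c_s = 0$, $f_s(\bar c)$ is a sum of non-negative terms, one for each reaction whose rate is positive at $\bar c$.

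For the second assertion I would use that $\bar c$ failing to be reaction-transitive means, by Definition \ref{def:reaction_transitive}, that there is a reaction $\rxn \in \scrR$ with $\supp y \subset \supp \bar c$ but $\supp y' \not\subset \supp \bar c$. The first inclusion gives $\scrK_{\rxn}(\bar c) > 0$ (Definition \ref{DEF:Kinetics}); the failure of the second produces a species $s^*$ with $y'_{s^*} > 0$ and $\bar c_{s^*} = 0$. Now apply the bookkeeping of the previous paragraph to the index $s^*$: every term in the sum expressing $f_{s^*}(\bar c)$ is non-negative, and the term coming from the distinguished reaction $\rxn$ equals $\scrK_{\rxn}(\bar c)\, y'_{s^*} > 0$. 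Therefore $f_{s^*}(\bar c) > 0$, as claimed.

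Finally, for the closing sentence I would argue by contradiction: if $c$ is an equilibrium, then $f(c) = 0$, so $f_s(c) = 0$ for every species $s$; since every positive composition is reaction-transitive, a composition that is not reaction-transitive must be a boundary composition, so if $c$ were not reaction-transitive the second assertion would supply a species $s^*$ with $f_{s^*}(c) > 0$, contradicting $f(c) = 0$. I do not expect a genuine obstacle here — the lemma is essentially a careful reading of the definition of a kinetics — and the only point deserving attention is the sign bookkeeping: one must first establish that, at a composition where $\bar c_{s^*} = 0$, \emph{all} terms of $f_{s^*}(\bar c)$ are non-negative, so that the single strictly positive term cannot be cancelled.
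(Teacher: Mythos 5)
Your proof is correct, and it is exactly the routine definition-unwinding that the paper has in mind: the paper itself supplies no proof of Lemma \ref{lemma:not_reaction_transitive}, deferring instead to \cite{feinberg_lectureschemical_1979} and Appendix I of \cite{feinberg_chemical_1987}, where the argument is of precisely this form. Your key bookkeeping step --- that at a species $s$ with $\bar{c}_s = 0$ every reaction contributing to $f_s(\bar{c})$ either has zero rate or, by Definition \ref{DEF:Kinetics}, has $y_s = 0$ and hence contributes $\scrK_{\rxn}(\bar{c})\,y'_s \ge 0$ --- together with your observation that a non-reaction-transitive composition cannot be positive, covers all three assertions with no gaps.
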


\subsection{Boundary behavior for weakly reversible concordant networks}	With this as background, we are in a position to state  a theorem about weakly reversible concordant reaction networks. \emph{Note that the theorem makes mention only of network properties; there is no mention of kinetics.}

\begin{theorem}
\label{thm:concordant_reaction-transitive}
For a weakly reversible concordant reaction network, no nontrivial stoichiometric compatibility class can have on its boundary a composition that is reaction-transitive.
\end{theorem}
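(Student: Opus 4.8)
The plan is to prove the contrapositive. Assume that a weakly reversible network $\{\scrS,\scrC,\scrR\}$ does carry a reaction-transitive composition $\bar c$ on the boundary of a nontrivial stoichiometric compatibility class; I will exhibit an $\alpha\in\ker L$ and a nonzero $\sigma\in S$ satisfying conditions (i)--(ii) of Definition \ref{def:Concordant}, so that the network is discordant. Write $\scrS^*:=\supp\bar c$. Since $\bar c$ is a boundary composition, $\scrS^{**}:=\scrS\setminus\scrS^*$ is nonempty; since the class is nontrivial there is a positive composition $c^*$ with $c^*-\bar c\in S$, a vector that is strictly positive on all of $\scrS^{**}$.

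The first step is structural: I claim that reaction-transitivity of $\scrS^*$ together with weak reversibility forces the linkage classes to split cleanly. In a weakly reversible network every complex of a linkage class ultimately reacts to every other, so if one complex of a linkage class is supported within $\scrS^*$, then propagating reaction-transitivity around that linkage class forces \emph{every} complex of it to be supported within $\scrS^*$. Hence the linkage classes are of two kinds: the \emph{active} ones (all complexes supported in $\scrS^*$), whose reactions form $\scrR^*$, and the \emph{inactive} ones, none of whose complexes is supported in $\scrS^*$, whose reactions form $\scrR^{**}:=\scrR\setminus\scrR^*$. In particular $\scrR^*$ is precisely $\{\,\rxn\in\scrR:\supp y\subset\scrS^*\,\}$; each of the active and inactive subnetworks is itself weakly reversible; and every reaction in $\scrR^{**}$ has a reactant complex $y$ with $\supp y\cap\scrS^{**}\neq\emptyset$.

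Next I assemble $\alpha$ and $\sigma$. Weak reversibility of the inactive subnetwork makes its reaction vectors positively dependent, so there is a strictly positive $\mu$ on $\scrR^{**}$ with $\sum_{\scrR^{**}}\mu_{\rxn}(y'-y)=0$; I set $\alpha_{\rxn}=\mu_{\rxn}>0$ there. Then condition (i) is automatically met on $\scrR^{**}$ provided $\sigma$ is strictly positive on $\scrS^{**}$, since each such reactant complex owns a species in $\scrS^{**}$. For $\sigma$ I take $\sigma=c^*-\bar c$ (after possibly replacing $c^*$ by another positive composition in the class), which is nonzero and positive on $\scrS^{**}$; the freedom here is that $\sigma$ may be any element of $S$ subject only to $\sigma>0$ on $\scrS^{**}$ and $\sigma>-\bar c$ on $\scrS^*$. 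On the active side I choose $c^*$ so that $\sigma$ vanishes on as much of $\scrS^*$ as $S$ permits and then, reaction by reaction in $\scrR^*$, either assign $\alpha_{\rxn}<0$ (legitimate for (i) when $\sigma$ is negative somewhere on $\supp y$) or $\alpha_{\rxn}=0$ (legitimate for (ii) when $\sigma$ vanishes on $\supp y$). What must still be arranged is $\sum_{\scrR^*}\alpha_{\rxn}(y'-y)=-\sum_{\scrR^{**}}\mu_{\rxn}(y'-y)$, which places $\alpha$ in $\ker L$; that such a matching exists rests on the fact --- a consequence of positive dependence --- that the reaction vectors of a weakly reversible subnetwork positively \emph{span} its stoichiometric subspace, so the inactive reaction vectors positively span $S^{**}$ while the active part supplies a corresponding amount of room.

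The step I expect to be the main obstacle is exactly this last matching: guaranteeing that the positive composition $c^*$ in the nontrivial class can be chosen so that the sign pattern of $c^*-\bar c$ on $\scrS^*$ is compatible with a valid distribution of negative and zero entries of $\alpha$ over $\scrR^*$ --- concretely, that one can arrange $c^*_s<\bar c_s$ for a reactant species of each active reaction that is to carry a negative $\alpha$-entry, while the active reactions whose reactant complexes must be left untouched receive $\alpha_{\rxn}=0$ and are covered by condition (ii). I anticipate this requiring a theorem-of-the-alternative (Stiemke/Gordan-type) argument applied inside the active subnetwork, using its weak reversibility, or an induction that peels off those ``untouched'' active reactions one at a time. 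Once this sign-compatibility is secured, the pair $(\alpha,\sigma)$ witnesses discordance, which is the desired contradiction; together with Lemma \ref{lemma:not_reaction_transitive} this also shows that a weakly reversible concordant network admits no equilibria at all on the boundary of any nontrivial stoichiometric compatibility class.
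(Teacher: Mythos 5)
Your structural step is sound and matches the paper's own preparation (Remark \ref{rem:PrepForBoundaryThm}): weak reversibility plus reaction-transitivity of $\supp\bar c$ does split the linkage classes into supported (``active'') and unsupported (``inactive'') ones, both subnetworks are weakly reversible, and with $\sigma>0$ on the absent species and $\alpha>0$ on the inactive reactions condition (i) of Definition \ref{def:Concordant} is automatic there. But the proof stops exactly where the theorem's real content lies. The ``matching'' step you defer --- producing a positive $c^{**}$ in the class (or, more generally, a nonzero $\sigma\in S$ positive on $\scrS\setminus\supp\bar c$) together with a kernel vector $\alpha$ whose entries on the active reactions have signs compatible, reaction by reaction, with $\sigma$ on the active species --- is not a linear feasibility problem of Stiemke/Gordan type: the concordance conditions are disjunctive, species-wise sign-matching requirements (``some $s\in\supp y$ with $\sgn\sigma_s=\sgn\alpha_{\rxn}$,'' or the mixed-sign/vanishing alternative when $\alpha_{\rxn}=0$), so no theorem of the alternative applies directly, and no induction is actually given. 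Moreover you impose an extra, unjustified rigidity ($\alpha\le 0$ on every active reaction, with the inactive part separately in its own kernel); the witness the paper ultimately produces does not have that form in general, and you give no argument that a witness of your restricted form always exists.

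The paper closes this gap by an entirely different device: it endows the network with a mass action kinetics whose rate constants are chosen (via positive dependence of the active subnetwork) so that $\bar c$ is a boundary equilibrium, then invokes the existence of a \emph{positive} equilibrium in the same nontrivial stoichiometric compatibility class for any weakly reversible mass action system --- the unpublished result of \cite{deng_steadystates} in the main text, or the complex-balancing route of \cite{horn_general_1972} in Appendix B --- and finally contradicts Proposition \ref{prop:ConcordanceInjectivity}, since two distinct stoichiometrically compatible equilibria, one positive, violate injectivity. Unwinding that argument does yield a discordance witness of roughly the shape you describe, but its existence rests on that nontrivial positive-equilibrium theorem (with the empty-$\scrR^*$ case handled separately); your sketch would have to supply a substitute for precisely this ingredient, and as written it does not. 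So there is a genuine gap at the step you yourself flag as the main obstacle.
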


Before proving Theorem \ref{thm:concordant_reaction-transitive}, we observe that the theorem and Lemma \ref{lemma:not_reaction_transitive} give us the following corollary:

\begin{corollary}
\label{corr:concordant_no-boundary-equil}
For a kinetic system that derives from a weakly reversible concordant reaction network, no composition on the boundary of a nontrivial stoichiometric compatibility class is an equilibrium. In fact, at each such boundary composition the species formation rate vector points into the stoichiometric compatibility class in the sense that there is an absent species produced at strictly positive rate.
\end{corollary}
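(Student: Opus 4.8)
The plan is to argue by contradiction, manufacturing from the hypothesized boundary composition a mass-action --- hence weakly monotonic --- kinetic system that is not injective, which by Proposition~\ref{prop:ConcordanceInjectivity} is impossible for a concordant network. So suppose the weakly reversible concordant network \rnet\ has a reaction-transitive composition $\bar c$ lying on the boundary of a nontrivial stoichiometric compatibility class $\Gamma$; put $P:=\supp\bar c$ and $Z:=\scrS\setminus P$, so that $Z\neq\emptyset$, and fix a positive $c^{**}\in\Gamma$.

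The one structural fact I would establish first is that the reactions ``active at $\bar c$'' --- those with $\supp y\subset P$, equivalently with $\bar c^{\,y}>0$ --- form a weakly reversible sub-network all of whose complexes have support in $P$. Indeed, reaction-transitivity says the set of complexes with support in $P$ is closed under reaction arrows; since \rnet\ is weakly reversible, each complex ultimately reacts to every complex in its linkage class, so a $P$-supported complex drags its entire linkage class into the $P$-supported set. Hence the $P$-supported complexes are a union of linkage classes, the active reactions are precisely the reactions of those linkage classes, and this sub-network is itself weakly reversible; the remaining reactions lie in the complementary linkage classes and are dormant at $\bar c$.

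Next I would build the kinetics. Let $\gamma$ be the positive composition with $\gamma_s=\bar c_s$ for $s\in P$ and, say, $\gamma_s=1$ for $s\in Z$. Because \rnet\ is weakly reversible, one may choose mass-action rate constants making $\gamma$ a complex-balanced equilibrium in the sense of Horn and Jackson \cite{horn_general_1972,feinberg_lectureschemical_1979}. For any complex $\eta$ with $\supp\eta\subset P$ the complex-balance relation at $\gamma$ --- equality of total inflow and total outflow at $\eta$ --- involves only active reactions and only monomials $\gamma^{\eta'}$ with $\supp\eta'\subset P$, on which $\gamma$ agrees with $\bar c$; thus the active sub-network is complex balanced at $\bar c$, so its reactions contribute $0$ to the species-formation-rate function $f$ at $\bar c$. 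The dormant reactions contribute $0$ at $\bar c$ as well, whence $f(\bar c)=0$: $\bar c$ is an equilibrium. On the other hand the constructed system is complex balanced, so by the Horn--Jackson theorem \cite{horn_general_1972} each nontrivial stoichiometric compatibility class contains a positive equilibrium; let $\tilde c\in\Gamma$ be one such.

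Finally, $\bar c$ and $\tilde c$ are distinct (a coordinate of $\bar c$ vanishes, $\tilde c$ is positive), stoichiometrically compatible (both lie in $\Gamma$), and both are equilibria, so $\sum_{\rxn}\scrK_{\rxn}(\tilde c)(y'-y)=0=\sum_{\rxn}\scrK_{\rxn}(\bar c)(y'-y)$ with $\tilde c$ positive: the kinetic system is not injective, contradicting Proposition~\ref{prop:ConcordanceInjectivity}. (Concordance enters only in this last line.) The step I expect to carry the weight is the structural one --- seeing that reaction-transitivity together with weak reversibility lets one complex-balance the active part of the network independently of the rest, which is exactly what places an equilibrium at $\bar c$; the remainder is an assembly of classical facts. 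A direct attack --- exhibiting $\alpha\in\ker L$ and a nonzero $\sigma\in S$ violating Definition~\ref{def:Concordant}, with $\sigma$ pointing from $\bar c$ toward $c^{**}$ and $\alpha$ supported positively on the dormant reactions --- founders precisely on meeting condition~(ii) of that definition at the active complexes, which the kinetic detour sidesteps.
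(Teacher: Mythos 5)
Your complex-balancing construction is sound, and it is essentially the paper's Appendix~B route to Theorem \ref{thm:concordant_reaction-transitive}, in a slightly streamlined form: instead of complex balancing only the active subnetwork and then separately fixing rate constants for the dormant reactions, you complex balance the whole network at a positive extension $\gamma$ of $\bar c$, note that the balance relations at the $P$-supported complexes transfer from $\gamma$ to $\bar c$ (so $\bar c$ is an equilibrium of the constructed mass action system), and invoke Horn--Jackson to obtain a positive equilibrium in $\Gamma$, contradicting Proposition \ref{prop:ConcordanceInjectivity}. The structural claims (the $P$-supported complexes form a union of linkage classes, the active reactions form a weakly reversible subnetwork, the dormant reactions have zero rate at $\bar c$) are correct, and your single construction even absorbs the paper's separate treatment of the case in which no reaction is active at $\bar c$.

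The gap is that what you have proved is Theorem \ref{thm:concordant_reaction-transitive} --- no \emph{reaction-transitive} composition lies on the boundary of a nontrivial class --- not the corollary. The corollary is about an \emph{arbitrary} kinetics on the network and asserts (a) that no boundary composition of a nontrivial class is an equilibrium and (b) that at \emph{every} such boundary composition some absent species is produced at strictly positive rate. Your argument starts by assuming $\bar c$ is reaction-transitive and ends with a contradiction; it says nothing about the sign of the given system's species-formation rate at the boundary compositions that actually remain (the non-reaction-transitive ones), nor why an equilibrium of the given kinetics would have to be reaction-transitive in the first place. The missing bridge is exactly Lemma \ref{lemma:not_reaction_transitive}: if $\bar c$ is not reaction-transitive, some reaction $\rxn$ has $\supp y \subset \supp \bar c$ but $\supp y' \not\subset \supp \bar c$; by the definition of a kinetics it proceeds at strictly positive rate at $\bar c$, while every reaction with positive rate at $\bar c$ has $y_s = 0$ for each absent species $s$, so $f_{s^*}(\bar c) > 0$ for some $s^* \in \supp y'$ with $\bar c_{s^*} = 0$. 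Adding that observation (which in particular shows every equilibrium is reaction-transitive) turns your theorem into both claims of the corollary; as written, the proposal does not reach them. (Minor: the positive composition $c^{**} \in \Gamma$ fixed at the outset is never used.)
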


	As might be expected, a species-formation-rate function that, in the sense indicated, points inward on the boundary of a non-trivial stoichiometric compatibility class repels an approach to the boundary by a bounded trajectory that begins at a strictly positive composition. For a formal proof that its $\omega$-limit set is, in fact, disjoint from the boundary when the kinetics is sufficiently smooth see \cite{angeli_persistence_2011}.

\begin{rem} \emph{(About the proof of Theorem  \ref{thm:concordant_reaction-transitive}}) Although Theorem \ref{thm:concordant_reaction-transitive} is about network properties alone, divorced from any mention of kinetics, in its proof we will endow the reversible concordant network of the theorem with a mass action kinetics in order to produce a contradiction. To begin, we will presume that there can be a nontrivial stoichiometric compatibility class that has a reaction-transitive composition $\bar{c}$ on its boundary. Then we will endow the network with a mass action kinetics, with rate constants chosen to make the boundary composition $\bar{c}$ an equilibrium. 

	Now suppose we could take for granted that, for every weakly reversible network and for every assignment of rate constants, the corresponding mass action species-formation rate function admits a \emph{positive} equilibrium in each nontrivial stoichiometric compatibility class. In that case, we would  have a contradiction of Proposition \ref{prop:ConcordanceInjectivity}. In particular, in the nontrivial stoichiometric compatibility class containing $\bar{c}$ we would have, for the mass action system constructed as described, a positive equilibrium that is stoichiometrically compatible with the boundary equilibrium $\bar{c}$. 

	In fact, an unpublished manuscript by Deng et al. \cite{deng_steadystates} does indeed provide an argument for the existence, in each nontrivial stoichiometric compatibility class, of a positive equilibrium for any mass action system in which the underlying network is weakly reversible. In the absence of a published version, however, we can instead mount a similar but somewhat more complicated argument that relies on the same strategy, this time involving construction of mass action system that is \emph{complex balanced} at the boundary composition $\bar{c}$. This would permit invocation of known properties of complex balanced mass action systems, in particular the distribution of their positive equilibria.

	Because the argument that invokes \cite{deng_steadystates} is more straightforward, we give it here in the main text. The alternative argument, which does not rely on \cite{deng_steadystates}, is provided in Appendix B. 
\end{rem}

\begin{rem} 
\label{rem:PrepForBoundaryThm}
In preparation for the nearby proof of Theorem \ref{thm:concordant_reaction-transitive} (and of the proof in the Appendix), it will be helpful if we make the following observation \cite{feinberg_chemical_1987}: Consider a \emph{weakly reversible} network \rnet, and suppose that a composition $c^*$ is reaction-transitive. It is not difficult to see that, if $y^{\dagger}$ and $y^{\dagger \dagger}$ are complexes belonging to the same linkage class, then there are two possibilities: Either $\supp y^{\dagger}$ and $\supp y^{\dagger \dagger}$ are \emph{both} contained in $\supp c^*$ or else \emph{neither} is. 

	This is to say that the linkage classes are of two distinct kinds: Relative to the reaction-transitive composition $c^*$, a \emph{supported linkage classes} is one in which the support of \emph{every} complex is contained in $\supp c^*$, while  an \emph{unsupported linkage class} is one in which the support of \emph{no} complex is contained in $\supp c^*$. 

	If the network is given a kinetics, it follows from properties of a kinetics given in Section \ref{sec:CRNTPreliminaries} that, \emph{at the reaction-transitive composition $c^*$, all reactions associated with a supported linkage class will proceed at strictly positive rate, while all reactions associated with an unsupported linkage class will have zero rate}.  In rougher terms, at composition $c^*$,  all reactions associated with a supported linkage class will be ``turned on" while those associated with an unsupported linkage class will be ``turned off." Moreover, the ``turned on" reactions (if there are any) constitute a \emph{weakly reversible} subnetwork of the original network.
\end{rem} 
\smallskip
\begin{proof}[Proof of Theorem \ref{thm:concordant_reaction-transitive}] Suppose that \rnet \  is a concordant weakly reversible reaction network and that, contrary to what is to be proved, $c^* \in \PbarS$ is a reaction-transitive composition on the boundary of a nontrivial stoichiometric compatibility class. Let
\begin{equation}
\scrC ^* := \{y \in \scrC : \supp y \subset \supp c^*\} \; \mathrm{and} \; 
\scrR ^* := \{ y \to y' \in \scrR : \supp y \subset \supp c^*\}. \nonumber
\end{equation}
Provided that $\scrC ^*$ is not empty,  \{$\scrS, \scrC ^*, \scrR ^*$\} is a weakly reversible subnetwork of the original network. In this case, there are positive numbers $\{\kappa_{\rxn}\}_{\rxn \in \scrR ^*}$ that satisfy \cite{feinberg_lectureschemical_1979}
\begin{equation}
\label{eq:KappaEqn}
\sum_{\rxn \in \scrR ^*}\kappa_{\rxn}(y' - y) = 0.
\end{equation}
Now choose rate constants $\{k_{\rxn}\}_{\rxn \in \scrR ^*}$ for reactions in $\scrR ^*$ to satisfy
\begin{equation}
\label{eq:RateConstChoice}
\kappa_{\rxn} = k_{\rxn}(c^*)^y, \quad \forall \rxn \in \scrR ^*.
\end{equation}
In any case, for the remaining reactions choose rate constants $\{k_{\rxn}\}_{\rxn \in \scrR \setminus \scrR ^*}$ to be any positive numbers. Note that for any $\rxn \in \scrR \setminus \scrR ^*$ we will have $ k_{\rxn}(c^*)^y = 0$ because $\supp y \not\subset \supp c^*$. From this, \eqref{eq:KappaEqn}, and \eqref{eq:RateConstChoice} it follows that
\begin{equation}
\sum_{\rxn \in \scrR ^*}k_{\rxn}(c^*)^y(y' - y) \;\;+ \sum_{\rxn \in \scrR \setminus \scrR ^*}k_{\rxn}(c^*)^y(y' - y) = 0.
\end{equation} 

	This is to say that $c^*$ is a (boundary) equilibrium of the mass action system \mas\, so constructed. Because \rnet is weakly reversible, \cite{deng_steadystates} asserts, for the same mass action system, the existence of a strictly positive equilibrium $c^{**}$  in the nontrivial stoichiometric compatibility class containing $c^*$.  Because the network is concordant, this contradicts the conclusion of Proposition \ref{prop:ConcordanceInjectivity}.
\end{proof}

\subsection{Conservative weakly reversible concordant networks and the existence of positive equilibria} When, in the following theorem, we say that a kinetics is continuous we mean that its various reaction rate functions are all continuous. 

\begin{theorem} 
\label{thm:PositiveEquilibrium}
If \scrK\,is a continuous kinetics for a conservative reaction network \rnet, then the species formation rate function for the kinetic system \kinsys \  has an equilibrium within each stoichiometric compatibility class. If the network is weakly reversible and concordant, then within each nontrivial stoichiometric compatibility class there is a positive equilibrium. If, in addition, the kinetics is weakly monotonic, then that positive equilibrium is the only equilibrium in the stoichiometric compatibility class containing it.
\end{theorem}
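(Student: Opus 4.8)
The plan is to prove the three assertions in turn; only the first requires genuine work, the remaining two being one-line consequences of Theorem~\ref{thm:concordant_reaction-transitive} and Proposition~\ref{prop:ConcordanceInjectivity}. \emph{Existence of an equilibrium in each class (conservative network, continuous kinetics).} I would fix a composition $c \in \PbarS$ and work inside the stoichiometric compatibility class $\Sigma := (c + S) \cap \PbarS$. Since the network is conservative there is a $p \in S^\perp \cap \PS$; because $p \cdot c'$ is constant over $\Sigma$ while $p$ is strictly positive, every coordinate is bounded on $\Sigma$, so $\Sigma$ is compact, and it is convex as the intersection of the affine set $c + S$ with the orthant. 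As the species formation rate function $f$ is continuous and takes values in $S$, the map $F : \Sigma \to \Sigma$ given by $F(x) := \pi_\Sigma(x + f(x))$, where $\pi_\Sigma$ denotes nearest-point projection onto $\Sigma$, is a well-defined continuous self-map of a nonempty compact convex set, so Brouwer's theorem furnishes a fixed point $c^*$. The variational characterization of the projection then yields $f(c^*) \cdot (x - c^*) \le 0$ for all $x \in \Sigma$. The crux is that $c^* + \varepsilon f(c^*)$ lies in $\Sigma$ for all sufficiently small $\varepsilon > 0$: it lies in $c + S$ because $f(c^*) \in S$, and it lies in $\PbarS$ because each coordinate indexed by a species present at $c^*$ stays positive for $\varepsilon$ small, while each coordinate indexed by an absent species equals $\varepsilon f_s(c^*)$, which is nonnegative by Lemma~\ref{lemma:not_reaction_transitive}. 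Taking $x = c^* + \varepsilon f(c^*)$ in the projection inequality gives $\varepsilon\, f(c^*) \cdot f(c^*) \le 0$, so $f(c^*) = 0$; thus $c^*$ is an equilibrium in $\Sigma$.

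\emph{Positivity (weakly reversible concordant network).} Let $\Sigma$ be a nontrivial stoichiometric compatibility class and let $c^*$ be the equilibrium just constructed in $\Sigma$. By Lemma~\ref{lemma:not_reaction_transitive} the composition $c^*$ is reaction-transitive. Were $c^*$ not positive it would have a vanishing coordinate and hence lie on the relative boundary of $\Sigma$; then $\Sigma$, which is nontrivial, would carry a reaction-transitive composition on its boundary, contradicting Theorem~\ref{thm:concordant_reaction-transitive}. Hence $c^*$ is positive.

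\emph{Uniqueness (additionally weakly monotonic).} By Proposition~\ref{prop:ConcordanceInjectivity} the kinetic system is injective, so it admits no two distinct stoichiometrically compatible equilibria at least one of which is positive; since $\Sigma$ already contains the positive equilibrium $c^*$, it contains no other equilibrium.

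The main obstacle is the existence step. The subtlety there is that continuity — as opposed to Lipschitz continuity — of the kinetics precludes the naive argument through the time-$t$ map of the flow, since solutions of $\dot c = f(c)$ need not be unique; replacing the flow by the projection map $F$ circumvents this, and the inward-pointing information of Lemma~\ref{lemma:not_reaction_transitive} is precisely what promotes the Brouwer fixed point of $F$ to an actual zero of $f$. Everything past the existence step is routine given the results already in hand.
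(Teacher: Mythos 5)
Your proof is correct, and its overall skeleton (compactness and convexity of the stoichiometric compatibility class, a fixed-point argument for existence, then positivity via the boundary results and uniqueness via injectivity) matches the paper's; the genuine difference is in how the existence step is closed. The paper works with $g(c) := c + f(c)$, which need not map the class $K$ into itself, verifies at each boundary composition the inwardness condition $g(\bar c) - \bar c = \lambda(c^{\dagger} - \bar c)$ with $c^{\dagger} \in K$ (its Lemma \ref{lemma:LemmaForBrowder}, proved by exactly your observation that $\bar c + \theta f(\bar c)$ remains in the class for small $\theta > 0$), and then invokes Browder's fixed point theorem for such non-self maps \cite{Browder1967}. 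You instead compose with the nearest-point projection $\pi_\Sigma$, so that Brouwer's theorem applies to a genuine continuous self-map of a compact convex set, and you recover $f(c^*) = 0$ from the variational characterization of the projection combined with the same inward-pointing fact (Lemma \ref{lemma:not_reaction_transitive}), used only at the fixed point itself. In effect you reprove the special case of Browder's theorem that the paper cites: your route is more elementary and self-contained, needing only Brouwer plus standard properties of projections onto closed convex sets, at the price of carrying the projection and the variational inequality explicitly; the paper's route is shorter given the citation. The remaining steps are identical in substance --- your combination of Lemma \ref{lemma:not_reaction_transitive} with Theorem \ref{thm:concordant_reaction-transitive} is precisely the content of Corollary \ref{corr:concordant_no-boundary-equil}, which the paper invokes there, and uniqueness is Proposition \ref{prop:ConcordanceInjectivity} in both treatments.
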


	We begin with a lemma that is required for application of a fixed point theorem due to Browder \cite{Browder1967}. For any kinetic system, the lemma merely makes more firmly geometrical the idea that, on the boundary of a stoichiometric compatibility class, the species formation rate vector does not point outward. 

\begin{lemma}
\label{lemma:LemmaForBrowder}
Consider a kinetic system \kinsys\  with stoichiometric subspace $S$ and species formation rate function $f: \PbarS \to S$. If $\bar{c}$ is a composition on the boundary of \PbarS, then there is a composition $c^{\dagger}$ in the stoichiometric compatibility class $(\bar{c} + S) \cap \PbarS$ and a positive number $\lambda$ such that $f(\bar{c}) = \lambda(c^{\dagger} - \bar{c})$.
\end{lemma}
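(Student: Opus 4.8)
The plan is to exploit the fact, recorded in Lemma \ref{lemma:not_reaction_transitive}, that at a boundary composition $\bar c$ one has $f_s(\bar c)\ge 0$ for every species $s$ with $\bar c_s=0$, together with the fact that $f(\bar c)\in S$ by definition of the species formation rate function. First I would set $v:=f(\bar c)\in S$. If $v=0$, the conclusion holds trivially: take $c^\dagger=\bar c$ and $\lambda=1$ (or any positive $\lambda$), since then $f(\bar c)=0=\lambda(\bar c-\bar c)$. So assume $v\neq 0$. The idea is to find a composition $c^\dagger\in(\bar c+S)\cap\PbarS$ lying, relative to $\bar c$, in the direction of $v$; concretely I will show that $c^\dagger:=\bar c+\theta v$ is a nonnegative composition for all sufficiently small $\theta>0$, and then put $\lambda:=1/\theta$, so that $f(\bar c)=v=\tfrac1\theta(\theta v)=\lambda\big((\bar c+\theta v)-\bar c\big)=\lambda(c^\dagger-\bar c)$. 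Since $v\in S$, the vector $c^\dagger-\bar c=\theta v$ lies in $S$, so $c^\dagger$ is automatically stoichiometrically compatible with $\bar c$; the only thing to check is that $c^\dagger\in\PbarS$, i.e. that $\bar c_s+\theta v_s\ge 0$ for every $s\in\scrS$.

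The verification splits according to whether $\bar c_s$ is zero or positive. For species $s$ with $\bar c_s>0$, the quantity $\bar c_s+\theta v_s$ is continuous in $\theta$ and strictly positive at $\theta=0$, hence remains nonnegative (indeed positive) for all $\theta$ in some interval $[0,\theta_s)$ with $\theta_s>0$. For species $s$ with $\bar c_s=0$, Lemma \ref{lemma:not_reaction_transitive} gives $v_s=f_s(\bar c)\ge 0$, so $\bar c_s+\theta v_s=\theta v_s\ge 0$ for every $\theta\ge 0$; no upper bound on $\theta$ is needed for these coordinates. Taking $\theta:=\tfrac12\min\{\theta_s:\bar c_s>0\}>0$ (the minimum over the finite species set $\scrS$, which is positive since each $\theta_s>0$; if there are no such species, any $\theta>0$ works) ensures $\bar c_s+\theta v_s\ge 0$ for all $s\in\scrS$ simultaneously, so $c^\dagger:=\bar c+\theta v\in\PbarS$. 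With $\lambda:=1/\theta>0$ we obtain $f(\bar c)=\lambda(c^\dagger-\bar c)$, which is the assertion of the lemma.

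I do not anticipate a genuine obstacle here; the lemma is essentially a geometric repackaging of the inequalities $f_s(\bar c)\ge 0$ at absent species, and the only point requiring a modicum of care is handling the degenerate case $f(\bar c)=0$ (where the direction $c^\dagger-\bar c$ is the zero vector and one simply picks $c^\dagger=\bar c$) and confirming that $c^\dagger$ so constructed remains inside $\PbarS$ rather than wandering to a composition with a negative coordinate — which is exactly what the sign condition on absent species, supplied by Lemma \ref{lemma:not_reaction_transitive}, rules out. The only mild subtlety is that one must take $\theta$ small enough to protect the strictly-positive coordinates of $\bar c$, and this is possible precisely because $\scrS$ is finite.
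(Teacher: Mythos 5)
Your proposal is correct and is essentially the paper's own argument: set $c^{\dagger} = \bar{c} + \theta f(\bar{c})$ for sufficiently small $\theta > 0$, using Lemma \ref{lemma:not_reaction_transitive} to handle species with $\bar{c}_s = 0$ and smallness of $\theta$ to protect those with $\bar{c}_s > 0$, then take $\lambda = 1/\theta$. The separate treatment of the case $f(\bar{c}) = 0$ is harmless but unnecessary, since the uniform construction already covers it.
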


\begin{proof} Choose $\theta > 0$ sufficiently small as to ensure that
\begin{equation}
\label{eq:cdagger1}
\bar{c}_s + \theta f_s(\bar{c}) > 0, \quad \forall s \in \supp \bar{c}.
\end{equation}
For $s \notin \supp \bar{c}$ we have $\bar{c}_s = 0$ and, by Lemma \ref{lemma:not_reaction_transitive}, $f_{s}(\bar{c}) \ge 0$, so 
\begin{equation}
\label{eq:cdagger2}
\bar{c}_s + \theta f_s(\bar{c})  \ge 0, \quad  \forall s \notin \supp\bar{c}.
\end{equation}
Now let $c^{\dagger} := \bar{c} + \theta f(\bar{c})$. From \eqref{eq:cdagger1} and  \eqref{eq:cdagger2} it follows that $c^{\dagger}$ is a member of \PbarS. Because $f(\bar{c})$ is a member of $S$, it follows that  $c^{\dagger}$ is a member of the stoichiometric compatibility class containing $\bar{c}$. Taking $\lambda = \frac{1}{\theta}$, we get the desired result.
\end{proof}

\begin{proof}[Proof of Theorem \ref{thm:PositiveEquilibrium}] Let $K \subset \PbarS$ be a stoichiometric compatibility class. That is, there is a  composition $c^{\#} \in \PbarS$ such that
\begin{equation}
K = (c^{\#} + S) \cap \PbarS. \nonumber
\end{equation}
Because the network \rnet\,is conservative, $K$ is compact \cite{horn_general_1972}, and it is easily seen to be convex. Let \mbox{$g:K \to (c^{\#} + S)$} be defined by $g(c) := f(c)\,+\,c$, where $f(\cdot)$ is the species formation function for the kinetic system \kinsys. From Lemma \ref{lemma:LemmaForBrowder} it is apparent that for each composition $\bar{c}$ on the boundary of $K$ there is a number $\lambda$ and a composition $c^{\dagger} \in K$ (both depending perhaps on $\bar{c}$) such that $g(\bar{c}) - \bar{c} = \lambda(c^{\dagger} - \bar{c})$. In this case, it follows from a theorem of Browder \cite{Browder1967} that $g(\cdot)$ has a fixed point in $K$ --- that is, a composition $c^* \in K$ such that $g(c^*) = c^*$. But then we must have $f(c^*)= 0$, which is to say that there is an equilibrium in $K$.

	If the network is weakly reversible and concordant and if the stoichiometric compatibility class $K$ in nontrivial, it follows from Corollary \ref{corr:concordant_no-boundary-equil} that $c^*$ is not on the boundary of $K$; it must be positive. If, in addition, the kinetics is weakly monotonic, the fact that there cannot be another equilibrium in $K$ follows from Proposition \ref{prop:ConcordanceInjectivity}.
\end{proof}

\section{Concordance and properties of a network's fully open extension; eigenvalues}
\label{sec:FullyOpenSection}

	By the \emph{fully open extension} of reaction network \rnet we mean the network obtained from the original network by adjoining to it the species degradation reactions $\{s \to 0: s \in \scrS\}$ if those reactions are not already present. More precisely, the fully open extension of \rnet\; is the network $\{\scrS, \tilde{\scrC}, \tilde{\scrR} \}$, where\footnote{Note that in the definition of $\tilde{\scrC}$ we are viewing \scrS\  as a set in \RS --- that is, as a surrogate for the standard basis of \RS.  See Section \ref{sec:CRNTPreliminaries}. }
\begin{equation}
\label{eq:FullyOpenExtNotation}
\tilde{\scrC} := \scrC \cup \scrS \cup \{ 0 \}\quad \textrm{and} \quad \tilde{\scrR} := \scrR \cup \{s \to 0: s \in \scrS\}.
\end{equation}

	In this section we examine implications of concordance in a reaction network and in its fully open extension. We begin by noting that a network need not be concordant even when its fully open extension is concordant.  The simple network $B \leftarrow A \rightarrow C$ provides an example. 

\begin{rem}[\emph{Why study a network's fully open extension?}]

	There are two reasons to examine the concordance of a network's fully open extension. 	The first of these is that concordance of the fully open extension gives partial information about the \emph{stability} properties of equilibria for kinetic systems derived from the \emph{original} network, at least when the kinetics satisfies certain modest requirements. This we shall see in Theorem \ref{thm:EigenvalueThm} and in Section \ref{sec:Discordance}.

	The second reason is connected to inferences one can make from the original network's Species-Reaction Graph. As we shall see in a subsequent article, properties of a network's Species-Reaction Graph will often indicate that the \emph{fully open extension} of a network is concordant. For this reason, we will want to know when fully-open-network concordance implies concordance of the original network. This is the subject of Theorem \ref{thm:SmallToLargeConcordanceThm}, which also has some stability implications, discussed in Section \ref{sec:Discordance}.
\end{rem}

\medskip

	In a \emph{mass action} context, connections were made in \cite{craciun_multiple_2010} between injectivity of a reaction network and of its fully open extension. Here we will be able to generalize those results substantially. By way of preparation, we record the following definition:

\begin{definition}
\label{def:normal}
Consider a reaction network \rnet \  with stoichiometric subspace $S$. The network is $normal$ if there  are $q \in \PS$ and $\eta \in \PR$ such that the linear transformation $T: S \to S$ defined by
\begin{equation}
T \sigma := \sum_{y \to y' \in \scrR}\eta_{y \to y'}(y * \sigma)(y' - y)
\end{equation}
is nonsingular, where ``$*$" is the scalar product in \RS \  defined by
\begin{equation}
x*x' := \sum_{s \in \scrS}q_sx_sx'_s. 
\end{equation}
\end{definition}

\begin{rem}
The normality condition is not difficult to satisfy. In fact, it was shown in \cite{craciun_multiple_2010} that \emph{every weakly reversible network is normal}, regardless of the complexes. Weaker sufficient conditions for network normality are also given in \cite{craciun_multiple_2010}.
\end{rem}
\medskip

	We are now in a position to state a theorem that has considerable significance when placed alongside another one, to appear in a forthcoming article, that indicates how properties of a network's Species-Reaction Graph can ensure concordance of the network's fully open extension. In that theorem, however, concordance of the original network is left undecided. The following theorem completes the picture when the original network is normal. As we shall see, Theorem \ref{thm:SmallToLargeConcordanceThm} also has some stability consequences.  A proof is provided in Appendix C.

\begin{theorem}
\label{thm:SmallToLargeConcordanceThm}
A normal network is concordant if its fully open extension is concordant. In particular, a weakly reversible network is concordant if its fully open extension is concordant.
\end{theorem}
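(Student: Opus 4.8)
The plan is to argue by contraposition: assuming the original normal network \rnet\ is \emph{discordant}, I will produce an $\alpha$ and $\sigma$ certifying that the fully open extension $\{\scrS,\tilde\scrC,\tilde\scrR\}$ is discordant as well. So suppose there are $\alpha \in \ker L$ and a nonzero $\sigma \in S$ satisfying conditions (i) and (ii) of Definition \ref{def:Concordant} for the original network. The fully open extension has reaction set $\tilde\scrR = \scrR \cup \{s \to 0 : s \in \scrS\}$, so a vector in $\RR[\tilde\scrR]$ splits as $(\tilde\alpha, \beta)$, where $\tilde\alpha$ is indexed by the original reactions and $\beta \in \RS$ is indexed by the degradation reactions. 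Since the reaction vector of $s \to 0$ is $-\omega_s$, the map $\tilde L$ for the extension sends $(\tilde\alpha,\beta) \mapsto L\tilde\alpha - \beta$; hence the kernel condition $\tilde L(\tilde\alpha,\beta)=0$ becomes $\beta = L\tilde\alpha$. The stoichiometric subspace of the fully open extension is all of \RS, so for the extension I get to choose $\sigma$ from all of \RS\ --- in particular I may keep the \emph{same} $\sigma$.

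The key idea is to exploit normality. The obvious first attempt --- take $\tilde\alpha = \alpha$, $\beta = L\alpha = 0$ --- fails: the degradation reaction $s \to 0$ has reactant complex $\omega_s$, and condition (ii) of Definition \ref{def:Concordant} applied to a zero-coordinate degradation reaction would force $\sigma_s = 0$ for \emph{every} $s$, i.e. $\sigma = 0$, a contradiction. So I need the degradation coordinates $\beta_s$ to be nonzero exactly where $\sigma_s \neq 0$, and with $\sgn \beta_s = \sgn \sigma_s$ (so that (i) holds for those reactions). That is, I want to realize a $\beta$ with $\beta = L\tilde\alpha$ and $\sgn\beta = \sgn\sigma$. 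This is precisely where normality enters: by Definition \ref{def:normal} there is $q \in \PS$ and $\eta \in \PR$ so that $T\sigma := \sum_{y\to y'}\eta_{y\to y'}(y*\sigma)(y'-y)$ is a nonsingular map $S \to S$; in particular, taking $\tilde\alpha_{y\to y'} := \eta_{y\to y'}(y * \sigma)$, I get $L\tilde\alpha = T\sigma$, and I should like $T\sigma$ to "point like $\sigma$." The cleanest route is to observe that a suitable scalar multiple, or a convex combination of $\alpha$ (which lies in $\ker L$, so contributes nothing to $L\tilde\alpha$) with a multiple of the normality vector, gives a $\tilde\alpha$ with $L\tilde\alpha$ equal to $T\sigma$ scaled appropriately; the point of $\ker L$-freedom is that I can add any multiple of $\alpha$ to adjust the reaction-coordinate signs without disturbing $\beta$.

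The remaining work is bookkeeping to verify conditions (i) and (ii) of Definition \ref{def:Concordant} for the extension with the pair $((\tilde\alpha,\beta),\sigma)$: for degradation reactions $s\to 0$ with $\beta_s \neq 0$, check $\sgn\sigma_s = \sgn\beta_s$ (condition (i) --- the reactant complex is just $\{s\}$); for $\beta_s = 0$, check $\sigma_s = 0$; and for the \emph{original} reactions $y\to y'$, transfer conditions (i) and (ii) from the discordance certificate of \rnet, using that each coordinate of $\tilde\alpha$ either equals $\alpha_{y\to y'}$ (if I can take $\tilde\alpha = \alpha$) or has the same sign behavior after the convex-combination adjustment. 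The last sentence of the theorem is then immediate from the Remark that every weakly reversible network is normal. I expect the \textbf{main obstacle} to be the sign-matching on the original reactions: after replacing $\alpha$ by a combination designed to make $L\tilde\alpha$ match $\sgn\sigma$, I must be sure I have not flipped or zeroed a reaction coordinate in a way that breaks condition (i) or (ii) for \rnet --- this is exactly the kind of delicate sign-coordination that the concordance definition is built to track, and getting the convex combination (or the choice of $\eta$ and the scalar) right is the crux.
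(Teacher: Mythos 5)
Your overall strategy --- contraposition, augmenting the discordance certificate $(\alpha,\sigma)$ of the original network with degradation coordinates, and invoking normality to land in the kernel of the extension's map --- is the same as the paper's, but the crux of your construction has a genuine gap. You push $\sigma$ \emph{forward} through the normality map: taking $\tilde\alpha_{y\to y'} = M\alpha_{y\to y'} + \eta_{y\to y'}(y*\sigma)$ forces the degradation coordinates to be $\beta = T\sigma$, and for a degradation reaction $s \to 0$ (whose reactant support is the singleton $\{s\}$) Definition \ref{def:Concordant} then demands $\sgn(T\sigma)_s = \sgn\sigma_s$ for \emph{every} $s$, including $(T\sigma)_s = 0$ whenever $\sigma_s = 0$. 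Normality gives only that $T$ is nonsingular; it says nothing about $T$ preserving sign patterns, and the repairs you allow yourself cannot produce this: a positive rescaling does not change signs, and adding multiples of $\alpha$ alters only the reaction coordinates, never $\beta = L\tilde\alpha$ (as you note yourself). So the step ``I should like $T\sigma$ to point like $\sigma$'' is exactly where the argument breaks, whereas the obstacle you flag as the main one --- sign control on the original reactions --- is actually the easy part: with $M$ large the nonzero $\alpha$-coordinates keep their signs, and when $\alpha_{y\to y'}=0$ either $y*\sigma \neq 0$ supplies a species in $\supp y$ whose sign matches $\tilde\alpha_{y\to y'}$, or $y*\sigma = 0$ with $\sigma$ not vanishing on $\supp y$ supplies two species of opposite sign.

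The paper resolves the difficulty by going in the opposite direction: rather than applying $T$ to $\sigma$, it solves $T\chi(\epsilon) = \sigma + \epsilon\chi(\epsilon)$ with $\chi(\epsilon) := [T-\epsilon I]^{-1}\sigma$, and it \emph{replaces} $\sigma$ by $\tilde\sigma := \sigma + \epsilon\chi(\epsilon)$, with $\epsilon>0$ small enough that $\sgn\tilde\sigma_s = \sgn\sigma_s$ wherever $\sigma_s \neq 0$; the reaction coordinates are $M\alpha_{y\to y'} + \eta_{y\to y'}(y*\chi(\epsilon))$ and the new degradation coordinates are exactly $\tilde\sigma_s$, so the sign match at the degradation reactions holds by construction. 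Note also that you cannot simply ``keep the same $\sigma$'' even in the pull-back version with $\chi = T^{-1}\sigma$: for a reaction with $\alpha_{y\to y'} = 0$ and $\sigma$ vanishing on $\supp y$ but $y*\chi \neq 0$, the coordinate $\tilde\alpha_{y\to y'}$ is nonzero while $\sigma$ offers no species of matching sign in $\supp y$; the $\epsilon$-perturbation is what rescues condition (i) there, since $\tilde\sigma$ then carries the signs of $\chi(\epsilon)$ on $\supp y$. Finally, a complete argument must also handle degradation reactions $s \to 0$ that already belong to \scrR\ (the paper's set \scrM), whose extension coordinates merge $M\alpha_{s\to 0}$ and $\eta_{s\to 0}(s*\chi(\epsilon))$ with $\tilde\sigma_s$; your proposal does not address this bookkeeping.
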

\medskip

	So far in this section all considerations were about the network alone, divorced from a kinetics. Now we will suppose that a network \rnet \, is endowed with a kinetics \scrK such that the resulting kinetic system \kinsys \, admits a positive equilibrium $c^*$.  We ask what network concordance ---  of either the network or of its fully open extension --- might tell us about eigenvalues of $df(c^*):S \to S$, where $S$ is the stoichiometric subspace for the network, $f(\cdot)$ is the species formation rate function for the kinetic system, and $df(c^*)$ is the derivative of $f(\cdot)$ at $c^*$.

	In this case, we will require that the kinetics be ``differentiably monotonic" at $c^*$. (It is easy to see that every mass action kinetics for a network is differentiably monotonic at every positive composition.)

\begin{definition} A kinetics \scrK \, for a reaction network \rnet\,is \emph{differentiably monotonic} at $c^* \in \PS$ if, for every  reaction $\rxn \in \scrR$, $\scrK_{\rxn}(\cdot)$ is differentiable at $c^*$ and, moreover, for each species $s \in \scrS$,
\begin{equation}
\frac{\partial \scrK_{\rxn}}{\partial c_s}(c^*) \ge 0,
\end{equation}
with inequality holding if and only if $s \in \supp y$. A \emph{differentiably monotonic kinetics} is one that is differentiably monotonic at every positive composition.
\end{definition}

\begin{theorem} 
\label{thm:EigenvalueThm}
Let \kinsys \  be a kinetic system with stoichiometric subspace $S$ and species formation rate function $f: \PbarS \to S$. Moreover, suppose that the kinetics is differentiably monotonic at $c^* \in \PS$. If the underlying network \rnet is concordant then the derivative $df(c^*): S \to S$ is nonsingular (whereupon 0 is not one of its eigenvalues). If the network's fully open extension is concordant then no real eigenvalue of  $df(c^*)$ is positive.
\end{theorem}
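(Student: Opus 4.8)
\emph{The plan is to prove Theorem~\ref{thm:EigenvalueThm} by contraposition in each of its two parts: from a nonzero vector in $\ker df(c^*)$ --- respectively, from an eigenvector for a positive real eigenvalue --- I will build a pair $(\alpha,\sigma)$ that exhibits discordance of the underlying network --- respectively, of its fully open extension --- in the exact sense of Definition~\ref{def:Concordant}.}

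\emph{Nonsingularity.} Suppose $df(c^*)\sigma = 0$ for some nonzero $\sigma \in S$. Differentiating $f(c)=\sum_{\rxn\in\scrR}\scrK_{\rxn}(c)(y'-y)$ at $c^*$ gives $0 = df(c^*)\sigma = \sum_{\rxn\in\scrR}\alpha_{\rxn}(y'-y)$, where $\alpha_{\rxn} := \sum_{s\in\scrS}\frac{\partial\scrK_{\rxn}}{\partial c_s}(c^*)\,\sigma_s$, so $\alpha\in\ker L$. By differentiable monotonicity at $c^*$, the derivative $\frac{\partial\scrK_{\rxn}}{\partial c_s}(c^*)$ is strictly positive when $s\in\supp y$ and is zero otherwise, so $\alpha_{\rxn}$ is a strictly-positive-coefficient combination of the numbers $\{\sigma_s : s\in\supp y\}$. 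Reading off signs: if $\alpha_{\rxn}>0$, some $\sigma_s$ with $s\in\supp y$ is positive; if $\alpha_{\rxn}<0$, some such $\sigma_s$ is negative; and if $\alpha_{\rxn}=0$, then either all those $\sigma_s$ vanish or two of them are nonzero of opposite sign. These are exactly conditions (i) and (ii) of Definition~\ref{def:Concordant}, so $(\alpha,\sigma)$ contradicts concordance; hence $df(c^*)$ is nonsingular.

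\emph{No positive real eigenvalue.} Now suppose $df(c^*)$ has a real eigenvalue $\lambda>0$, with a nonzero eigenvector $\sigma\in S$. The fully open extension $\{\scrS,\tilde{\scrC},\tilde{\scrR}\}$ has stoichiometric subspace all of $\RS$ (its degradation reaction vectors $0-s=-\omega_s$ already span $\RS$), so $\sigma$ is a nonzero vector of that subspace. For $\rxn\in\scrR$ I keep the same $\alpha_{\rxn}$ as above, so $\sum_{\rxn\in\scrR}\alpha_{\rxn}(y'-y) = \lambda\sigma$, and to each degradation reaction $s\to 0$ I assign the coefficient $\lambda\sigma_s$ (added to $\alpha_{s\to 0}$ if $s\to 0$ already belonged to $\scrR$). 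Since $\sigma=\sum_{s}\sigma_s\omega_s$, summing all coefficients against their reaction vectors over the full reaction set of the fully open extension yields $\lambda\sigma + \sum_{s}\lambda\sigma_s(0-s) = \lambda\sigma-\lambda\sigma = 0$, so the augmented coefficient vector $\tilde\alpha$ lies in the kernel of the fully open extension's $L$-map. On the ``true'' reactions the sign analysis is verbatim as in the first part. On a degradation reaction $s\to 0$, the reactant complex has support $\{s\}$, and its coefficient is a strictly positive multiple of $\sigma_s$ --- strict positivity being precisely where $\lambda>0$ enters, and where differentiable monotonicity guarantees that any pre-existing $\alpha_{s\to 0}$ term only reinforces the sign --- so the coefficient has the same sign as $\sigma_s$; conditions (i)--(ii) then hold trivially for that reaction (take the single species $s$ when the coefficient is nonzero, and note $\sigma_s=0$ when it vanishes). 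Thus $(\tilde\alpha,\sigma)$ contradicts concordance of the fully open extension, completing the proof.

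\emph{Anticipated obstacle.} No estimate or limiting argument is involved; the whole proof is the choice of witnesses and a sign count. The one spot demanding genuine care is the second part, where one must (a) handle cleanly degradation reactions already present in $\scrR$ and (b) --- the conceptual point --- recognize that it is exactly the single-species support of each degradation complex, together with $\lambda>0$, that makes the discordance conditions on those reactions automatic. The hypothesis $\lambda>0$ cannot be relaxed to $\lambda\ge 0$: the case $\lambda=0$ is already settled by the first part, while a negative $\lambda$ would give the degradation coefficients $\lambda\sigma_s$ signs opposite to $\sigma_s$, and the construction would fail.
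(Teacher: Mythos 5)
Your proof is correct and follows essentially the same route as the paper's: in both parts you take $\alpha_{\rxn}$ to be the gradient of $\scrK_{\rxn}$ at $c^*$ dotted with the eigenvector, read off the discordance witnesses from the sign structure forced by differentiable monotonicity, and in the second part augment with coefficients $\lambda\sigma_s$ on the degradation reactions (merging with any pre-existing $s\to 0\in\scrR$), exactly as in the paper's Appendix proof. No gaps.
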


	Proof of Theorem \ref{thm:EigenvalueThm} is given in Appendix \ref{App:EigenvalueProofs}.

\begin{corollary}
\label{cor:NegativeEigenvalues}
Let \kinsys \  be a kinetic system with stoichiometric subspace $S$ and species formation rate function $f: \PbarS \to S$. Moreover, suppose that the kinetics is differentiably monotonic at $c^* \in \PS$. If  the underlying network \rnet is normal ---  in particular, if it is weakly reversible --- and if the network's fully open extension is concordant, then every real eigenvalue of $df(c^*)$ is negative.
\end{corollary}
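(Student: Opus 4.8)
The plan is to assemble the corollary directly from Theorems~\ref{thm:SmallToLargeConcordanceThm} and \ref{thm:EigenvalueThm}, since all the substantive content has already been established there (with proofs in Appendices C and~\ref{App:EigenvalueProofs}). First I would invoke Theorem~\ref{thm:SmallToLargeConcordanceThm}: because the underlying network \rnet\ is assumed normal --- and the remark following Definition~\ref{def:normal} records that weak reversibility is a special case of normality, which takes care of the ``in particular'' clause --- concordance of its fully open extension forces concordance of \rnet\ itself. Thus, under the hypotheses of the corollary, we may take for granted that \emph{both} the original network and its fully open extension are concordant.

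Next I would apply Theorem~\ref{thm:EigenvalueThm} twice to the kinetic system \kinsys, whose kinetics is by hypothesis differentiably monotonic at $c^* \in \PS$. From concordance of the original network, the first assertion of that theorem gives that $df(c^*) : S \to S$ is nonsingular, so $0$ is not an eigenvalue of $df(c^*)$. From concordance of the fully open extension, the second assertion gives that no real eigenvalue of $df(c^*)$ is strictly positive. Combining the two conclusions: any real eigenvalue $\lambda$ of $df(c^*)$ satisfies $\lambda \le 0$ (no positive real eigenvalue) and $\lambda \ne 0$ (nonsingularity), hence $\lambda < 0$, which is exactly what is claimed.

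Since the argument is just a chaining of two previously proved theorems, there is no genuine obstacle at this stage; the real work has been displaced into the proofs of Theorems~\ref{thm:SmallToLargeConcordanceThm} and \ref{thm:EigenvalueThm} --- in particular the normality-based transfer of concordance from the fully open extension down to \rnet, and the spectral argument that bounds real eigenvalues of $df(c^*)$ using concordance of the fully open extension. The one point to check carefully here is that the hypotheses line up precisely: that ``normal'' is exactly the property Theorem~\ref{thm:SmallToLargeConcordanceThm} requires of \rnet, and that differentiable monotonicity \emph{at $c^*$} (not globally) suffices to feed Theorem~\ref{thm:EigenvalueThm}. Both match, so the corollary follows with no further hypotheses needed.
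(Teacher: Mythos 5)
Your argument is correct and is essentially identical to the paper's own proof: Theorem \ref{thm:SmallToLargeConcordanceThm} transfers concordance from the fully open extension to the normal (in particular, weakly reversible) network, and Theorem \ref{thm:EigenvalueThm} then rules out both a zero and a positive real eigenvalue of $df(c^*)$, leaving only negative real eigenvalues. No gaps; the hypotheses line up exactly as you checked.
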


\begin{proof} 
If the underlying network is normal then, by virtue of Theorem \ref{thm:SmallToLargeConcordanceThm},
its concordance follows from concordance of the fully open extension. In this case Theorem \ref{thm:EigenvalueThm} ensures that $df(c^*)$ is nonsingular, so $0$ cannot be among its eigenvalues. By the same theorem, concordance of the fully open extension then ensures that all real eigenvalues are negative. 
\end{proof}

\section{Consequences of discordance}
\label{sec:Discordance}

	Proposition \ref{prop:ConcordanceInjectivity} told us that a concordant network, taken with \emph{any} weakly monotonic kinetics, invariably gives rise to an injective kinetic system.  In this sense, concordance is a \emph{network} attribute that enforces injectivity for \emph{all} weakly monotonic kinetics. Proposition  \ref{prop:partialconverse} went even further: It told us that \emph{no discordant network has this same property}. 

	The observations in this section are in the same spirit, this time as counterpoints to Theorem \ref{thm:EigenvalueThm} and its corollary. Those told us that a normal network with concordant fully open extension, taken with \emph{any} differentiably monotonic kinetics, will invariably have some degree of stability at any positive equilibrium it might admit, in the sense that every real eigenvalue is negative. Here we shall see that no network with discordant fully open extension --- \emph{in particular, no weakly reversible discordant network} --- can have this same property (except in the trivial case for which there are no positive equilibria at all). 

	Because our interest will be in positive equilibria, our focus will be on networks with positively dependent reaction vectors. (Recall from Section \ref{sec:CRNTPreliminaries} that a kinetic system can admit a positive equilibrium only if the reaction vectors for the underlying network are positively dependent.) Proof of the following theorem is provided in Appendix \ref{App:EigenvalueProofs}.

\begin{theorem} 
\label{thm:DiscordanceEigenvalueThm}  Consider a reaction network with positively dependent reaction vectors. If the network is discordant, then there exists for it a differentiably monotonic kinetics such that the resulting kinetic system admits a positive \emph{degenerate} equilibrium. If the network's fully open extension is discordant then there exists for the original network a 
differentiably monotonic kinetics such that the resulting kinetic system 
admits an unstable positive equilibrium --- in fact, a positive equilibrium associated with a positive real eigenvalue.
\end{theorem}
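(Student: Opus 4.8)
The plan is to prove both assertions by exhibiting explicit \emph{power-law} (generalized mass action) kinetics, which are differentiably monotonic at every positive composition and whose derivative at the all-ones composition $\mathbf{1}\in\PS$ is transparent. For positive exponents $p_{\rxn,s}$ (one for each reaction $\rxn\in\scrR$ and each $s\in\supp y$) and positive numbers $\kappa_{\rxn}$, I will use the kinetics $\scrK_{\rxn}(c):=\kappa_{\rxn}\prod_{s\in\supp y}c_s^{p_{\rxn,s}}$. One checks at once that this is a legitimate kinetics (Definition~\ref{DEF:Kinetics}), that it is differentiably monotonic everywhere, that $\scrK_{\rxn}(\mathbf{1})=\kappa_{\rxn}$, and that, with $L$ the map of \eqref{eq:LDefinition},
\begin{equation*}
df(\mathbf{1})\tau \;=\; L\gamma, \qquad \gamma_{\rxn}:=\kappa_{\rxn}\!\!\sum_{s\in\supp y}p_{\rxn,s}\,\tau_s, \qquad \forall\,\tau\in S.
\end{equation*}
If, in addition, the $\kappa_{\rxn}$ are chosen so that $L\kappa=0$ --- possible because the reaction vectors are positively dependent --- then $f(\mathbf{1})=L\kappa=0$, so $\mathbf{1}$ is a positive equilibrium.

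For the first assertion, discordance of the network provides $\alpha\in\ker L$ and a nonzero $\sigma\in S$ satisfying conditions (i) and (ii) of Definition~\ref{def:Concordant}. I then choose the exponents so that $\sum_{s\in\supp y}p_{\rxn,s}\,\sigma_s=\alpha_{\rxn}/\kappa_{\rxn}$ for every $\rxn\in\scrR$; this makes $\gamma=\alpha$ when $\tau=\sigma$ in the displayed formula, so $df(\mathbf{1})\sigma=L\alpha=0$ with $0\neq\sigma\in S$, forcing $df(\mathbf{1})$ to be singular and hence $\mathbf{1}$ to be a positive \emph{degenerate} equilibrium. The exponents are selected reaction by reaction. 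When $\alpha_{\rxn}\neq0$, condition (i) supplies an $s_0\in\supp y$ with $\sgn\sigma_{s_0}=\sgn\alpha_{\rxn}$; assigning every other species of $\supp y$ a common, sufficiently small positive exponent and then solving the resulting scalar equation for $p_{\rxn,s_0}$ attains the prescribed value $\alpha_{\rxn}/\kappa_{\rxn}$ (whose sign is $\sgn\sigma_{s_0}$) with $p_{\rxn,s_0}>0$. When $\alpha_{\rxn}=0$, condition (ii) gives either $\sigma_s=0$ for all $s\in\supp y$ (any positive exponents work) or a pair of species in $\supp y$ on which $\sigma$ takes opposite nonzero signs, whose contributions can be balanced to cancel, the zero-$\sigma$ species contributing nothing.

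For the second assertion I start from discordance of the fully open extension, whose stoichiometric subspace is all of $\RS$ since its reaction vectors include $-\omega_s$ for every $s$. Fix a witnessing pair: $\tilde\alpha$ in the kernel of the analogue of $L$ for that network, and a nonzero $\tilde\sigma\in\RS$, both meeting (i) and (ii). Put $\alpha:=\tilde\alpha|_{\scrR}$ and let $\beta\in\RS$ have components $\beta_s:=\tilde\alpha_{s\to0}$. Because the degradation reaction $s\to0$ has reaction vector $-\omega_s$, the kernel condition on $\tilde\alpha$ says precisely $L\alpha=\beta$, so $\beta\in S$; and applying (i) and (ii) to each $s\to0$, whose reactant support is the singleton $\{s\}$, forces $\sgn\tilde\sigma_s=\sgn\beta_s$ for every $s$ --- condition (i) when $\beta_s\neq0$, and condition (ii) (which cannot produce two opposite signs inside a singleton) forcing $\tilde\sigma_s=0$ when $\beta_s=0$. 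Hence $\beta\neq0$ because $\tilde\sigma\neq0$, and, since (i) and (ii) involve only sign vectors, the pair $(\alpha,\beta)$ satisfies (i) and (ii) relative to the \emph{original} reaction set $\scrR$. Now I repeat the construction: choose positive $\kappa$ with $L\kappa=0$, take the power-law kinetics above, and --- by the identical reaction-by-reaction argument, now applied to $(\alpha,\beta)$ --- choose the exponents so that $\sum_{s\in\supp y}p_{\rxn,s}\,\beta_s=\alpha_{\rxn}/\kappa_{\rxn}$ for every $\rxn\in\scrR$. Then $\gamma=\alpha$ when $\tau=\beta$, so $df(\mathbf{1})\beta=L\alpha=\beta$; thus $\beta$ is an eigenvector of $df(\mathbf{1}):S\to S$ for the eigenvalue $1$, and, $\beta$ being a nonzero element of $S$, the positive equilibrium $\mathbf{1}$ carries the positive real eigenvalue $1$ and is therefore unstable.

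The routine ingredients --- legitimacy and differentiable monotonicity of the power-law kinetics, the derivative formula, and equilibrium at $\mathbf{1}$ --- are immediate. The two substantive points are: that the concordance sign conditions (i) and (ii) are exactly what is needed to solve, reaction by reaction and with \emph{positive} exponents, for a prescribed real value of a given sign (one must notice that an \emph{arbitrary} real number of the right sign, not merely \emph{some} number of that sign, can be realized); and, for the second assertion, the extraction from discordance of the fully open extension of a pair $(\alpha,\beta)$ with $\beta=L\alpha$ a \emph{nonzero} element of the \emph{original} stoichiometric subspace $S$ that meets the original network's sign conditions. The point that makes this work --- the heart of the matter --- is that the singleton-reactant degradation reactions pin $\sgn\tilde\sigma$ to $\sgn\beta$, which at once guarantees $\beta\in S\setminus\{0\}$ and allows $\beta$ to serve as an eigenvector with eigenvalue $1$, so that the instability is inherited by the (possibly non-fully-open) original network.
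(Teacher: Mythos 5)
Your construction is, in substance, the paper's own proof: in both parts the paper also builds a power-law (generalized mass action) kinetics whose reaction rates at a chosen positive composition equal the positively dependent coefficients $\kappa_{\rxn}$ (so that composition is an equilibrium), and whose rate-function gradients are tailored to the discordance witness so that $df$ at the equilibrium annihilates a nonzero vector of $S$ in the first case and fixes one (eigenvalue $1$) in the second. Your specialization to $c^*=\mathbf{1}$ and your reaction-by-reaction solution for positive exponents are just an explicit version of the paper's choice of vectors $p_{\rxn}$ with $\supp p_{\rxn}=\supp y$ and $p_{\rxn}\cdot\sigma=\alpha_{\rxn}$; the first part of your argument is complete.

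In the second part there is one genuine, though easily repaired, flaw. You assert that the kernel condition on $\tilde\alpha$ ``says precisely $L\alpha=\beta$'' with $\alpha:=\tilde\alpha|_{\scrR}$ and $\beta_s:=\tilde\alpha_{s\to 0}$. That identity fails whenever the original network already contains a degradation reaction $s\to 0$ --- a case the theorem does not exclude and which occurs in practice (network \eqref{eq:CalciumNet} contains $D\to 0$). In $\tilde{\scrR}=\scrR\cup\{s\to 0: s\in\scrS\}$ such a reaction appears only once, so its coefficient $\tilde\alpha_{s\to 0}$ enters the kernel identity once, whereas your bookkeeping counts it twice, once inside $L\alpha$ and once inside $\beta$; what the kernel condition actually yields is $L\alpha=\beta-\sum\,\{\beta_s\,\omega_s : s\to 0\in\scrR\}$. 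This is exactly the point the paper disposes of in a footnote to its proof. The repair preserves everything else you do: for each $s$ with $s\to 0\in\scrR$ split the coefficient, setting $\alpha_{s\to 0}:=\tfrac12\tilde\alpha_{s\to 0}$ and $\beta_s:=\tfrac12\tilde\alpha_{s\to 0}$, while keeping $\alpha_{\rxn}:=\tilde\alpha_{\rxn}$ for the other reactions of $\scrR$ and $\beta_s:=\tilde\alpha_{s\to 0}$ for the other species. Then $L\alpha=\beta$ does hold, the signs are untouched (so still $\sgn\beta_s=\sgn\tilde\sigma_s$ for every $s$, hence $\beta$ is a nonzero member of $S$, and the pair $(\alpha,\beta)$ still satisfies the concordance sign conditions for $\scrR$, including at the degradation reactions of $\scrR$, where $\sgn\alpha_{s\to 0}=\sgn\beta_s$), and your exponent selection together with the conclusion $df(\mathbf{1})\beta=\beta$ goes through verbatim.
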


\medskip
	Theorem \ref{thm:DiscordanceEigenvalueThm} has a number of corollaries. In each corollary, it is left implicit that the unstable positive equilibrium mentioned can be taken to have associated with it a positive real eigenvalue.

\begin{corollary}
\label{cor:DiscordantPositivelyDependent}
For every discordant normal network with positively dependent 
reaction vectors there exists a 
differentiably monotonic kinetics such that the resulting kinetic system 
admits a positive unstable equilibrium.
\end{corollary}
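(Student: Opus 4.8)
The plan is to obtain Corollary~\ref{cor:DiscordantPositivelyDependent} directly from Theorem~\ref{thm:DiscordanceEigenvalueThm} and Theorem~\ref{thm:SmallToLargeConcordanceThm}, without introducing any new construction. The one observation that makes this work is that Theorem~\ref{thm:SmallToLargeConcordanceThm} may be used in contrapositive form: since ``a normal network is concordant if its fully open extension is concordant,'' it follows that a \emph{discordant} normal network necessarily has a \emph{discordant} fully open extension.

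Concretely, I would argue as follows. Let \rnet\ be a discordant normal network with positively dependent reaction vectors. Applying the contrapositive of Theorem~\ref{thm:SmallToLargeConcordanceThm} to \rnet\ (which is legitimate precisely because \rnet\ is normal) shows that the fully open extension of \rnet\ is discordant. Now invoke the second assertion of Theorem~\ref{thm:DiscordanceEigenvalueThm}: its standing hypothesis --- that the network has positively dependent reaction vectors --- holds for \rnet, and we have just verified that the fully open extension is discordant; hence there exists for \rnet\ a differentiably monotonic kinetics such that the resulting kinetic system admits an unstable positive equilibrium, in fact one associated with a positive real eigenvalue. That is exactly the conclusion of the corollary. (The parenthetical convention stated just before the corollary already tells us the positive-real-eigenvalue refinement is to be understood, so nothing further need be said about it.)

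There is no genuine obstacle at the level of the corollary: it is a two-step bookkeeping consequence of results already in hand. The real content sits upstream --- in the proof of Theorem~\ref{thm:SmallToLargeConcordanceThm} (Appendix~C), where a sign-pattern pair $(\alpha,\sigma)$ witnessing discordance of the normal network is lifted, using the normality transformation $T$, to a pair witnessing discordance of the fully open extension; and in the proof of Theorem~\ref{thm:DiscordanceEigenvalueThm} (Appendix~\ref{App:EigenvalueProofs}), where a discordance witness for the fully open extension is used to build a differentiably monotonic kinetics whose Jacobian at a constructed positive equilibrium has a positive real eigenvalue, with care taken that this equilibrium and its instability are inherited by the \emph{original} network even though degradation reactions may have been adjoined. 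Here those delicate points are entirely internal to the two cited theorems, so the corollary simply strings them together.
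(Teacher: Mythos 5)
Your proposal is correct and is essentially identical to the paper's own proof: the paper likewise applies Theorem~\ref{thm:SmallToLargeConcordanceThm} in contrapositive form to conclude that a discordant normal network has a discordant fully open extension, and then invokes the second assertion of Theorem~\ref{thm:DiscordanceEigenvalueThm}. Nothing further is needed.
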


\begin{proof} It follows from Theorem \ref{thm:SmallToLargeConcordanceThm} that every normal discordant network has a discordant fully open extension. Corollary \ref{cor:DiscordantPositivelyDependent} then follows from Theorem \ref{thm:DiscordanceEigenvalueThm}.
\end{proof}

\begin{corollary}
\label{cor:DiscordantWeaklyReversible}
For every weakly reversible discordant network there is a 
differentiably monotonic kinetics such that the resulting kinetic system 
admits a positive unstable equilibrium.
\end{corollary}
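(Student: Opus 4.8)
The plan is to deduce this corollary by chaining together three facts already in place: that a weakly reversible network has positively dependent reaction vectors, that it is normal, and that discordance of a normal network forces discordance of its fully open extension. Once these are assembled, the second assertion of Theorem \ref{thm:DiscordanceEigenvalueThm} delivers the conclusion immediately.

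First I would record that the underlying network, being weakly reversible, has positively dependent reaction vectors; this is the remark following Definition \ref{DEF:PositivelyDependent}. This is precisely the standing hypothesis under which Theorem \ref{thm:DiscordanceEigenvalueThm} operates, so the corollary is within reach. Next I would invoke the remark following Definition \ref{def:normal} --- every weakly reversible network is normal --- and then apply Theorem \ref{thm:SmallToLargeConcordanceThm} in contrapositive form: since a normal network is concordant whenever its fully open extension is concordant, a normal network that is discordant must have a discordant fully open extension. Our network is weakly reversible, hence normal, and discordant by hypothesis, so its fully open extension is discordant.

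Finally, having a network with positively dependent reaction vectors whose fully open extension is discordant, I would apply the second part of Theorem \ref{thm:DiscordanceEigenvalueThm} to conclude that there is a differentiably monotonic kinetics for the original network such that the resulting kinetic system admits an unstable positive equilibrium --- indeed one with an associated positive real eigenvalue, matching the convention stated just before the corollaries. This completes the argument.

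Since every step is a direct citation of an earlier result, there is no genuine obstacle here; the only point requiring care is to verify both hypotheses of Theorem \ref{thm:DiscordanceEigenvalueThm} --- positive dependence of the reaction vectors and discordance of the fully open extension --- before invoking it, which is exactly what the first two steps accomplish. Equivalently, one may simply observe that a weakly reversible discordant network is a discordant \emph{normal} network with positively dependent reaction vectors and appeal directly to Corollary \ref{cor:DiscordantPositivelyDependent}.
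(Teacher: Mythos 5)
Your proposal is correct and follows essentially the same route as the paper: the paper's proof notes that a weakly reversible network is normal with positively dependent reaction vectors and then cites Corollary \ref{cor:DiscordantPositivelyDependent}, whose own proof is precisely your chain through Theorem \ref{thm:SmallToLargeConcordanceThm} (in contrapositive) and the second part of Theorem \ref{thm:DiscordanceEigenvalueThm}. Your closing observation that one may appeal directly to Corollary \ref{cor:DiscordantPositivelyDependent} is exactly what the paper does.
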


\begin{proof}
A weakly reversible network is normal and, moreover, has positively dependent reaction vectors. Corollary \ref{cor:DiscordantWeaklyReversible} is then a consequence of Corollary \ref{cor:DiscordantPositivelyDependent}.
\end{proof}

	In preparation for the next corollary we point out that a concordant weakly reversible network (in fact, a reversible concordant network) can have a discordant fully open extension. Network \eqref{eq:ConcordantNetWithDiscordantExtension} provides an example, as indicated by computation\cite{Ji_toolboxWinV21}. (This is a variant of a network considered for different reasons in \cite{schlosser_theory_1994}.) 
\begin{eqnarray}
\label{eq:ConcordantNetWithDiscordantExtension}
A + B \rightleftarrows E& &C + D \rightleftarrows F\\
C \; \rightleftarrows \;2A &\rightleftarrows& D \;\rightleftarrows \;B \nonumber
\end{eqnarray}\noindent

	The following is a consequence of Proposition \ref{prop:ConcordanceInjectivity}, Corollary \ref{corr:concordant_no-boundary-equil}, and Theorem \ref{thm:DiscordanceEigenvalueThm}.

\begin{corollary}
\label{cor:ConcordantWithDiscordantExtension}
For every weakly reversible concordant reaction network with discordant fully open extension there exists for the original network a differentiably monotonic kinetics such that the resulting kinetic system (for the original network) has the following properties: There is an equilibrium that is positive, unique within its stoichiometric compatibility class, and \emph{unstable}. Moreover, at each point on the boundary of that stoichiometric compatibility class the species-formation-rate vector points inward in the sense that there is an absent species produced at strictly positive rate.
\end{corollary}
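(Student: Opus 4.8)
The plan is to obtain the corollary by assembling three results already in hand. Because the network in question is weakly reversible, its reaction vectors are positively dependent, so the standing hypothesis of Theorem~\ref{thm:DiscordanceEigenvalueThm} is satisfied. Since the network's fully open extension is discordant, that theorem supplies a differentiably monotonic kinetics \scrK\ for the original network such that the kinetic system \kinsys\ admits a positive equilibrium $c^*$ with a positive real eigenvalue of $df(c^*)$; in particular $c^*$ is unstable. Writing $S$ for the stoichiometric subspace and $K := (c^* + S)\cap\PbarS$ for the stoichiometric compatibility class of $c^*$, the class $K$ is nontrivial because $c^* \in \PS$. This $c^*$ and this \scrK\ are the equilibrium and kinetics asserted to exist; it remains to verify the uniqueness and boundary claims.

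For the boundary assertion I would invoke Corollary~\ref{corr:concordant_no-boundary-equil}. Its hypotheses — a kinetic system whose underlying network is weakly reversible and concordant — hold for \kinsys, so no composition on the boundary of the nontrivial class $K$ is an equilibrium, and at each boundary composition of $K$ the species-formation-rate vector points inward in the sense that some absent species is produced at strictly positive rate. This is exactly the final sentence of the corollary, and it also shows that any equilibrium of \kinsys\ in $K$ other than $c^*$ would have to be a positive equilibrium.

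It then remains to rule out a second \emph{positive} equilibrium in $K$, and here I would appeal to Proposition~\ref{prop:ConcordanceInjectivity}. The network is concordant; and the kinetics, being differentiably monotonic, is weakly monotonic — on \PS\ a differentiably monotonic rate function $\scrK_{\rxn}$ depends only on the concentrations of the species in $\supp y$ and is strictly increasing in each of them, which is enough to verify implications (i) and (ii) of Definition~\ref{def:WeaklyMonotonic} for every pair of positive compositions (join them by a coordinatewise-monotone path inside \PS). Hence \kinsys\ is injective, so it cannot possess two distinct stoichiometrically compatible equilibria at least one of which is positive; in particular $c^*$ is the only equilibrium stoichiometrically compatible with itself. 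Combined with the previous paragraph, $c^*$ is the unique equilibrium in $K$, and the corollary follows.

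This corollary is essentially a conjunction of earlier results, so I do not expect a deep obstacle. The one point that calls for a little care is the implicit passage from "differentiably monotonic" to "weakly monotonic" that makes Proposition~\ref{prop:ConcordanceInjectivity} applicable, together with the remark that this implication is needed only for pairs of positive compositions — so that knowing differentiable monotonicity merely on \PS\ already suffices for the uniqueness step, even though differentiable monotonicity as defined says nothing about rates on boundary faces. A secondary bookkeeping matter is simply to confirm that the hypotheses of Corollary~\ref{corr:concordant_no-boundary-equil} are all in force (weak reversibility and concordance of the original network, and nontriviality of $K$), which they are.
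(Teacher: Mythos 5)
Your proposal is correct and is essentially the paper's own argument: the paper proves this corollary simply by citing Proposition \ref{prop:ConcordanceInjectivity}, Corollary \ref{corr:concordant_no-boundary-equil}, and Theorem \ref{thm:DiscordanceEigenvalueThm} (with weak reversibility supplying positively dependent reaction vectors), exactly the assembly you carry out. Your extra care about passing from differentiable monotonicity to the weak-monotonicity implications on positive pairs is a legitimate and correctly handled detail that the paper leaves implicit (the kinetics built in Theorem \ref{thm:DiscordanceEigenvalueThm} is in fact a power-law, hence weakly monotonic, kinetics).
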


\begin{rem}
\label{rem:NotAllRichBehaviorExcluded}
 Although network concordance, through Proposition \ref{prop:ConcordanceInjectivity}, ensures the uniqueness of a (positive) equilibrium within its stoichiometric compatibility class, network \eqref{eq:ConcordantNetWithDiscordantExtension} and   Corollary \ref{cor:ConcordantWithDiscordantExtension} tell us that concordance by itself cannot also ensure that the sole equilibrium is stable.  

	We will show in another article that when a normal (in particular, a weakly reversible) network's Species-Reaction Graph satisfies the conditions mentioned in Remark \ref{rem:SRGraphConditions}, concordance of \emph{both} the network \emph{and} its fully open extension follow. In this case, the hypothesis of Corollary \ref{cor:ConcordantWithDiscordantExtension} cannot be satisfied.

	Nevertheless, even when both a network and its fully open extension are concordant, and even when the kinetics is differentiably monotonic, we still cannot preclude instability of the unique positive equilibrium: Although its associated \emph{real} eigenvalues must be negative (Corollary \ref{cor:NegativeEigenvalues}), there might also be complex eigenvalues with positive real part.  Such eigenvalues were found to be extant in  Reidl et al. \cite{reidl_model_2006}, where network \eqref{eq:CalciumNet}, taken with a weakly monotonic (almost mass action) kinetics, served as a model for calcium oscillations in olfactory cilia.\footnote{We are grateful to Sayanti Banerjee for calling this paper to our attention.} As indicated by computation \cite{Ji_toolboxWinV21}, both network \eqref{eq:CalciumNet} and its fully open extension are concordant.

	Thus, it appears that concordance of a network (and of its fully open extension) enforce a \emph{degree} of dull behavior, at least to the extent that they preclude multiple stoichiometrically compatible equilibria (and positive real eigenvalues), but the capacity for other forms of interesting behavior might nevertheless remain. 
\end{rem}

\begin{rem}[Cyclic composition trajectories]
\label{rem:Bifurcations}
	Especially when the network in Corollary \ref{cor:ConcordantWithDiscordantExtension} is conservative, in which case stoichiometric compatibility classes are compact,  the existence of a (positive) equilibrium that is both unique within its stoichiometric compatibility class and also \emph{unstable} suggests that, for at least \emph{some} some differentiably monotonic kinetics, there might be an attracting periodic orbit.

	There are indeed compelling reasons to believe that, \emph{for any weakly reversible concordant network with discordant fully open extension, there will exist a differentiably monotonic kinetics such the resulting kinetic system gives rise to a cyclic composition trajectory}. In fact, certain considerations  point to the presence of a Hopf bifurcation in connection with kinetic variations: 

	To the weakly reversible network in Corollary \ref{cor:ConcordantWithDiscordantExtension} one can assign a complex balanced mass action kinetics \cite{horn_general_1972,horn_necessary_1972,feinberg_lectureschemical_1979}.  In this case the resulting mass action system will have the property that the unique equilibrium in each nontrivial stoichiometric compatibility class has associated with it eigenvalues\footnote{We are referring here to eigenvalues associated with eigenvectors in the stoichiometric subspace.} with negative real parts. On the other hand, Theorem \ref{thm:DiscordanceEigenvalueThm} ensures that, for the same network, there is another differentiably monotonic kinetics that gives rise to a nontrivial stoichiometric compatibility class containing a unique (positive) equilibrium associated with a positive real eigenvalue. Along a suitably parameterized transition from the first differentiably monotonic kinetics to the second, one therefore expects that a pair of complex-conjugate eigenvalues will cross the imaginary axis. (So long as the parameterized kinetics remains differentiably monotonic, it follows from Theorem \ref{thm:EigenvalueThm} that the crossing cannot be through zero.)

	Similar considerations apply even in the case of a weakly reversible concordant network with concordant fully open extension, provided that, for the original network, there is a differentiably monotonic kinetics for which there is a positive equilibrium having eigenvalues with positive real part. However, for a concordant network with concordant fully open extension there is no guarantee, of the kind  provided by Corollary \ref{cor:ConcordantWithDiscordantExtension}, that such a kinetics will exist.
\end{rem}

\section{Strong concordance and other generalizations}
\label{sec:generalizations}

	Motivated by a kinetic class considered in work by Banaji and Craciun, we introduce in this section the notion of \emph{strong concordance} and examine its consequences. Then we consider still other variants of the concordance idea.

\subsection{Strong concordance and two-way weakly monotonic kinetics}
\label{subsec:StrongConcordanceAndTwoWay}
	Some --- but certainly not all --- of what we have done so far relied on the supposition of \emph{weakly monotonic kinetics}, in which an increase in the rate of a particular reaction requires an increase in the concentration of at least one of its \emph{reactant} species. In particular, we argued in Proposition \ref{prop:ConcordanceInjectivity} that, when the kinetics is weakly monotonic, the species-formation-rate function for a concordant network is invariably injective and that, as a consequence, multiple stoichiometrically compatible equilibria, with at least one positive, are impossible. There was no requirement that there be a degradation reaction for every species.

	Although Banaji et al. \cite{banaji_PMatrix_2007} and Banaji and Craciun \cite{banaji_graph-theoretic_2010} \emph{do} require a complete supply of degradation reactions to get injectivity results, the kinetic requirements they invoke are, in most ways, significantly weaker. We shall be more specific about the meaning of what they call \emph{nonautocatalytic (NAC) kinetics}  in Remark \ref{rem:Banaji_CraciunDifficulties}, but it suffices here to say that, with NAC kinetics, the rate function for a particular reaction is required to have the property that an increase [decrease] in the concentration of one of its reactant [product] species --- keeping all other concentrations fixed --- cannot result in a decrease of the reaction rate.\footnote{Without much loss of generality, Banaji and Craciun presume that no reaction has a species common to both its reactant and product complexes.} (Species that, for the reaction, are neither reactants nor products are not permitted to have an effect on the rate.) When the stoichiometric matrix is ``strongly sign determined" --- i.e., every one of its square submatrices has a certain property ---  Banaji and Craciun are also able to assert injectivity and the absence of multiple steady states, but, again, only in the fully open setting. (In the fully open setting, the issue of stoichiometric compatibility becomes moot.)

	With weakly monotonic kinetics, then, the rate of a particular reaction can be influenced (positively) only by an \emph{increase} in the concentration of a \emph{reactant} species.  The ``influences" that NAC kinetics admits are broader: The rate of a reaction can \emph{also} be influenced (positively) by a \emph{decrease} in the concentration of a \emph{product} species. In rough terms, NAC kinetics admits ``two-way influences": It embraces product inhibition, while weakly monotonic kinetics does not. 

	For the most part, it is the purpose of this section to show that, without much difficulty, Proposition \ref{prop:ConcordanceInjectivity} can be generalized to subsume such two-way influences in the kinetics, provided that network concordance is replaced by \emph{strong concordance}.  We will assert that, when the kinetics is \emph{two-way weakly monotonic}, the species-formation-rate function for a \emph {strongly concordant} network is invariably injective and that, as a consequence, multiple stoichiometrically compatible equilibria, with at least one positive, are impossible. As was the case with Proposition \ref{prop:ConcordanceInjectivity} (and in contrast to the otherwise broad Banaji and Craciun results) there will be no requirement of a degradation reaction for every species.

	We begin by making precise what we mean by a two-way weakly monotonic kinetics. Hereafter, following Banaji and Craciun, \emph{we restrict our attention to networks having the property that no reaction has a species common to both its reactant and product complexes.}  In practical terms, this is a very small price to pay for the loss of ambiguity in speaking of a reactant or product species for a particular reaction.

	The following definition is intended to describe a kinetics in which an increase in the rate of a particular reaction requires that there be either an increase in the concentration of a reactant species or a decrease in the concentration of a product species. And for the rate to remain unchanged, it is necessary that there be no change in the concentrations of any reactant species or else that there be appropriately opposing changes in concentrations of the reactant and/or product species.

\begin{definition}
\label{DEF:TwoWayWeaklyMonotonic}
A kinetics $\mathscr{K}$ for a reaction network \rnet is \  \emph{two-way weakly monotonic}  if, for each pair of compositions $c^*$ and $c^{**}$, the following implications hold for each reaction $y \to y' \in \mathscr{R}$ such that $\supp y \subset \supp c^*$ and $\supp y \subset \supp c^{**}$:

\begin{enumerate}[(i)]
\item{$\mathscr{K}_{y \to y'}(c^{**}) > \mathscr{K}_{y \to y'}(c^*) \Rightarrow$ there is a species $s$ such that $\sgn (c^{**}_s - c^*_s) = \sgn (y - y')_s \neq 0$.}

\item{$\mathscr{K}_{y \to y'}(c^{**}) = \mathscr{K}_{y \to y'}(c^*) \Rightarrow$ $c^{**}_s = c^*_s$ for all $s \in \supp y$, or else there are species $s,s'$ with $\sgn (c^{**}_s - c^*_s) = \sgn (y - y')_s \neq 0$ and $\sgn (c^{**}_{s'} - c^*_{s'}) = - \sgn (y - y')_{s'} \neq 0$.}
\end{enumerate}
\end{definition} 
\begin{rem}
Definition \ref{DEF:TwoWayWeaklyMonotonic} provides for a kinetic class that \emph{subsumes} the class of weakly monotonic kinetics (Definition \ref{def:WeaklyMonotonic}). It is intended to permit, \emph{but not require}, the concentration of a reaction's product species to influence its rate. The terminology \emph{two-way weakly monotonic} should not be construed to suggest that there is a complete anti-symmetry with respect to the influence of reactant and product species:   Note that --- as with mass action kinetics --- the rate of a reaction can remain unchanged even when the concentrations of all of the reaction's \emph{product} species increases, the concentrations of all other species remaining fixed. The same is not true of the reaction's  \emph{reactant} species. An increase in the concentrations of all of its reactant species, the others remaining fixed, \emph{must} result in an increase in the reaction rate. Again, any kinetics that is weakly monotonic (e.g., mass action kinetics) is also two-way weakly monotonic.
\end{rem}

\begin{rem} In contrast to the definition of NAC kinetics \cite{banaji_PMatrix_2007,banaji_graph-theoretic_2010}, there is no requirement here that the rate functions in a two-way weakly monotonic kinetics be differentiable or even continuous. Although Definition \ref{DEF:TwoWayWeaklyMonotonic} is intended to capture the general spirit of NAC kinetics, there are also subtle, but consequential, differences that we shall discuss in Remark \ref{rem:Banaji_CraciunDifficulties}.
\end{rem}

	Because the class of two-way weakly monotonic kinetics is broader than the class of weakly monotonic kinetics, analogous theory will require focus on a narrower class of reaction networks --- in particular the class of networks we call \emph{strongly} concordant. In the following definition, the map $L$ is as in eq. \eqref{eq:LDefinition}.

\begin{definition}
\label{DEF:StronglyConcordant}
A reaction network \rnet with stoichiometric subspace $S$ is \emph{strongly concordant} if there do not exist $\alpha \in \ker L$ and a non-zero $\sigma \in S$ having the following properties:

\begin{enumerate}[(i)]
\item{For each $y \to y'$ such that $\alpha_{y \to y'} > 0$ there exists a species $s$ for which $\sgn \sigma_s = \sgn (y - y')_s \neq 0$.}
\item{For each $y \to y'$ such that $\alpha_{y \to y'} < 0$ there exists a species $s$ for which $\sgn \sigma_s = - \sgn (y - y')_s \neq 0$.}

\item{For each $y \to y'$ such that $\alpha_{y \to y'} = 0$, either (a) $\sigma_s = 0$ for all $s \in \supp y$, or (b) there exist species $s, s'$ for which $\sgn \sigma_s = \sgn (y - y')_s \neq 0$ and $\sgn \sigma_{s'} = -  \sgn (y - y')_{s'} \neq 0$.}
\end{enumerate}
\end{definition}

	Note that a network that is strongly concordant is also concordant. Network
\eqref{eq:SchlosserExampleClosed} is concordant but not strongly concordant.
\begin{eqnarray}
\label{eq:SchlosserExampleClosed}
A + B &\to& P \nonumber \\
B + C &\to& Q \\
C &\to& 2A \nonumber
\end{eqnarray}

\begin{rem} [Ascertaining strong concordance] We intend to show in a separate article that strong concordance of a fully open network can be asserted when its Species-Reaction Graph satisfies the conditions stated in Remark \ref{rem:SRGraphConditions}. More generally, strong concordance of a network, not necessarily fully open, can be determined, either positively or negatively, by a sign-checking procedure implemented in \cite{Ji_toolboxWinV21}. 
\end{rem}
\smallskip
	The following proposition generalizes Proposition \ref{prop:ConcordanceInjectivity}.

\begin{proposition}[Generalization of Proposition \ref{prop:ConcordanceInjectivity}]
\label{prop:StrongConcordanceInjectivity}
A two-way weakly monotonic kinetic system \kinsys\  is injective whenever its underlying reaction network \rnet\  is strongly concordant. In particular, if the underlying reaction network is strongly concordant, then the kinetic system cannot admit two distinct stoichiometrically compatible equilibria, at least one of which is positive.
\end{proposition}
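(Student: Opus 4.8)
The plan is to follow the proof of Proposition \ref{prop:ConcordanceInjectivity} almost verbatim, substituting \emph{two-way weak monotonicity} for weak monotonicity and \emph{strong concordance} for concordance. First I would suppose, for contradiction, that for the two-way weakly monotonic kinetic system \kinsys\ there are distinct stoichiometrically compatible compositions $c^*$ and $c^{**}$, with $c^{**}\in\PS$, satisfying $\sum_{\rxn\in\scrR}\scrK_{\rxn}(c^{**})(y'-y)=\sum_{\rxn\in\scrR}\scrK_{\rxn}(c^*)(y'-y)$. Then, exactly as before, $\alpha_{\rxn}:=\scrK_{\rxn}(c^{**})-\scrK_{\rxn}(c^*)$ defines a member of $\ker L$, and $\sigma:=c^{**}-c^*$ is a nonzero member of $S$. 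The remaining work is to check that this $\alpha$ and $\sigma$ satisfy clauses (i)--(iii) of Definition \ref{DEF:StronglyConcordant}, which would contradict strong concordance and finish the proof; the closing assertion about equilibria then follows because both sums vanish at an equilibrium.

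For any reaction \rxn\ with $\supp y\subset\supp c^*$, we also have $\supp y\subset\supp c^{**}$ since $c^{**}$ is positive, so I would apply Definition \ref{DEF:TwoWayWeaklyMonotonic} directly. When $\alpha_{\rxn}>0$, its clause (i) supplies a species $s$ with $\sgn\sigma_s=\sgn(y-y')_s\neq 0$ --- precisely clause (i) of strong concordance. When $\alpha_{\rxn}<0$, I would apply the same implication with $c^*$ and $c^{**}$ interchanged, obtaining $\sgn(c^*_s-c^{**}_s)=\sgn(y-y')_s\neq 0$, i.e.\ $\sgn\sigma_s=-\sgn(y-y')_s\neq 0$ --- clause (ii). When $\alpha_{\rxn}=0$, clause (ii) of Definition \ref{DEF:TwoWayWeaklyMonotonic} yields either $\sigma_s=0$ for all $s\in\supp y$, or else a pair $s,s'$ of the form required by clause (iii).

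The one step that is \emph{not} a verbatim translation --- and the step I expect to be the only real obstacle --- concerns reactions \rxn\ with $\supp y\not\subset\supp c^*$, which cannot be handled by appealing to the definition of two-way weak monotonicity (its hypotheses fail) but which in the original proof were absorbed by Remark \ref{rem:weakmonotonicboundary}. For such a reaction, the definition of a kinetics forces $\scrK_{\rxn}(c^*)=0$ while $\scrK_{\rxn}(c^{**})>0$, so $\alpha_{\rxn}>0$ and only clause (i) of strong concordance is in question. I would choose $s_0\in\supp y$ with $c^*_{s_0}=0$, so that $\sigma_{s_0}=c^{**}_{s_0}>0$; then, invoking the standing assumption that no reaction shares a species between its reactant and product complexes, $s_0\notin\supp y'$, hence $(y-y')_{s_0}=y_{s_0}>0$ and $\sgn\sigma_{s_0}=\sgn(y-y')_{s_0}=1\neq 0$. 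This verifies clause (i), and it is exactly here that the no-common-species hypothesis does essential work. Assembling the three cases produces the forbidden pair $(\alpha,\sigma)$ and hence the contradiction.
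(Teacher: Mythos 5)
Your proof is correct and is essentially the paper's argument: the paper obtains Proposition \ref{prop:StrongConcordanceInjectivity} as the special case $\scrI_{\rxn} = \sgn (y-y')$ of the general influence-specification result (Proposition \ref{PROP:Generalization}), whose proof is exactly your construction of $\alpha \in \ker L$ and $\sigma = c^{**}-c^{*} \in S$ followed by the same three-way sign analysis. Your explicit handling of reactions with $\supp y \not\subset \supp c^{*}$ is what the paper dispatches via Remark \ref{REM:Boundary}, and you correctly pinpoint that the standing no-common-species assumption is what forces $\sgn (y-y')_{s} = 1$ for a missing reactant species in that case.
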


	The proof of the proposition closely parallels the fairly simple proof of Proposition \ref{prop:ConcordanceInjectivity}. In fact, Proposition \ref{prop:StrongConcordanceInjectivity} is merely a special case of a more general proposition that we state and prove in \S\ref{subsec:StillMoreGeneral}, one that allows for very general species-influences in the kinetics. 

	Before we do that, however, we want to emphasize once again that --- in contrast to the work of Banaji and Cracun  --- Proposition \ref{prop:StrongConcordanceInjectivity} does not require the fully open setting. That they were compelled to require a full supply of degradation reactions in order to ensure injectivity of the species-formation-rate function is due, in part, to difficulties inherent in the NAC kinetic class. Although NAC kinetics is close in spirit to what we call two-way weakly monotonic kinetics, there are subtle distinctions between the two that, for NAC kinetics in the true chemical reactions, makes a result resembling Proposition \ref{prop:StrongConcordanceInjectivity} impossible, at least in the absence of additional special reactions (e.g., degradation reactions) conforming to stronger kinetic requirements. This we explain in the following remark.

\begin{rem}[Difficulties instrinsic to NAC kinetics] 
\label{rem:Banaji_CraciunDifficulties}
In the sense of \mbox{\cite{banaji_PMatrix_2007,banaji_graph-theoretic_2010},} a (differentiable) kinetics $\mathscr{K}$ for a reaction network \rnet is \emph{non-autocatalytic} (NAC) if for each $c^* \in \PS$, for each reaction $y \to y' \in \mathscr{R}$, and for each species $s \in \mathscr{S}$, the following conditions hold:\begin{enumerate}
\item{$\frac{\partial \mathscr{K}_{y \to y'}}{\partial{c_s}}(c^*)(y'_s - y_s) \leq 0$},
\item{$y'_s - y_s = 0 \Rightarrow \frac{\partial\mathscr{K}_{y \to y'}}{\partial{c_s}}(c^*) = 0$.}
\end{enumerate}

	For \emph{any} reaction network \rnet, it is not difficult to imagine an NAC kinetic system \kinsys \  in which, for some open set $\Omega \in \PS$, each rate function $\scrK_{\rxn}(\cdot)$ is \emph{constant} on $\Omega$. In this case, the species-formation-rate function is also constant on $\Omega$, so the kinetic system is \emph{not} injective. Thus, there can be no assertion resembling Proposition \ref{prop:StrongConcordanceInjectivity}, in which every  reaction network within a certain class, when taken with \emph{every} NAC kinetics, invariably gives rise to an injective kinetic system.

	In the work of Banaji and Craciun injectivity results from the supposition that the ``true" chemical reactions --- those that conform to NAC kinetics --- are supplemented by degradation reactions that conform to still stronger kinetic requirements.
\end{rem}
\medskip
	The following theorem is a variant of Theorem \ref{thm:SmallToLargeConcordanceThm}. It will primarily find use in the following way: We intend to show in a separate paper that, when a reaction network's Species-Reaction Graph satisfies the conditions of Remark \ref{rem:SRGraphConditions}, then the network's fully open extension is strongly concordant. Thus, when the network is normal, those same Species-Reaction Graph conditions will ensure strong concordance of the original network.

\begin{theorem}
\label{thm:SmallToLargeStrongConcordanceThm}
A normal network is strongly concordant if its fully open extension is strongly concordant. In particular, a weakly reversible network is strongly concordant if its fully open extension is strongly concordant.
\end{theorem}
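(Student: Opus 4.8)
The plan is to argue by contraposition, running the argument of Theorem~\ref{thm:SmallToLargeConcordanceThm} (Appendix~C) essentially verbatim but with Definition~\ref{DEF:StronglyConcordant} in place of Definition~\ref{def:Concordant}; the only real change is in the sign bookkeeping, since strong concordance separates a reaction's reactant species from its product species through the sign of $(y-y')_s$, whereas concordance sees only $\supp y$. (One cannot shortcut through Theorem~\ref{thm:SmallToLargeConcordanceThm} itself, because failure of strong concordance need not be failure of concordance.) So suppose the normal network \rnet\ is \emph{not} strongly concordant, witnessed by $\alpha\in\ker L$ and a nonzero $\sigma\in S$ satisfying conditions (i)--(iii) of Definition~\ref{DEF:StronglyConcordant}. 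I want to build $\tilde\alpha\in\ker\tilde{L}$ and a nonzero $\tilde\sigma$ in the stoichiometric subspace $\tilde S=\RS$ of the fully open extension $\{\scrS,\tilde{\scrC},\tilde{\scrR}\}$ witnessing that the extension fails strong concordance.

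First I would read off the constraints imposed by the adjoined reactions $s\to 0$. Each has reactant complex $\omega_s$ and reaction vector $\omega_s$, so its reactant support is the single species $\{s\}$; testing the three conditions of Definition~\ref{DEF:StronglyConcordant} against a one-species reaction shows that, whatever $\tilde\sigma$ is, one is forced to have $\sgn\tilde\alpha_{s\to 0}=\sgn\tilde\sigma_s$ for every $s\in\scrS$. Since $\tilde\alpha$ must lie in $\ker\tilde{L}$, this forces $L(\tilde\alpha|_{\scrR})=\sum_{s\in\scrS}\tilde\alpha_{s\to 0}\,\omega_s$, whose sign pattern must match that of $\tilde\sigma$. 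The natural choice is $\tilde\sigma:=\sigma$ and $\tilde\alpha_{s\to 0}:=\epsilon\sigma_s$ with $\epsilon>0$ small, which obliges $\tilde\alpha|_{\scrR}$ to be $\alpha$ (already in $\ker L$) plus a small correction $\epsilon\beta$ with $L\beta=\sigma$. The remaining task is to choose $\beta$ so that $\alpha+\epsilon\beta$ still satisfies (i)--(iii) with respect to $\sigma$ on the original reactions. On reactions with $\alpha_{\rxn}\neq 0$ this holds automatically for $\epsilon$ small; on reactions with $\alpha_{\rxn}=0$ for which (iii)(b) holds, the sign of $\beta_{\rxn}$ is irrelevant; so the only genuine constraints come from the ``problematic'' reactions --- those with $\alpha_{\rxn}=0$, with $\sigma$ vanishing on $\supp y$, and with $\sigma$ not taking both signs on $\supp y'$ --- where $\beta_{\rxn}$ must vanish (or, in some subcases, be one-signed).

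Normality enters here, and I expect this to be the main obstacle. Taking $q\in\PS$, $\eta\in\PR$ and the nonsingular $T:S\to S$ of Definition~\ref{def:normal}, note that $y*\sigma=\sum_{s\in\supp y}q_s y_s\sigma_s$ vanishes at precisely the reactions on which $\sigma$ vanishes on $\supp y$ --- in particular at every problematic reaction --- so the vector $\beta^{0}$ given by $\beta^{0}_{\rxn}:=\eta_{\rxn}(y*\sigma)$ vanishes there, while $L\beta^{0}=T\sigma$. Nonsingularity of $T$ gives $T\sigma\neq 0$, and the role of normality is to let one express $\sigma$ (equivalently the defect $\sigma-T\sigma$) using only reaction vectors of non-problematic reactions, thereby adjusting $\beta^{0}$ to a $\beta$ with $L\beta=\sigma$ that still vanishes on the problematic reactions. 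Verifying that step --- that away from the problematic reactions the available reaction vectors already reach $\sigma$, all the while tracking the reactant-versus-product signs $(y-y')_s$ that Definition~\ref{DEF:StronglyConcordant} requires --- is the delicate part, exactly as in Appendix~C for the coarser Definition~\ref{def:Concordant}. With such a $\beta$, the pair $\tilde\alpha:=(\alpha+\epsilon\beta,\,(\epsilon\sigma_s)_{s\in\scrS})$ and $\tilde\sigma:=\sigma$ shows the fully open extension is not strongly concordant, completing the contrapositive. The ``in particular'' clause is then immediate, since every weakly reversible network is normal (the remark following Definition~\ref{def:normal}).
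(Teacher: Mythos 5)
Your overall plan (contraposition; transplant the Appendix~C construction and redo the sign checks with $(y-y')_s$ in place of $\supp y$) is the right one, and your accounting of which original reactions actually constrain the correction term is essentially correct. But the construction you propose diverges from the paper's at exactly the point you call ``delicate,'' and that point is a genuine gap rather than a routine verification. By pinning $\tilde{\sigma}:=\sigma$, you force the degradation coefficients to be $\tilde{\alpha}_{s\to 0}$ with $\sgn\tilde{\alpha}_{s\to 0}=\sgn\sigma_s$, and hence you must produce $\beta\in\RR$ with $L\beta=\sigma$ that vanishes (or is suitably one-signed) on every ``problematic'' reaction. Normality does not deliver this: it only says that $T\sigma=\sum_{\rxn\in\scrR}\eta_{\rxn}(y*\sigma)(y'-y)$ is a nonsingular map of $S$. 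Your $\beta^{0}_{\rxn}=\eta_{\rxn}(y*\sigma)$ indeed vanishes on the problematic reactions and satisfies $L\beta^{0}=T\sigma$, but there is no reason why $\sigma$ (equivalently the defect $\sigma-T\sigma$) should lie in the span of the reaction vectors of the non-problematic reactions, let alone be reachable subject to your sign restrictions; nonsingularity of $T$ places $T\sigma$, not $\sigma$, in that span. Moreover, nothing in Appendix~C proves a reachability statement of this kind, so the claim that this step goes ``exactly as in Appendix~C'' is not accurate. (Two smaller slips: the reaction vector of $s\to 0$ is $-\omega_s$, not $\omega_s$; and $y*\sigma$ can vanish by cancellation even when $\sigma$ does not vanish on $\supp y$, so ``precisely'' should be ``in particular.'')

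The paper's device is designed specifically to avoid the reachability problem: it does \emph{not} keep $\tilde{\sigma}=\sigma$. One sets $\chi(\epsilon):=[T-\epsilon I]^{-1}\sigma$ and $\tilde{\sigma}:=\sigma+\epsilon\chi(\epsilon)$ with $\epsilon>0$ small enough that $\tilde{\sigma}$ retains the sign of $\sigma$ wherever $\sigma_s\neq 0$, and takes $\tilde{\alpha}_{\rxn}:=M\alpha_{\rxn}+\eta_{\rxn}(y*\chi(\epsilon))$ on the original reactions (with $M$ large to preserve the signs of $\alpha$ where it is nonzero) and $\tilde{\alpha}_{s\to 0}:=\tilde{\sigma}_s$ (plus the corresponding extra terms when $s\to 0$ already belongs to \scrR). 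Then the identity $T\chi(\epsilon)=\sigma+\epsilon\chi(\epsilon)$ makes $\tilde{\alpha}$ lie in the kernel for the extension exactly, with no need to express $\sigma$ through selected reactions; and on your problematic reactions the conditions of Definition~\ref{DEF:StronglyConcordant} are met \emph{through $\supp y$ itself}, because there $\sgn\tilde{\alpha}_{\rxn}=\sgn(y*\chi(\epsilon))$ while $\sgn\tilde{\sigma}_s=\sgn\chi_s(\epsilon)$ for $s\in\supp y$, and $(y-y')_s=y_s>0$ for such $s$ since reactant and product complexes share no species. With that construction, the proof of Theorem~\ref{thm:SmallToLargeStrongConcordanceThm} is literally the Theorem~\ref{thm:SmallToLargeConcordanceThm} argument with only two of the case checks adapted to the signs $(y-y')_s$. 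To repair your write-up, either prove your reachability-with-sign-constraints claim (which I doubt holds in general) or switch to this resolvent perturbation of $\sigma$.
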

\medskip
	Proof of the theorem is very similar to the proof of Theorem \ref{thm:SmallToLargeConcordanceThm}. See Appendix \ref{App:ProofOfSmallToLargeConcordanceThm}.

	 Theorem \ref{thm:EigenvalueThm} also has a generalization. In preparation for it we need the following definition:

\begin{definition}
\label{DEF:DifferentiablyTwoWay}
A kinetics $\mathscr{K}$ for a reaction network \rnet\  is \emph{differentiably two-way monotonic} at $c^* \in \PS$ if, for every reaction $y \to y' \in \mathscr{R}$, $\mathscr{K}_{y \to y'}(\cdot)$ is differentiable at $c^*$ and, moreover,

\begin{enumerate}[(i)]
\item{$\frac{\partial \mathscr{K}_{y \to y'}}{\partial c_s}(c^*) > 0 \hspace{0.5 cm} \forall s \in \supp y$,} 
\item{$\frac{\partial \mathscr{K}_{y \to y'}}{\partial c_s}(c^*) \leq 0 \hspace{0.5 cm} \forall s \in \supp y'$,} 
\item {$\frac{\partial \mathscr{K}_{y \to y'}}{\partial c_s}(c^*) = 0 \hspace{0.5 cm} \forall s \in \mathscr{S} \setminus (\supp y \cup \supp y')$.}
\end{enumerate}
\end{definition}

	Proofs of the following theorem and its corollary proceed very much like the proofs of Theorem \ref{thm:EigenvalueThm} and its corollary.

\begin{theorem} 
\label{thm:EigenvalueThmStrongConc}
Let \kinsys \  be a kinetic system with stoichiometric subspace $S$ and species formation rate function $f: \PbarS \to S$. Moreover, suppose that the kinetics is differentiably two-way monotonic at $c^* \in \PS$. If the underlying network \rnet is strongly concordant then the derivative $df(c^*): S \to S$ is nonsingular (whereupon 0 is not one of its eigenvalues). If the network's fully open extension is strongly concordant then no real eigenvalue of  $df(c^*)$ is positive.
\end{theorem}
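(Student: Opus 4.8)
The plan is to follow the proof of Theorem~\ref{thm:EigenvalueThm} essentially line for line, replacing \emph{concordance} by \emph{strong concordance} and \emph{differentiable monotonicity} by \emph{differentiable two-way monotonicity}. Throughout we may invoke the standing assumption in force since Definition~\ref{DEF:TwoWayWeaklyMonotonic} that no reaction has a species common to its reactant and product complexes, so that for every reaction \rxn\ one has $\sgn (y - y')_s = +1$ when $s \in \supp y$, $\sgn (y - y')_s = -1$ when $s \in \supp y'$, and $(y - y')_s = 0$ otherwise. The only genuinely new feature relative to Theorem~\ref{thm:EigenvalueThm} is that the sign of $\sigma_s$ must now be matched against $\sgn (y - y')_s$ rather than uniformly against the sign of $\alpha_{\rxn}$ across $\supp y$, and this refinement is exactly mirrored by the passage from differentiable monotonicity to differentiable two-way monotonicity.

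For the nonsingularity assertion, I would suppose $df(c^*)\sigma = 0$ for some nonzero $\sigma \in S$. Differentiating \eqref{EQ:SpeciesFormationRateFunction} gives $df(c^*)\sigma = \sum_{\rxn \in \scrR} \alpha_{\rxn}(y' - y)$ with $\alpha_{\rxn} := \sum_{s \in \scrS} \frac{\partial \scrK_{\rxn}}{\partial c_s}(c^*)\,\sigma_s$, so $\alpha$ is a member of $\ker L$, with $L$ as in \eqref{eq:LDefinition}. By differentiable two-way monotonicity the coefficient of $\sigma_s$ in $\alpha_{\rxn}$ is strictly positive when $s \in \supp y$, nonpositive when $s \in \supp y'$, and zero when $s \notin \supp y \cup \supp y'$. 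A sign analysis then shows $\alpha$ and the nonzero $\sigma$ satisfy conditions (i)--(iii) of Definition~\ref{DEF:StronglyConcordant}: if $\alpha_{\rxn} > 0$ some term of the sum is positive, forcing either a reactant species $s$ with $\sigma_s > 0$ (so $\sgn \sigma_s = \sgn (y - y')_s = +1$) or a product species $s$ with $\sigma_s < 0$ and strictly negative coefficient (so $\sgn \sigma_s = \sgn (y - y')_s = -1$); the case $\alpha_{\rxn} < 0$ is symmetric; and when $\alpha_{\rxn} = 0$ either every $s \in \supp y$ has $\sigma_s = 0$, giving alternative (a), or, tracing the cancellation that makes the sum vanish, one exhibits two species of opposite ``signed type'' landing in alternative (b). This contradicts strong concordance.

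For the statement about positive real eigenvalues, assume the fully open extension $\{\scrS, \tilde{\scrC}, \tilde{\scrR}\}$ of \eqref{eq:FullyOpenExtNotation} is strongly concordant, and suppose $df(c^*)\sigma = \lambda \sigma$ for some $\lambda > 0$ and nonzero $\sigma \in S$. Since the degradation reactions contribute the vectors $\{-\omega_s : s \in \scrS\}$, the stoichiometric subspace of the fully open extension is all of \RS, and $\sigma$ is a nonzero member of it. I would define $\tilde{\alpha} \in \mathbb{R}^{\tilde{\scrR}}$ by $\tilde{\alpha}_{\rxn} := \alpha_{\rxn}$ for $\rxn \in \scrR$ (with $\alpha_{\rxn}$ as above) and $\tilde{\alpha}_{s \to 0} := \lambda \sigma_s$ for $s \in \scrS$; then $\tilde{L}\tilde{\alpha} = df(c^*)\sigma - \lambda \sigma = 0$, so $\tilde{\alpha} \in \ker \tilde{L}$. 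For the reactions in \scrR\ the sign conditions of Definition~\ref{DEF:StronglyConcordant} are checked exactly as in the previous paragraph. For a degradation reaction $s \to 0$ one has $\supp y = \{s\}$, $\supp y' = \emptyset$, $\sgn (y - y')_s = +1$, and, because $\lambda > 0$, $\sgn \tilde{\alpha}_{s \to 0} = \sgn \sigma_s$; hence conditions (i)--(iii) hold trivially for these reactions (when $\sigma_s = 0$ one invokes alternative (a) of (iii)). Thus $\tilde{\alpha}$ and the nonzero $\sigma$ violate strong concordance of the fully open extension, a contradiction.

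I do not expect a serious conceptual obstacle. The only place demanding care is the bookkeeping in the case $\alpha_{\rxn} = 0$ of the sign analysis, where one must produce two species of opposite signed type to reach alternative (b) of condition (iii), together with the harmless need to keep track of the disjointness of reactant and product supports so that $\sgn (y - y')_s$ is unambiguously $\pm 1$ on those supports. Everything else is a direct transcription of the argument for Theorem~\ref{thm:EigenvalueThm}.
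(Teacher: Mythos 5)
Your proposal is essentially the paper's own treatment: the paper gives no separate argument for this theorem, saying only that the proof "proceeds very much like" that of Theorem \ref{thm:EigenvalueThm}, and your transcription --- taking $\alpha_{\rxn} = p_{\rxn}\cdot\sigma$ with $p_{\rxn}$ the gradient of $\scrK_{\rxn}$ at $c^*$, and matching $\sgn\sigma_s$ against $\sgn(y-y')_s$ using the sign pattern supplied by differentiable two-way monotonicity --- is exactly the intended adaptation, and your case analysis for conditions (i)--(iii) of Definition \ref{DEF:StronglyConcordant}, including the $\alpha_{\rxn}=0$ cancellation argument, is sound.

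One bookkeeping slip in the fully open part: your definition of $\tilde{\alpha}$ is inconsistent when the original network already contains degradation reactions, since $\tilde{\scrR} = \scrR \cup \{s \to 0 : s \in \scrS\}$ is a set union and a reaction $s \to 0 \in \scrR$ would be assigned both $\alpha_{s\to 0}$ and $\lambda\sigma_s$; with either assignment alone, the identity $\tilde{L}\tilde{\alpha} = df(c^*)\sigma - \lambda\sigma$ fails. The paper's proof of Theorem \ref{thm:EigenvalueThm} handles this explicitly by splitting $\scrS$ into $\scrM = \{s : s\to 0 \in \scrR\}$ and its complement and setting $\tilde{\alpha}_{s\to 0} := (p^s_{s\to 0} + \lambda)\sigma_s$ for $s \in \scrM$; the same repair works here, and since $p^s_{s\to 0} > 0$ (as $s \in \supp y$ for the reaction $s \to 0$) the sign of the repaired coefficient is still $\sgn\sigma_s$, so your sign checks for the degradation reactions are unaffected. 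With that adjustment the argument is complete.
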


\begin{corollary}
Let \kinsys \  be a kinetic system with stoichiometric subspace $S$ and species formation rate function $f:\PbarS \to S$. Moreover, suppose that the kinetics is differentiably two-way monotonic at $c^* \in \PS$. If  the underlying network \rnet \  is normal ---  in particular, if it is weakly reversible --- and if the network's fully open extension is strongly concordant, then every real eigenvalue of $df(c^*)$ is negative.
\end{corollary}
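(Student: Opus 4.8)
The plan is to reduce this corollary to the two results immediately preceding it, exactly as Corollary \ref{cor:NegativeEigenvalues} was reduced to Theorem \ref{thm:SmallToLargeConcordanceThm} together with Theorem \ref{thm:EigenvalueThm}. First I would invoke Theorem \ref{thm:SmallToLargeStrongConcordanceThm}: since \rnet is normal and its fully open extension is strongly concordant, \rnet itself is strongly concordant. The parenthetical clause ``in particular, if it is weakly reversible'' requires no separate treatment, because every weakly reversible network is normal, as recorded in the remark following Definition \ref{def:normal}.

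Next, with strong concordance of \rnet now established and with the kinetics differentiably two-way monotonic at $c^*$ by hypothesis, Theorem \ref{thm:EigenvalueThmStrongConc} applies directly. Its first conclusion gives that $df(c^*): S \to S$ is nonsingular, so $0$ is not an eigenvalue of $df(c^*)$; in particular no real eigenvalue equals zero. Its second conclusion --- which uses strong concordance of the fully open extension, a hypothesis we are handed outright --- gives that no real eigenvalue of $df(c^*)$ is positive.

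Combining the two conclusions finishes the argument: every real eigenvalue of $df(c^*)$ is $\le 0$, and none of them is $0$, hence every real eigenvalue is strictly negative.

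I do not anticipate a genuine obstacle here, since all the substance has already been absorbed into Theorems \ref{thm:SmallToLargeStrongConcordanceThm} and \ref{thm:EigenvalueThmStrongConc}. The only points requiring a moment's care are bookkeeping ones: verifying that ``differentiably two-way monotonic at $c^*$'' (Definition \ref{DEF:DifferentiablyTwoWay}) is precisely the hypothesis demanded by Theorem \ref{thm:EigenvalueThmStrongConc}, so that its invocation is legitimate; and noting that strong concordance of the fully open extension is used twice in distinct ways --- once, via Theorem \ref{thm:SmallToLargeStrongConcordanceThm}, to obtain strong concordance of \rnet (hence nonsingularity of $df(c^*)$), and once directly, to exclude positive real eigenvalues.
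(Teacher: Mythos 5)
Your proposal is correct and follows exactly the route the paper intends: the paper states that this corollary's proof "proceeds very much like" that of Corollary \ref{cor:NegativeEigenvalues}, namely using Theorem \ref{thm:SmallToLargeStrongConcordanceThm} (with normality, which weak reversibility implies) to transfer strong concordance from the fully open extension to the network, then applying both conclusions of Theorem \ref{thm:EigenvalueThmStrongConc} to rule out zero and positive real eigenvalues. Nothing is missing.
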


\subsection{Still more general species influences on reaction rates}
\label{subsec:StillMoreGeneral}

	In our definition of \emph{two-way weakly monotonic kinetics}, we generalized \emph{weakly monotonic kinetics} to permit (but not require) for a particular reaction an \emph{inhibitory} effect exerted by the reaction's \emph{product} species. That is, a particular kinetics within  the two-way weakly monotonic class  was permitted to have the property that the rate of a particular reaction \rxn \  might increase whenever the concentration of $s \in \supp y'$ decreases, the concentrations of all other species remaining fixed. The companion notion of  \emph{strong concordance} --- an attribute of network structure alone, divorced from kinetic considerations --- worked hand-in-hand with the broader kinetic class to give rise to the injectivity result given in Proposition \ref{prop:StrongConcordanceInjectivity}.

	It is not substantially more difficult to consider, in an analogous way, kinetic classes with far wider ranges of permissible species influences. Indeed, for a particular reaction we might want to permit inducer or inhibitor effects on the reaction rate by species that, for the reaction, are neither reactants nor products.  (We shall continue to require that, for each reaction, an increase in the concentration of one its reactant species, with all other species concentrations held constant, results in an increase of the reaction's rate.) 

	We begin by indicating formally what we mean by an \emph{influence specification} for a network. This amounts to indicating, for each reaction, which species are deemed to be possible inducers and which are deemed  to be possible inhibitors (and which species are required to have no effect whatsoever).

\begin{definition}
\label{DEF:Generalization}
An \emph{influence specification} \scrI\  for a reaction network \rnet\  is an assignment to each reaction \rxn\  of a function $\scrI_{\rxn}: \scrS \to \{1,0,-1\}$ such that
\begin{equation}
\label{DEF:InfluenceInclusion}
\scrI_{y \to y'}(s)  = 1, \quad \forall s \in \supp y.
\end{equation}
If $\scrI_{y \to y'}(s)$ = 1 [resp., -1] then species $s$ is an \emph{inducer} [\emph{inhibitor}] of reaction \rxn.
\end{definition}

	Recall that, for the two-way monotonic kinetic class, we permitted, but did not \emph{require},  the product species to be inhibitors for a particular kinetics within the class. In the following definition, our intent is to specify, for a broad kinetic class, which species are permitted to be inhibitors or inducers --- without \emph{requiring} them to be one or the other --- for a particular kinetics within the class. (The exception is that, in the sense of Definition \ref{DEF:Generalization}, we require that, for each  reaction, its reactant species are inducers.)

\begin{definition}
\label{DEF:WeaklyMonotonicWithRespect}
A kinetics $\mathscr{K}$ for a reaction network \rnet is \emph{weakly monotonic 
with respect to influence specification} $\mathscr{I}$ if, for every pair of compositions $c^*$ and $c^{**}$, the following implications hold for each reaction $y \to y' \in \mathscr{R}$ such that $\supp y \subset \supp c^*$ and $\supp y \subset \supp c^{**}$:

\begin{enumerate}[(i)]
\item{$\mathscr{K}_{y \to y'}(c^{**}) > \mathscr{K}_{y \to y'}(c^*) \Rightarrow$ there is a species $s$ such that $\sgn (c^{**}_s - c^*_s) =  \scrI_{y \to y'}(s) \neq 0$.}
\item{$\mathscr{K}_{y \to y'}(c^{**}) = \mathscr{K}_{y \to y'}(c^*) \Rightarrow$ either (a) $c^{**}_s = c^*_s$ for all $s \in \supp y$ or (b) there are species $s,s'$ with $\sgn (c^{**}_s - c^*_s) = \scrI_{y \to y'}(s) \neq 0$ and \mbox{$\sgn (c^{**}_{s'} - c^*_{s'})$} = -  $\scrI_{y \to y'}(s') \neq 0$.}
\end{enumerate}
\end{definition}

\begin{rem}
\label{REM:Boundary}
Note that if a kinetics $\mathscr{K}$ is weakly monotonic with respect to influence specification $\mathscr{I}$ and if $c^{**}$ is a positive composition then implications $(i)$ and $(ii)$ in Definition \ref{DEF:WeaklyMonotonicWithRespect} will hold for all reactions in the network, even when $c^{*}$ is not strictly positive. 
\end{rem}

	In the following definition, the map $L$ is again as in eq. \eqref{eq:LDefinition}.

\begin{definition}
\label{DEF:ConcordantWRInfluence}
A reaction network \rnet\  with stoichiometric subspace $S$ is \emph{concordant with respect to influence specification} \scrI\  if there do not exist $\alpha \in \ker L$ and a non-zero $\sigma \in S$ having the following properties:

\begin{enumerate}[(i)]
\item{For each $y \to y'$ such that $\alpha_{y \to y'} > 0$ there exists a species $s$ for which $\sgn \sigma_s =  \scrI_{y \to y'}(s) \neq 0$.}
\item{For each $y \to y'$ such that $\alpha_{y \to y'} < 0$ there exists a species $s$ for which $\sgn \sigma_s = -  \scrI_{y \to y'}(s) \neq 0$.}
\item{For each $y \to y'$ such that $\alpha_{y \to y'} = 0$,} either 
\smallskip
(a) $\sigma_s = 0$ for all $s \in \supp y$ or 
(b) there are species  $s, s'$ for which $\sgn \sigma_s =  \scrI_{y \to y'}(s) \neq 0$ and $\sgn \sigma_{s'} = -  \scrI_{y \to y'}(s') \neq 0$.

\end{enumerate}

\end{definition}

\begin{rem} It is expected that \cite{Ji_toolboxWinV21} will be expanded to ascertain not only concordance and strong concordance but also the more general form of concordance indicated in Definition \ref{DEF:ConcordantWRInfluence}.
\end{rem}

\begin{rem} 
\label{rem:HowOtherDefsEmerge}
The definitions of the weakly monotonic kinetic class (Definition \ref{def:WeaklyMonotonic}) and of ordinary network concordance (Definition \ref{def:Concordant}) emerge as special cases of Definitions \ref{DEF:WeaklyMonotonicWithRespect} and \ref{DEF:ConcordantWRInfluence} applied with the influence specification given, for each reaction, by $\scrI_{\rxn} = \sgn y$.  Similarly, the definitions of the two-way monotonic class and of strong concordance emerge from the influence specification given, for each reaction, by $\scrI_{\rxn} = \sgn (y - y')$.
\end{rem}

	The following proposition generalizes Proposition \ref{prop:StrongConcordanceInjectivity}. In light of Remark \ref{rem:HowOtherDefsEmerge}, its proof will serve also as a proof of Proposition \ref{prop:StrongConcordanceInjectivity}.

\begin{proposition}[Generalization of Proposition \ref{prop:StrongConcordanceInjectivity}]
\label{PROP:Generalization}
A kinetic system \\ \kinsys \  is injective whenever there exists an influence specification $\mathscr{I}$ such that:
\begin{enumerate}[(i)]
\item{The kinetics $\mathscr{K}$ is weakly monotonic with respect to $\mathscr{I}$.}
\item{The underlying network \rnet\  is concordant with respect to $\mathscr{I}$.}
\end{enumerate}
\end{proposition}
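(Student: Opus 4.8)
The plan is to imitate, almost verbatim, the proof of Proposition~\ref{prop:ConcordanceInjectivity}, with every appeal to weak monotonicity replaced by an appeal to weak monotonicity with respect to the influence specification $\scrI$ hypothesized in the statement. So I would argue by contradiction: suppose the kinetic system \kinsys\ is \emph{not} injective. Then there are distinct stoichiometrically compatible compositions $c^*,c^{**}\in\PbarS$, at least one of which is positive, with
\[
\sum_{\rxn\in\scrR}\scrK_{\rxn}(c^{**})(y'-y)=\sum_{\rxn\in\scrR}\scrK_{\rxn}(c^*)(y'-y).
\]
Relabeling if necessary, I would assume $c^{**}\in\PS$. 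Define $\alpha\in\RR$ by $\alpha_{\rxn}:=\scrK_{\rxn}(c^{**})-\scrK_{\rxn}(c^*)$; the displayed equality says precisely $\alpha\in\ker L$, with $L$ as in~\eqref{eq:LDefinition}. Define $\sigma:=c^{**}-c^*$; since $c^*$ and $c^{**}$ are distinct and stoichiometrically compatible, $\sigma$ is a nonzero member of $S$. The goal is then to show that $(\alpha,\sigma)$ satisfies conditions (i)--(iii) of Definition~\ref{DEF:ConcordantWRInfluence}, contradicting concordance with respect to $\scrI$.

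The key preliminary observation, which makes the translation of signs go through cleanly, is a support bookkeeping step in the spirit of Remark~\ref{REM:Boundary}: because $c^{**}$ is positive, $\supp y\subset\supp c^{**}$ for every reaction, so by Definition~\ref{DEF:Kinetics} we have $\scrK_{\rxn}(c^{**})>0$ for every $\rxn\in\scrR$; and whenever $\scrK_{\rxn}(c^*)>0$ as well, Definition~\ref{DEF:Kinetics} forces $\supp y\subset\supp c^*$, so both support hypotheses of Definition~\ref{DEF:WeaklyMonotonicWithRespect} are in force for that reaction. This lets me invoke the implications of weak monotonicity with respect to $\scrI$ exactly in the three cases I need. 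If $\alpha_{\rxn}>0$, then $\scrK_{\rxn}(c^{**})>\scrK_{\rxn}(c^*)$, so Definition~\ref{DEF:WeaklyMonotonicWithRespect}(i) (applied to the pair $c^*,c^{**}$) produces a species $s$ with $\sgn\sigma_s=\scrI_{\rxn}(s)\neq0$, which is Definition~\ref{DEF:ConcordantWRInfluence}(i). If $\alpha_{\rxn}<0$, then $\scrK_{\rxn}(c^*)>\scrK_{\rxn}(c^{**})\ge0$, so $\scrK_{\rxn}(c^*)>0$ and hence $\supp y\subset\supp c^*$; applying Definition~\ref{DEF:WeaklyMonotonicWithRespect}(i) with the roles of $c^*$ and $c^{**}$ interchanged yields $s$ with $\sgn\sigma_s=-\scrI_{\rxn}(s)\neq0$, which is Definition~\ref{DEF:ConcordantWRInfluence}(ii). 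If $\alpha_{\rxn}=0$, then $\scrK_{\rxn}(c^*)=\scrK_{\rxn}(c^{**})>0$, so again both supports are fine, and Definition~\ref{DEF:WeaklyMonotonicWithRespect}(ii) gives either $\sigma_s=0$ for all $s\in\supp y$ or a pair $s,s'$ with $\sgn\sigma_s=\scrI_{\rxn}(s)\neq0$ and $\sgn\sigma_{s'}=-\scrI_{\rxn}(s')\neq0$, which is Definition~\ref{DEF:ConcordantWRInfluence}(iii). Thus $(\alpha,\sigma)$ is of the forbidden form and the contradiction is complete. As noted in Remark~\ref{rem:HowOtherDefsEmerge}, taking $\scrI_{\rxn}=\sgn y$ recovers Proposition~\ref{prop:ConcordanceInjectivity} and taking $\scrI_{\rxn}=\sgn(y-y')$ recovers Proposition~\ref{prop:StrongConcordanceInjectivity}.

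The substantive content is entirely in the contradiction-setup plus the sign arithmetic, so I do not expect any genuine obstacle. The one place deserving care is the boundary bookkeeping just described --- in particular verifying that the support hypotheses of Definition~\ref{DEF:WeaklyMonotonicWithRespect} hold for each reaction on which the implication is invoked, \emph{including} the role-reversed invocation in the $\alpha_{\rxn}<0$ case, where it is essential that $\scrK_{\rxn}(c^*)>0$ to conclude $\supp y\subset\supp c^*$. Once that is in hand, the proof is a direct transcription of signs from the monotonicity implications into the concordance conditions.
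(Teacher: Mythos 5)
Your proposal is correct and follows essentially the same route as the paper's own proof: the identical contradiction setup with $\alpha_{\rxn} := \scrK_{\rxn}(c^{**}) - \scrK_{\rxn}(c^{*}) \in \ker L$ and $\sigma := c^{**} - c^{*} \in S$, followed by the same three-case sign transcription from Definition \ref{DEF:WeaklyMonotonicWithRespect} into Definition \ref{DEF:ConcordantWRInfluence}. The only difference is that you make explicit the support bookkeeping (via Remark \ref{REM:Boundary} and the positivity of $c^{**}$) that the paper's proof leaves implicit, which is a point in your favor rather than a divergence.
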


\begin{proof}
Let \kinsys \  be a kinetic system and suppose that $\mathscr{I}$ is an influence specification that satisfies conditions $(i)$ and $(ii)$ in the proposition statement.  Suppose also that, contrary to what is to be proved, there are distinct stoichiometrically-compatible compositions $c^*$ and $c^{**}$, with \mbox{$c^{**} \in \PS$,} such that

\begin{equation}
\label{EQ:NonInjectivityCondition}
\sum_{y \to y' \in \mathscr{R}} \mathscr{K}_{y \to y'}(c^{**})(y' - y) = \sum_{y \to y' \in \mathscr{R}} \mathscr{K}_{y \to y'}(c^{*})(y' - y).
\end{equation}

\noindent 
Then  $\alpha \in \RR$, defined by

\begin{equation}
\label{EQ:AlphaInj}
\alpha_{y \to y'} := \mathscr{K}_{y \to y'}(c^{**}) - \mathscr{K}_{y \to y'}(c^*), \hspace{5 mm} \forall y \to y' \in \mathscr{R},
\end{equation}

\noindent is a member of $\ker L$. The vector $\sigma := c^{**} - c^*$ is clearly a member of the stoichiometric subspace, $S$.

Note that, if for a particular reaction $y \to y'$ we have $\alpha_{y \to y'} > 0$, then weak monotonicity with respect to $\mathscr{I}$ implies that, for some species $s$, $\sgn \sigma_s = \sgn (c^{**}_s - c^*_s) =  \scrI_{y \to y'}(s)  \neq 0$. Similarly, when $\alpha_{y \to y'} < 0$, there is a species $s$ such that $\sgn \sigma_s = \sgn (c^{**}_s - c^*_s) = -\sgn (c^*_s - c^{**}_s) =  -  \scrI_{y \to y'}(s) \neq 0$. Finally, if $\alpha_{y \to y'} = 0$, then $\sigma_s = c^{**}_s - c^*_s = 0$ for all $s \in \supp y
$, or else there exist distinct species $s, s'$ such that $\sgn \sigma_s = \sgn (c^{**}_s - c^*_s) =  \scrI_{y \to y'}(s) \neq 0$ and $\sgn \sigma_{s'} = \sgn (c^{**}_{s'} - c^*_{s'}) = -  \scrI_{y \to y'}(s') \neq 0$.

The existence of $\alpha$ and $\sigma$ so constructed contradicts the supposition that the network \rnet is concordant with respect to influence specification $\mathscr{I}$.
\end{proof}
\smallskip
	It is not difficult to see that other propositions and theorems in this paper can also be generalized in various ways to accommodate the broader notions of concordance and kinetic monotonicity, both relative to a common influence specification.
\smallskip
\begin{rem}[Kinetic classes with \emph{mandatory} inducer-inhibitor influences] For a reaction network \rnet, consider the class of all kinetics that are weakly monotonic with respect to a certain influence specification, \scrI. Moreover, let \rxn\  be a fixed reaction, and suppose that species $s$ is such that $\scrI_{y \to y'}(s) < 0$. (This choice of sign is made only for the sake of concretion.)

	Now if \scrK\  is a fixed kinetics in the kinetic class under consideration, then, with respect to \scrK, species $s$ might have no effect at all on the rate of reaction \rxn. With another kinetics $\scrK^\prime$ in the same class, $s$ might indeed be a nontrivial inhibitor of \rxn. That  $\scrI_{y \to y'}(s)$\  is negative merely indicates that the kinetic class under consideration is sufficiently wide as to admit \emph{some} kinetics, such as $\scrK^\prime$, in which $s$ is an inhibitor of \rxn.

	In this way, mass action kinetics (for which there are no inhibitors) can be viewed to sit within a larger class of kinetics, certain members of which might include inhibition. Theorems can then be stated for broad classes of kinetics that include mass action kinetics as a special subclass.

	We could just as well have chosen to consider kinetic classes within which \emph{every} kinetics is \emph{required} to include, for example, a nontrivial inhibition of species $s$ on reaction \rxn. Specification of such compulsory influences can be accomplished in the following way: Definition \ref{DEF:Generalization} would be expanded to include, for each reaction \rxn, a specified species set $\scrM_{\rxn}  \subset \scrS$, with $\supp y \subset \scrM_{\rxn}$, for which the influences given by $\scrI_{\rxn}$ are mandatory.

	In item (ii)(a) of  Definition \ref{DEF:WeaklyMonotonicWithRespect}, $\supp y$ would then be replaced by $\scrM_{\rxn}$, and the same replacement would be made in item (iii)(a) of Definition \ref{DEF:ConcordantWRInfluence}.  With very minor adjustments in the proof, Proposition \ref{PROP:Generalization} would emerge as before.
\end{rem}

\section*{Acknowledgments} We are grateful to Uri Alon for his encouragement and support and to Avi Mayo for his help. We especially thank Haixia Ji and Daniel Knight for expanding \emph{The Chemical Reaction Network Toolbox} \cite{Ji_toolboxWinV21} to implement concordance tests.

\appendix
\numberwithin{equation}{section}
\appendixpage
\section{Proof of Proposition \ref{prop:partialconverse}.}
Here we prove the following proposition:

\medskip
\noindent
\textbf{Proposition \ref{prop:partialconverse}.} \emph{Let \rnet\  be a reaction network that is discordant. Then there exists for it a weakly monotonic kinetics \scrK \ such that the resulting kinetic system \kinsys\ is not injective.}

\begin{proof} Because the network \rnet\  is discordant there exist for it a nonzero $\sigma$ in the stoichiometric subspace, $S$, and an $\alpha \in \RR$ that together satisfy the conditions in Definition \ref{def:Concordant}. In particular we have
\begin{equation}
\label{EQ:NonInj1Again}
\sum_{y \to y' \in \mathscr{R}} \alpha_{y \to y'} (y' - y) = 0.
\end{equation}

	From the conditions in Definition \ref{def:Concordant} it is not difficult to see that, for each $\rxn \in \scrR$, there is a vector $p_{\rxn} \in \PbarS$ such that 

\begin{equation}
\supp p_{\rxn} =  \supp y \quad \mathrm{and} \quad \alpha_{\rxn} = p_{\rxn} \cdot \sigma.
\end{equation}
\smallskip

	Because $\sgn x = \sgn(\exp(x) - 1)$ for each real $x$, there are positive numbers $\{ \eta_{\rxn} \}_{\rxn \in \scrR}$  such that \eqref{EQ:NonInj1Again} can be rewritten as
\begin{equation}
\label{EQ:NonInj1Again2}
\sum_{\rxn \in \scrR} \eta_{\rxn} \left[\exp(p_{\rxn} \cdot \sigma) - 1\right](y' - y) = 0.
\end{equation}For the same reason, there exists $c^* \in \PS$ such that 
\begin{equation}
\label{EQ:cStar} 
\sigma_s = c_s^*\left[\exp(\sigma_s) - 1 \right], \quad \forall s \in \scrS.
\end{equation}

\noindent We define $c^{**} \in \PS$ by

\begin{equation}
\label{EQ:cDoubleStar} \\ 
c^{**} := c^* \circ e^\sigma.
\end{equation}

\noindent Then
\begin{equation}
c^{**} - c^* = \sigma \in S
\end{equation}
and
\begin{equation}
\label{eq:cRatio}
\frac{c^{**}}{c^*} = e^{\sigma}.
\end{equation}

\medskip
\noindent
From \eqref{EQ:NonInj1Again2} and \eqref{eq:cRatio} it follows that
\begin{equation}
\label{eq:Pre-Noninjectivity}
\sum_{\rxn \in \scrR}\eta_{\rxn}\left[ 
{\left( \frac{c^{**}}{c^*}\right)}^{p_{\rxn}} - 1  \right](y' - y) = 0.
\end{equation}
	Now we construct a weakly monotonic kinetics  for the network in the following way: For each $\rxn \in \scrR$ we let
\begin{equation}
\label{eq:KineticsConstructed}
\scrK_{\rxn}(c) := \eta_{\rxn} {\left( \frac{c}{c^*}\right)}^{p_{\rxn}}.
\end{equation}
From \eqref{eq:Pre-Noninjectivity} and \eqref{eq:KineticsConstructed} it follows that
\begin{eqnarray}
\sum_{\rxn \in \scrR}\scrK_{\rxn}(c^{**})(y' - y) = \sum_{\rxn \in \scrR}\scrK_{\rxn}(c^{*})(y' - y),
\end{eqnarray}
whereupon the kinetic system \kinsys \ is not injective.

\end{proof}

\section{An alternative proof of Theorem \ref{thm:concordant_reaction-transitive}.}	
\label{App: AlternativeProofOfBdryThm}
In the proof of Theorem \ref{thm:concordant_reaction-transitive} offered earlier we required that every mass action system for which the underlying reaction network is weakly reversible admits a strictly positive equilibrium in each nontrivial stoichiometric compatibility class, regardless of the (positive) values that the rate constants take. That this is so is proved in a currently unpublished manuscript \cite{deng_steadystates}. With this in mind, we offer in this appendix a slightly different proof which instead draws on established properties of complex balanced mass action systems. (See, in particular, \cite{horn_general_1972}. Here we  draw heavily on \cite{feinberg_lectureschemical_1979}.)

We begin with a lemma:

\begin{lemma}
\label{lem:CBLemma}
Let \{$\scrS, \scrC ^{\dag}, \scrR ^{\dag}$\} be a weakly reversible subnetwork of reaction network \rnet, and let $c \in \PS$ be a fixed but arbitrary positive composition. Then it is possible to find positive numbers (rate constants) $\{k_{\rxn}\}_{\rxn \in \scrR ^{\dag}}$ that satisfy
\begin{equation}
\label{eq:SubnetworkEquilEqn}
\sum_{\rxn \in \scrR ^{\dag}}k_{\rxn}(c)^y(y' - y) = 0.
\end{equation}
In fact, $\{k_{\rxn}\}_{\rxn \in \scrR ^{\dag}}$ can be chosen to satisfy the stronger requirement
 \begin{equation}
\label{eq:SubnetworkCBEqn}
\sum_{\rxn \in \scrR ^{\dag}}k_{\rxn}(c)^y(\omega_{y'} - \omega_y) = 0,
\end{equation}
where $\{\omega_y\}_{y \in \scrC}$ is the standard basis for \RC.
\end{lemma}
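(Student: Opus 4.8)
The plan is to reduce the statement to a purely combinatorial fact about the directed graph underlying the subnetwork and then to establish that fact by covering the graph with directed cycles. First observe that \eqref{eq:SubnetworkCBEqn} is the stronger of the two conclusions: the linear map $\RC \to \RS$ determined on basis vectors by $\omega_y \mapsto y$ sends $\omega_{y'} - \omega_y$ to $y' - y$, so any family of rate constants satisfying \eqref{eq:SubnetworkCBEqn} automatically satisfies \eqref{eq:SubnetworkEquilEqn}. Hence it suffices to produce positive rate constants $\{k_{\rxn}\}_{\rxn \in \scrR^\dagger}$ satisfying \eqref{eq:SubnetworkCBEqn}. Next, because $c \in \PS$ we have $c^y > 0$ for every complex $y$ (with $c^0 = 1$), so for \emph{any} choice of positive numbers $\{\mu_{\rxn}\}_{\rxn \in \scrR^\dagger}$ the assignment $k_{\rxn} := \mu_{\rxn}/c^y$ furnishes legitimate positive rate constants, and for them $k_{\rxn}\,c^y = \mu_{\rxn}$. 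Thus the entire lemma follows once we exhibit positive numbers $\{\mu_{\rxn}\}_{\rxn \in \scrR^\dagger}$ with
\begin{equation}
\sum_{\rxn \in \scrR^\dagger} \mu_{\rxn}(\omega_{y'} - \omega_y) = 0 ;
\end{equation}
that is, a strictly positive \emph{circulation} on the reaction arrows of the subnetwork --- at every complex the total incoming weight equals the total outgoing weight.

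To build the circulation I would use directed cycles. Since $\{\scrS, \scrC^\dagger, \scrR^\dagger\}$ is weakly reversible, when it is drawn as a standard reaction diagram every one of its arrows lies on a directed cycle of arrows (as noted just after Definition \ref{EQ:WeaklyReversible}). For each $\rxn \in \scrR^\dagger$ fix one such cycle $C_{\rxn}$, let $\mathscr{F} := \{C_{\rxn} : \rxn \in \scrR^\dagger\}$, and for each arrow $e \in \scrR^\dagger$ let $\mu_e$ be the number of members of $\mathscr{F}$ that contain $e$; since $e \in C_e$ we have $\mu_e \geq 1 > 0$ for every arrow. For a single directed cycle $C$, writing its arrows as $e : y_e \to y'_e$, the sum $\sum_{e \in C}(\omega_{y'_e} - \omega_{y_e})$ telescopes to $0$, and therefore
\begin{equation}
\sum_{e \in \scrR^\dagger} \mu_e\,(\omega_{y'_e} - \omega_{y_e}) \;=\; \sum_{C \in \mathscr{F}}\ \sum_{e \in C}(\omega_{y'_e} - \omega_{y_e}) \;=\; 0 .
\end{equation}
This is the required circulation; setting $k_{\rxn} := \mu_{\rxn}/c^y$ then yields \eqref{eq:SubnetworkCBEqn}, and applying the map $\omega_y \mapsto y$ yields \eqref{eq:SubnetworkEquilEqn}.

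I do not expect a genuine obstacle: the only point needing a little care is the bookkeeping in the cycle-covering step together with the (already recorded) fact that weak reversibility puts every arrow on a directed cycle --- essentially the same mechanism that underlies the standard result that a weakly reversible network has positively dependent reaction vectors, here carried out in $\RC$ rather than in $\RS$ so as to deliver the stronger, complex-balancing identity \eqref{eq:SubnetworkCBEqn}. If one preferred a closed-form construction to the cycle count, one could instead invoke the Matrix--Tree Theorem to write down, on each linkage class of the subnetwork, an explicit positive vector in the kernel of the corresponding weighted graph Laplacian; the cycle argument, however, keeps the proof self-contained.
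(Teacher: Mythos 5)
Your proposal is correct and follows essentially the same route as the paper's proof: obtain a strictly positive solution $\{\kappa_{\rxn}\}$ of the complex-level relation $\sum_{\rxn \in \scrR^{\dag}}\kappa_{\rxn}(\omega_{y'}-\omega_y)=0$, note that applying $\omega_y \mapsto y$ turns \eqref{eq:SubnetworkCBEqn} into \eqref{eq:SubnetworkEquilEqn}, and then set $k_{\rxn} := \kappa_{\rxn}/c^{y}$. The only difference is that the paper simply cites the existence of such a positive vector for weakly reversible networks (Lecture 4 of \cite{feinberg_lectureschemical_1979}), whereas you supply that step yourself via the cycle-cover and telescoping argument, which is a sound, self-contained substitute.
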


\begin{proof} (See in particular Lecture 4 in \cite{feinberg_lectureschemical_1979}.) Because \{$\scrS, \scrC ^{\dag}, \scrR ^{\dag}$\} is weakly reversible, there exist positive numbers $\{\kappa_{\rxn}\}_{\rxn \in \scrR ^{\dag}}$ that satisfy
 \begin{equation}
\sum_{\rxn \in \scrR ^{\dag}}\kappa_{\rxn}(\omega_{y'} - \omega_y) = 0,
\end{equation}
and, therefore, 
\begin{equation}
\sum_{\rxn \in \scrR ^{\dag}}\kappa_{\rxn}(y' - y) = 0.
\end{equation}
Now choose  $\{k_{\rxn}\}_{\rxn \in \scrR ^{\dag}}$ to satisfy
\begin{equation}
\label{eq:RateConstChoice3}
\kappa_{\rxn} = k_{\rxn}c^y, \quad \forall \rxn \in \scrR ^{\dag}.
\end{equation}
\end{proof}

\begin{rem}
\label{rem:CBRemark}
When the rate constants have been chosen to satisfy condition \eqref{eq:SubnetworkEquilEqn}, the composition $c$ becomes an equilibrium of the mass action systems $\{\scrS, \scrC ^{\dag}, \scrR ^{\dag}, k \}$. When the rate constants satisfy the stronger condition \eqref{eq:SubnetworkCBEqn}, then $c$ is said to be an equilibrium at which \emph{complex balancing} obtains, and  $\{\scrS, \scrC ^{\dag}, \scrR ^{\dag}, k \}$ is said to be a \emph{complex balanced mass action system}. Properties of complex balanced mass action systems are discussed in \cite{horn_general_1972} and \cite{feinberg_lectureschemical_1979}.
\end{rem}
\medskip
	With this as background we state Theorem \ref{thm:concordant_reaction-transitive} once again and then provide an alternate proof.
\medskip

\noindent
\textbf{Theorem \ref{thm:concordant_reaction-transitive}} \emph{
For a weakly reversible concordant reaction network, no nontrivial stoichiometric compatibility class can have on its boundary a composition that is reaction-transitive.}

\begin{proof}[Alternative proof of Theorem \ref{thm:concordant_reaction-transitive}] Suppose that \rnet is a concordant weakly reversible reaction network and that, contrary to what is to be proved, $c^* \in \PbarS$ is a reaction-transitive composition on the boundary of a nontrivial stoichiometric compatibility class. Let
\begin{eqnarray}
\scrS ^* &:=& \supp c^*, \nonumber\\
\scrC ^* &:=& \{y \in \scrC : \supp y \subset \supp c^*\}\\ 
\scrR ^* &:=& \{ y \to y' \in \scrR : \supp y \subset \supp c^*\}. \nonumber
\end{eqnarray}

To be concrete, we shall begin by considering the more difficult situation in which $\scrC ^*$ is not empty.  By virtue of Remark  \ref{rem:PrepForBoundaryThm},  \{$\scrS ^*, \scrC ^*, \scrR ^*$\} is a weakly reversible subnetwork of the original network. 

	Now we choose $c^{\#}$ to be a strictly positive composition in  \RS\  that agrees with $c^*$ for every species in $\scrS ^*$ --- that is  $c_s^{\#} =  c_s^*$  for all $s \in \scrS ^*$. By virtue of Lemma \ref{lem:CBLemma} and Remark \ref{rem:CBRemark} we can choose positive rate constants $\{k^*_{\rxn}\}_{\rxn \in \scrR ^*}$ to ensure that  $c^{\#}$ is a complex balanced equilibrium of the mass action system \{$\scrS, \scrC ^*, \scrR ^*, k^*$\}. Because $c^{\#}$  agrees with $c^*$ on $\scrS ^*$ and because every complex in $\scrC ^*$ has support in  $\scrS ^*$ we have
\begin{equation}
\label{eq:c*EquilEqn}
\sum_{\rxn \in \scrR ^*}k^*_{\rxn}(c^*)^y(y' - y) = \sum_{\rxn \in \scrR ^*}k^*_{\rxn}(c^{\#})^y(y' - y) = 0.
\end{equation}
(Note that species in $\scrS \setminus \scrS ^*$ are ``inerts" relative to reactions in $\scrR ^*$; they are neither consumed nor produced by those reactions.)

	Because \{$\scrS, \scrC ^*, \scrR ^*, k^*$\} is a complex balanced mass action system, it has the properties that accrue to such a system \cite{horn_general_1972,feinberg_lectureschemical_1979}. Among these is the existence of precisely one positive equilibrium in each nontrivial stoichiometric compatibility class \emph{for the underlying network}  $\{\scrS, \scrC ^*, \scrR ^*\}$. Note that a stoichiometric compatibility class for that network  is the intersection of \PbarS with a  parallel of its stoichiometric subspace,
\begin{equation}
S^* := \mathrm{span} \{y' - y \in \RS: \rxn \in \scrR ^*\}. \nonumber
\end{equation}
On the other hand, a stoichiometric compatibility class of the parent network \rnet is the intersection of \PbarS with a parallel of \emph{its}  stoichiometric subspace,
\begin{equation}
S := \mathrm{span} \{y' - y \in \RS: \rxn \in \scrR\}, \nonumber
\end{equation}
\noindent Note that $S$ \emph{contains} $S^*$.

	In fact, relative to \rnet, the nontrivial stoichiometric compatibility class $(c^* + S) \cap \PbarS$ containing the boundary composition $c^*$ is the union of stoichiometric compatibility classes for the network $\{\scrS, \scrC ^*, \scrR ^*\}$. Clearly, some of these contain positive compositions; those that do are nontrivial, and each therefore contain precisely one positive equilibrium \emph{for the mass action system}  \{$\scrS, \scrC ^*, \scrR ^*, k^*$\}. 

		Hereafter we suppose that $c^{**}$ is one such positive equilibrium for  the mass action system \{$\scrS, \scrC ^*, \scrR ^*, k^*$\}, residing in $(c^* + S) \cap \PbarS$; thus, we have 

\begin{equation}
\label{eq:c**Eq1}
\sum_{\rxn \in \scrR ^*}k^*_{\rxn}(c^{**})^y(y' - y) = 0.
\end{equation}

	Having already chosen rate constants for reactions in \scrR*, we can now choose rate constants for the remaining reactions --- that is, those in $\scrR \setminus \scrR ^*$ --- to ensure that $c^{**}$ is an equilibrium of the resulting larger mass action system: In fact, because the subnetwork $\{\scrS, \scrC \setminus \scrC^*, \scrR \setminus \scrR ^*\}$ is also weakly reversible (Remark \ref{rem:PrepForBoundaryThm}), Lemma \ref{lem:CBLemma} ensures that we can choose positive numbers  $\{k^{**}_{\rxn}\}_{\rxn \in  \scrR \setminus \scrR ^*}$ to satisfy
\begin{equation}
\label{eq:c**Eq2}
\sum_{\rxn \in \scrR \setminus \scrR ^*}k^{**}_{\rxn}(c^{**})^y(y' - y) = 0.
\end{equation}
Finally, we denote by $\bar{k} \in \PR$ the rate constant assignment for the full network constructed as indicated --- i.e., $\bar{k}_{\rxn} = k^*_{\rxn}\; \textrm{  for all  } \; \rxn \in \scrR ^* \textrm{ and } \bar{k}_{\rxn} = k^{**}_{\rxn} \;\textrm{ for all } \; \rxn \in \scrR \setminus \scrR ^*$.

	Taken together, \eqref{eq:c**Eq1} and \eqref{eq:c**Eq2} ensure that $c^{**}$ is a positive equilibrium of the mass action system $\{\scrS,\scrC,\scrR,\bar{k}\}$. Moreover, $c^*$ is a (boundary) equilibrium of that same mass action system: This follows from \eqref{eq:c*EquilEqn} and the fact that, for each $\rxn \in \scrR \setminus \scrR ^*,\; \supp y \not\subset \supp c^*$. 

	Because, relative to the concordant network \rnet,  the distinct equilibria   $c^{*}$ and  $c^{**}$ are stoichiometrically compatible, we have a contradiction of Proposition \ref{prop:ConcordanceInjectivity}.

	We must still consider the case in which  $\scrR ^*$ is empty.  In this case,  $\supp y \not\subset \supp c^*$ for all $y \in \scrC$, so, for any kinetics assigned to \rnet, $c^*$ will be an equilibrium of the resulting kinetic system.  Now let $c^{**}$ be a positive composition in the nontrivial stoichiometric compatibility class containing  $c^{*}$. We can, by virtue of Lemma \ref{lem:CBLemma}, assign a positive rate constant to each reaction of \scrR \, in such a way as to make  $c^{**}$ an equilibrium of the resulting mass action system. Then $c^{*}$ and $c^{**}$ are distinct stoichiometrically compatible equilibria of that same mass action system. Because the network \rnet\  is concordant, we again have a contradiction of  Proposition \ref{prop:ConcordanceInjectivity}. 
\end{proof}

\section{Proof of Theorems \ref{thm:SmallToLargeConcordanceThm} and \ref{thm:SmallToLargeStrongConcordanceThm}.} 
\label{App:ProofOfSmallToLargeConcordanceThm}
Here we provide proof of Theorem \ref{thm:SmallToLargeConcordanceThm}, the statement of which is repeated immediately below:
\medskip

\noindent
\textbf{Theorem \ref{thm:SmallToLargeConcordanceThm}} \emph{A normal network is concordant if its fully open extension is concordant. In particular, a weakly reversible network is concordant if its fully open extension is concordant.}

\begin{proof} We suppose that the fully open extension of a normal network \rnet\\ is concordant but that \rnet\  itself is not concordant. This will produce a contradiction. As usual we denote by $S \subset \RS$ the stoichiometric subspace for the base network \rnet. Note that the stoichiometric subspace for the fully open extension coincides with \RS.

	If the network \rnet is not concordant then, for the network, there is a nonzero $\sigma \in S$ and an $\alpha \in \ker L$ which together satisfy the conditions in Definition \ref{def:Concordant}. In particular, we have
\begin{equation}
\label{eq:alphaeqn2}
\sum_{\rxn \in \scrR}\alpha_{\rxn}(y' - y) = 0.
\end{equation}
Given such a $\sigma, \alpha$ pair, our aim will be to show that the fully open extension $\{\scrS, \tilde{\scrC}, \tilde{\scrR} \}$ cannot be concordant, which contradicts the hypothesis. That is, we will want to show the existence of a nonzero $\tilde{\sigma} \in \RS$ and an $\tilde{\alpha} \in \mathbb{R}^{\tilde{\mathscr{R}}}$ satisfying 
\begin{equation}
\label{eq:alphaTildeEqn}
\sum_{\rxn \in \tilde{\scrR}}\tilde{\alpha}_{\rxn}(y' - y)  = 0
\end{equation}
such that the $\tilde{\sigma}, \tilde{\alpha}$ pair also satisfies the requirements in Definition \ref{def:Concordant}.

	For the normal network \rnet, let $T:S \to S$ be as in Definition \ref{def:normal}. Because $T$ is nonsingular, the map $[T - \epsilon I]^{-1}: S \to S$ exists for sufficiently small $\epsilon$. In fact, in a sufficiently small neighborhood of $0$ the function $\epsilon \to [T - \epsilon I]^{-1}$ is continuous, where the set of linear maps from $S$ to $S$ is given the usual norm topology. (See \cite{Kato1982}, \S 5.2.)

	For small $\epsilon \geq 0$, let $\chi(\epsilon) := [T - \epsilon I]^{-1}\sigma$, which is to say that 
\begin{equation}
T\chi(\epsilon) = \sigma + \epsilon \chi(\epsilon).
\end{equation}
(Note that $\chi(0) = T^{-1}\sigma$.)  Thus, we can write

\begin{equation}
\label{eq:chisigmaeqn}
 \sum_{y \to y' \in \scrR}\eta_{y \to y'}(y * \chi(\epsilon))(y' - y) + \sum_{s \in \scrS}(\sigma_s + \epsilon \chi_s(\epsilon))(-s) = 0.
\end{equation}

	In anticipation of construction of $\tilde{\alpha}$, we multiply \eqref{eq:alphaeqn2} by a positive number $M$, to be determined later, and add the result to \eqref{eq:chisigmaeqn} to obtain 

\begin{equation}
\label{eq:chisigmaeqn2}
 \sum_{y \to y' \in \scrR}\{M \alpha_{y \to y'} + \eta_{y \to y'}(y * \chi(\epsilon))\}(y' - y) + \sum_{s \in \scrS}(\sigma_s + \epsilon \chi_s(\epsilon))(-s) = 0.
\end{equation}
It will be useful to keep in mind that $-s$ is the reaction vector corresponding to the reaction $s \to 0 \in \tilde{\scrR}$.

	Hereafter, we choose

\begin{equation}
\label{eq:SigmaTildeChoice}
\tilde{\sigma} := \sigma + \epsilon \chi(\epsilon),
\end{equation}
where it is understood that $\epsilon > 0$ is chosen sufficiently small as to satisfy the following requirement:
\begin{equation}
\label{eq:epsilonrequirement}
\sgn (\sigma + \epsilon \chi(\epsilon))_s = \sgn \sigma_s \mbox{ when } \sgn \sigma_s \neq 0.
\end{equation}
Because $\sigma \neq 0$, \eqref{eq:epsilonrequirement} ensures that $\tilde{\sigma} \neq 0$.

	We turn now to a choice of $\tilde{\alpha}$. By way of preparation, note that we have not precluded the possibility that \scrR \;might  contain \mbox{species-degradation} reactions of the form $s \to 0$, where $s$ is a species. In fact, let $\scrM \subset \scrS$ be the set of all species for which there are such reactions; that is, $\scrM = \{s \in \scrS: s \to 0 \in \scrR\}$. We denote by $\scrM '$ the complement of \scrM \;in \scrS. By $\scrR '$ we mean the set of all reactions in \scrR \;that are not \mbox{species-degradation} reactions; that is,  $\scrR '$ is the set of all members of \scrR \  not of the form $s \to 0, \; s \in \scrM$.

	With this as background, we choose $\tilde{\alpha} \in \mathbb{R}^{\tilde{\mathscr{R}}}$ as follows:
\begin{align}
\label{EQ:AlphaTildeChoice}
&\tilde{\alpha}_{y \to y'} := M \alpha_{y \to y'} + \eta_{y \to y'} (y * \chi(\epsilon)) \;\mbox{ for all } y \to y' \in \mathscr{R}', \\
\nonumber &\tilde{\alpha}_{s \to 0} := M \alpha_{s \to 0} + \eta_{s \to 0} (s * \chi(\epsilon)) + (\sigma_s + \epsilon \chi_s(\epsilon))\; \mbox{  for all } s \in \mathscr{M}, \\
\nonumber &\tilde{\alpha}_{s \to 0}:= (\sigma_s + \epsilon \chi_s(\epsilon))\; \mbox{ for all } s \in \scrM ', 
\end{align}
and we suppose that $M > 0$ is sufficiently large as to satisfy the following requirement: For all $\rxn \in \scrR$
\begin{equation}
\label{eq:MRequirement}
\sgn \tilde{\alpha}_{\rxn} = \sgn \alpha_{\rxn} \;\;  \mathrm{if} \;\; \sgn \alpha_{\rxn} \neq 0.
\end{equation}
Note that, by virtue of \eqref{eq:chisigmaeqn2}, $\tilde{\alpha}$ so chosen is a solution of \eqref{eq:alphaTildeEqn}. 

	It remains to be shown that this $\tilde{\sigma}, \tilde{\alpha}$ pair, constructed as indicated, satisfies the conditions in Definition \ref{def:Concordant}. There are several cases to consider, corresponding to various categories of reactions in $\tilde{\scrR}$:
\medskip

\noindent
1. For  reactions of the form $s \to 0, s \in \scrM '$, we have $\sgn \tilde{\alpha}_{s \to 0} = \sgn \tilde{\sigma}_s$ so, for them, the conditions in Definition \ref{def:Concordant} are clearly satisfied.

\medskip
\noindent
2. For  reactions of the form $s \to 0, s \in \scrM$, there are two possibilities:
\medskip

(a) If $\alpha_{s \to 0} \neq 0$ then we have $\sgn \tilde{\alpha}_{s \to 0} = \sgn \alpha_{s \to 0} = \sgn \sigma_s = \sgn \tilde{\sigma}_s$ so Condition (i) in Definition \ref{def:Concordant} is satisfied.

\medskip
(b) If $\alpha_{s \to 0} = 0$ then $\sigma_s = 0$ and $\sgn \tilde{\alpha}_{s \to 0} = \sgn (\eta_{s \to 0}(s * \chi(\epsilon)) + \epsilon \chi_s(\epsilon)) = \sgn \chi_s(\epsilon) = \sgn(\sigma_s + \epsilon \chi_s(\epsilon)) = \sgn \tilde{\sigma}_s$. 

Therefore, either Condition (i) or Condition (ii) in Definition \ref{def:Concordant} is satisfied.
\medskip

\noindent
3. For $\rxn \in \scrR '$ such that $\alpha_{\rxn} \neq 0$ there must exist $s \in \supp y$ such that $0 \neq \sgn \tilde{\alpha}_{y \to y'} = \sgn \alpha_{y \to y'} = \sgn \sigma_s = \sgn \tilde{\sigma}_s$, so that Condition (i) in Definition \ref{def:Concordant} is satisfied.

\medskip 
\noindent
4. If  $\rxn \in \scrR '$ is such that $\alpha_{y \to y'} = 0$ and $\supp y \cap \supp \sigma = \emptyset$ then
\begin{equation}
\label{eq:y*chi}
\sgn \tilde{\alpha}_{y \to y'} = \sgn (y * \chi(\epsilon)),
\end{equation}
and, for all $s \in \supp y$,
\begin{equation}
\sgn \tilde{\sigma}_s = \sgn (\sigma_s + \epsilon \chi_s(\epsilon)) = \sgn \chi_s(\epsilon).
\end{equation}
There are two cases to consider:

\medskip
(a)\; If $y * \chi(\epsilon) \neq 0$ then, from \eqref{eq:y*chi}, we have the existence of $s' \in \supp y$ such that $\sgn \tilde{\alpha}_{\rxn} = \sgn y * \chi(\epsilon) = \sgn \chi_{s'}(\epsilon) = \sgn(\sigma_{s'} + \epsilon\chi_{s'}(\epsilon)) = \sgn \tilde{\sigma}_{s'}$. Thus, Condition (i) in Definition \ref{def:Concordant} is satisfied.

\medskip
(b)\; If $y*\chi(\epsilon) = 0$ then either $\supp y\  \cap \ \supp \chi(\epsilon) = \emptyset$, in which case \mbox{$\supp y \  \cap\  \supp \tilde{\sigma}  = \emptyset$}, or there exist $s, s' \in \supp y$ such that $\sgn \chi_s(\epsilon) = -\sgn \chi_{s'}(\epsilon)$, both nonzero, in which case we have $\sgn \tilde{\sigma}_s = - \sgn \tilde{\sigma}_{s'}$, both nonzero. In either case, then, Condition (ii) in Definition \ref{def:Concordant} is satisfied.

\medskip
\noindent
5. Again consider  $\rxn \in \scrR '$ such that $\alpha_{y \to y'} = 0$, but now suppose and there exist $s, s' \in \supp y$ such that $\sgn \sigma_s = - \sgn \sigma_{s'}$, both non-zero. Then from \eqref{eq:epsilonrequirement} we have $\sgn \tilde{\sigma}_s = -\sgn \tilde{\sigma}_{s'}$, both non-zero. 	Regardless of the sign of $\tilde{\alpha}_{\rxn}$, therefore, the conditions in Definition \ref{def:Concordant} are satisfied.

	In summary, then, the $\tilde{\sigma}, \tilde{\alpha}$ pair, constructed as indicated, ensures that  $\{\scrS, \tilde{\scrC}, \tilde{\scrR} \}$, the fully open extension of the (presumed discordant) normal network \rnet \, is not concordant. This, however, contradicts what has been supposed about the fully open extension. Therefore, \rnet is concordant.
\end{proof}

	We repeat the statement of Theorem \ref{thm:SmallToLargeStrongConcordanceThm} below:
\medskip

\noindent
\textbf{Theorem \ref{thm:SmallToLargeStrongConcordanceThm}} \emph{ A normal network is strongly concordant if its fully open extension is strongly concordant. In particular, a weakly reversible network is strongly concordant if its fully open extension is strongly concordant.}

\medskip

	Proof of the theorem is very similar to the proof of Theorem \ref{thm:SmallToLargeConcordanceThm}. The constructions of $\tilde{\alpha}$ and $\tilde{\sigma}$ are identical. The only substantive changes are in items 3 and 5:

\smallskip

\noindent 3. For $\rxn \in \scrR '$ such that $\alpha_{\rxn} \neq 0$ there must exist a species $s$ such that $0 \neq (\sgn \tilde{\alpha}_{y \to y'})(\sgn (y - y')_s) = (\sgn \alpha_{y \to y'})(\sgn (y - y')_s) = \sgn \sigma_s = \sgn \tilde{\sigma}_s$, so that Conditions (i) and (ii) in Definition \ref{DEF:StronglyConcordant} are satisfied. 
\smallskip

\noindent 5. Consider  $\rxn \in \scrR '$ such that $\alpha_{y \to y'} = 0$, but now suppose and there exist species $s, s'$ such that
\begin{equation}
 \sgn \sigma_{s} = \sgn (y - y')_s \neq 0 \nonumber
\end{equation} 
and
\begin{equation}
\sgn \sigma_{s'} = - \sgn (y - y')_{s'} \neq 0. \nonumber
\end{equation} 
From  \eqref{eq:epsilonrequirement} we also have $\sgn \tilde{\sigma}_s = \sgn \sigma_s$ and $\sgn \tilde{\sigma}_{s'} = \sgn \sigma_{s'}$. Regardless of the sign of $\tilde{\alpha}_{\rxn}$, then, the conditions in Definition \ref{DEF:StronglyConcordant} are satisfied. 

\section{Proofs of Theorems \ref{thm:EigenvalueThm} and \ref{thm:DiscordanceEigenvalueThm}}
\label{App:EigenvalueProofs}

	We repeat below the statement of Theorem \ref{thm:EigenvalueThm}:

\medskip
\noindent
\textbf{Theorem \ref{thm:EigenvalueThm}} \emph{Let \kinsys \  be a kinetic system with stoichiometric subspace $S$ and species formation rate function $f: \PbarS \to S$. Moreover, suppose that the kinetics is differentiably monotonic at $c^* \in \PS$. If the underlying network \rnet is concordant then the derivative $df(c^*): S \to S$ is nonsingular (whereupon 0 is not one of its eigenvalues). If the network's fully open extension is concordant then no real eigenvalue of  $df(c^*)$ is positive.}

\begin{proof} For the sake of brevity, for each $\rxn \in \scrR$ we denote by $p_{\rxn} \in \PbarS$ the gradient of $\scrK_{\rxn}(\cdot)$ at $c^*$. That is, for each $s \in \scrS$,
\begin{equation}
p^s _{\rxn} := \frac{\partial \scrK_{\rxn}}{\partial c_s}(c^*)
\end{equation}
For each  $\rxn \in \scrR$, note that $\supp p_{\rxn} = \supp y$. In terms of the notation we have now introduced,  $df(c^*)$ takes the following form: For each $\sigma \in S$,
\begin{equation}
 df(c^*)\sigma = \sum_{\rxn \in \scrR}(p_{\rxn} \cdot \sigma)(y' - y).
\end{equation}

	Suppose that \rnet \ is concordant. Suppose also, contrary to what is to be proved, that $df(c^*)$ is singular. Then there is a nonzero $\sigma \in S$ such that 
\begin{equation}
\sum_{\rxn \in \scrR}(p_{\rxn} \cdot \sigma)(y' - y) = 0.
\end{equation}
If we take $\alpha \in \RR$ to be defined by $\alpha_{\rxn} :=  p_{\rxn} \cdot \sigma$, then it is not difficult to see that, taken together, $\sigma$ and $\alpha$ satisfy the conditions of Definition \ref{def:Concordant}. Thus, \rnet is discordant, and we have a contradiction.

	Now suppose that the fully open extension of \rnet\  is concordant and that, contrary to what is to be proved,  $df(c^*)$  has a positive real eigenvalue. Then there is a nonzero  $\tilde{\sigma} \in S$ and a positive real number $\lambda$ such that  $df(c^*)\tilde{\sigma} = \lambda\tilde{\sigma}$, which can be written in the following way:
\begin{equation}
\label{eq:EigenvalEqn1}
\sum_{\rxn \in \scrR}(p_{\rxn} \cdot \tilde{\sigma})(y' - y) + \sum_{s \in \scrS}\lambda \tilde{\sigma}_s(-s) = 0.
\end{equation}
Note that $-s$ is the reaction vector for reaction $s \to 0$ in the fully open extension  $\{\scrS, \tilde{\scrC}, \tilde{\scrR} \}$ of the network \rnet. (Recall the notation given in \eqref{eq:FullyOpenExtNotation}.)

	It is possible that \scrR,  the reaction set for the original network, contains species-degradation reactions of the form $s \to 0$, where $s$ is a species. Let $\scrM \subset \scrS$ be the set of all species for which there are such reactions; that is, $\scrM = \{s \in \scrS: s \to 0 \in \scrR\}$. We denote by $\scrM '$ the complement of \scrM \;in \scrS. By $\scrR '$ we mean the set of all reactions in \scrR \;that are not \mbox{species-degradation} reactions; that is,  $\scrR '$ is the set of all members of \scrR \ that are not of the form $s \to 0, \; s \in \scrM $. Then \eqref{eq:EigenvalEqn1} takes the form
\begin{equation}
\label{eq:EigenvalEqn2}
\sum_{\rxn \in \scrR'}(p_{\rxn} \cdot \tilde{\sigma})(y' - y) + \sum_{s\, \in\, \scrM}(p_{s \to 0} + \lambda) \tilde{\sigma}_s(-s) + \sum_{s\, \in\, \scrM '}\lambda \tilde{\sigma}_s(-s) = 0.
\end{equation}

	Now we choose  $\tilde{\alpha} \in \mathbb{R}^{\tilde{\mathscr{R}}}$ to be defined by $\tilde{\alpha}_{\rxn} :=  p_{\rxn} \cdot \tilde{\sigma}$ for $\rxn \ \scrR'$, $\tilde{\alpha}_{s \to 0} := (p_{s \to 0} + \lambda) \tilde{\sigma}_s$ for $s \in \scrM$, and $\tilde{\alpha}_{s \to 0} :=  \lambda \tilde{\sigma}_s$ for $s \in \scrM '$. Thus, $\tilde{\alpha}$ satisfies
\begin{equation}
\sum_{\rxn \in \tilde{\scrR}}\tilde{\alpha}_{\rxn}(y' - y) = 0. \nonumber
\end{equation} 
It is easy to see that, taken together, $\tilde{\sigma}$ and $\tilde{\alpha}$ satisfy the conditions of Definition \ref{def:Concordant}. Thus, the fully open extension  $\{\scrS, \tilde{\scrC}, \tilde{\scrR} \}$ is discordant, and we again have a contradiction. It follows, then, that no real eigenvalue of  $df(c^*)$ is positive.
\end{proof}
\medskip

	We repeat below the statement of Theorem \ref{thm:DiscordanceEigenvalueThm}:

\bigskip
\noindent
\textbf{Theorem \ref{thm:DiscordanceEigenvalueThm}}  \emph{ Consider a reaction network with positively dependent reaction vectors. If the network is discordant, then there exists for it a differentiably monotonic kinetics such that the resulting kinetic system admits a positive \emph{degenerate} equilibrium. If the network's fully open extension is discordant then there exists for the original network a 
differentiably monotonic kinetics such that the resulting kinetic system 
admits an unstable positive equilibrium --- in fact, a positive equilibrium associated with a positive real eigenvalue.}

\begin{proof}
Throughout this proof we consider a reaction network \rnet \ with stoichiometric subspace $S$, and we suppose that the reaction vectors are positively dependent. Thus, we have positive numbers $\{\kappa_{\rxn}\}_{\rxn \in \scrR}$ that satisfy
\begin{equation}
\label{eq:KappaEquation}
\sum_{\rxn \in \scrR}\kappa_{\rxn}(y' - y) = 0.
\end{equation}

	We divide the proof into two parts:
\medskip

(i) First we suppose that the network \rnet \ is itself discordant. Then there exists  a pair, $\alpha \in \RR$ and nonzero $\sigma \in S$, that together satisfy conditions (i) and (ii) in Definition \ref{def:Concordant}. From those conditions it is not difficult to see that, for each reaction $\rxn \in \scrR$ there is a vector $p_{\rxn} \in \PbarS$ such that
\begin{equation}
\supp p_{\rxn} =  \supp y \quad \mathrm{and} \quad \alpha_{\rxn} = p_{\rxn} \cdot \sigma.
\end{equation}
\noindent Thus we can write
\begin{equation}
\label{eq:FirstpEqn}
\sum_{\rxn \in \scrR}(p_{\rxn} \cdot \sigma)(y' - y) = 0.
\end{equation}
Let $c^*$ be any member of \PS, and for each $\rxn \in \scrR$ let $q_{\rxn} \in \PbarS$ be defined by
\begin{equation}
\label{eq:FirstqDef}
q_{\rxn} := \frac{c^* \circ  p_{\rxn}}{\kappa_{\rxn}}.
\end{equation}
Furthermore let $k \in \PR$ be defined by the requirement that
\begin{equation}
\label{eq:FirstkDef}
k_{\rxn}(c^*)^{q_{\rxn}} = \kappa_{\rxn}, \quad \forall \rxn \in \scrR.
\end{equation}
We construct a differentiably monotonic kinetics for the network as follows: For each $\rxn \in \scrR$
\begin{equation}
\scrK_{\rxn}(c) := k_{\rxn}c^{q_{\rxn}}, \quad \forall c \in \PbarS.
\end{equation}
With this kinetics the species-formation-rate function takes the form
\begin{equation}
f(c) := \sum_{\rxn \in \scrR}k_{\rxn}c^{q_{\rxn}}(y' - y).
\end{equation}
	From \eqref{eq:KappaEquation} and \eqref{eq:FirstkDef} it follows that, for the kinetic system \kinsys, $c^*$ is a positive equilibrium. Taken with \eqref{eq:FirstkDef} and \eqref{eq:FirstqDef}, some computation gives
\begin{equation}
df(c^*)\sigma = \sum_{\rxn \in \scrR}(p_{\rxn} \cdot \sigma)(y' - y).
\end{equation}
Because $\sigma$ is not zero, it follows from \eqref{eq:FirstpEqn} that $df(c^*)$ is singular, whereupon $c^*$ is a degenerate positive equilibrium.\medskip

(ii) Next we suppose instead that the fully open extension of \rnet \  is discordant. Then there are numbers $\{\alpha_{\rxn}\}_{\rxn \in \scrR}$ and $\{\alpha_{s \to 0}\}_{s \in \scrS}$ satisfying
\begin{equation}
\label{eq:ExtendedAlphEqn}
\sum_{\rxn \in \scrR}\alpha_{\rxn}(y' - y) + \sum_{s \in \scrS}\alpha_{s \to 0}(-s) = 0
\end{equation}
and also a nonzero $\sigma$ such that conditions (i) and (ii) of Definition \ref{def:Concordant} are satisfied\footnote{We have not precluded the possibility that a degradation reaction, say $\bar{s} \to 0$, might be a member of \scrR, the set of reactions in the original network.  In this case the term $\alpha_{\bar{s} \to 0}(-\bar{s})$ would appear twice in \eqref{eq:ExtendedAlphEqn}, once in the first sum and once in the second. Nevertheless, it is easy to confirm that the sentence remains true as written.}. Again we can choose for each reaction $\rxn \in \scrR$ a vector $p_{\rxn} \in \PbarS$ such that
\begin{equation}
\supp p_{\rxn} =  \supp y \quad \mathrm{and} \quad \alpha_{\rxn} = p_{\rxn} \cdot \sigma.
\end{equation}
Moreover, we can choose positive numbers $\{m_s\}_{s \in \scrS}$ such that
\begin{equation}
\alpha_{s \to 0} = m_s\sigma_s.
\end{equation}
Thus, \eqref{eq:ExtendedAlphEqn} can be rewritten

\begin{equation}
\label{eq:ExtendedSigmaEqn}
\sum_{\rxn \in \scrR} (p_{\rxn} \cdot \sigma)(y' - y) = \sum_{s \in \scrS}m_s \sigma_s(s).
\end{equation}

	Now let 
\begin{equation}
\bar{\sigma} := \sum_{s \in \scrS}m_s\sigma_s(s).
\end{equation}
From \eqref{eq:ExtendedSigmaEqn} if follows that $\bar{\sigma}$ is a member of $S$, the stoichiometric subspace for the original network \rnet. Moreover, because $\sigma \neq 0$, it is easy to see that $\sigma \cdot \bar{\sigma} > 0$, whereupon $\bar{\sigma} \neq 0$.

	If, for each reaction $\rxn \in \scrR$, we let $\bar{p}_{\rxn} \in \PbarS$ be defined by
\begin{equation}
\bar{p}^s_{\rxn} := \frac{p^s_{\rxn}}{m_s}, \quad \forall s \in \scrS,
\end{equation}
then \eqref{eq:ExtendedSigmaEqn} can be rewritten as
\begin{equation}
\label{eq:SecondExtendedSigmaEqn}
\sum_{\rxn \in \scrR} (\bar{p}_{\rxn} \cdot \bar{\sigma})(y' - y) = \bar{\sigma}.
\end{equation}

	Hereafter, this part of the proof will very much resemble part (i):
Let $c^*$ be any member of \PS, and for each $\rxn \in \scrR$\  let $q_{\rxn} \in \PbarS$ be defined by
\begin{equation}
q_{\rxn} := \frac{c^* \circ  \bar{p}_{\rxn}}{\kappa_{\rxn}}.
\end{equation}
As before, we let $k \in \PR$ be defined by the requirement that
\begin{equation}
k_{\rxn}(c^*)^{q_{\rxn}} = \kappa_{\rxn}, \quad \forall \rxn \in \scrR,
\end{equation}and also as before we construct a differentiably monotonic kinetics for the network \rnet \ in the following way: For each $\rxn \in \scrR$
\begin{equation}
\scrK_{\rxn}(c) := k_{\rxn}c^{q_{\rxn}}, \quad \forall c \in \PbarS.
\end{equation}
Again, the species-formation-rate function takes the form
\begin{equation}
f(c) := \sum_{\rxn \in \scrR}k_{\rxn}c^{q_{\rxn}}(y' - y).
\end{equation}
Note that $c^*$ is an equilibrium of the kinetic systems \kinsys, and some computation gives
\begin{equation}
df(c^*)\bar{\sigma} = \bar{\sigma}.
\end{equation}
Thus, $\bar{\sigma}$ is an eigenvector of $df(c^*)$ corresponding to the positive real eigenvalue $1$. Thus, the equilibrium $c^*$ is unstable.
\end{proof}

\bibliographystyle{amsplain}
\bibliography{MonoLibrary}

\end{document}